\newtheorem{assumption}{Assumption}
\newtheorem{definition}{Definition}
\newtheorem{lemma}{Lemma}
\newtheorem{theorem}{Theorem}
\newtheorem{proposition}{Proposition}
\newtheorem{problem}{Problem}
\newtheorem{remark}{Remark}
\newtheorem{example}{Example}
\DeclareMathOperator*{\argmax}{arg\,max}
\title{Probability-Raising Causality\\ for Uncertain Parametric Markov Decision Processes with PAC Guarantees}
\author[1]{\href{mailto:<jj@example.edu>?Subject=Your UAI 2025 paper}{Ryohei~Oura}{}}
\author[2]{Yuji~Ito}
\affil[1]{%
    Frontier Research Center, Toyota Motor Corporation\\
    Japan
}
\affil[2]{%
    Toyota Central R\&D Labs., Inc.\\
    Japan
}
\begin{document}
\maketitle
\begin{abstract}
  Recent decision-making systems are increasingly complicated, making it crucial to verify and understand their behavior for a given specification. A promising approach is to comprehensively explain undesired behavior in the systems modeled by Markov decision processes (MDPs) through formal verification and causal reasoning. However, the reliable explanation using model-based probabilistic causal analysis has not been explored when the MDP's transition probabilities are uncertain.
  This paper proposes a method to identify potential causes of undesired behaviors in an uncertain parametric MDP (upMDP) using parameter sampling, model checking, and a set covering for the samples. A cause is defined as a subset of states based on a \textit{probability-raising} principle.
  We show that the probability of each identified subset being a cause exceeds a specified threshold. Further, a lower bound of the probability that the undesired paths visit the subsets is maximized as much as possible while satisfying a nonredundancy condition.
  While computing these probabilities is complicated, this study derives probabilistically approximately correct lower bounds of both probabilities by the sampling. We demonstrate the effectiveness of the proposed method through a path-planning scenario.
\end{abstract}

\section{Introduction}\label{sec:intro}
As modern decision-making systems have become more complex and exhibit various forms of stochasticity in many applications such as traffic networks and robotic delivery systems \citep{dudek1996taxonomy, dorri2018multi}, it is increasingly important to understand their intricate behavior to answer \ \textit{why} they violate or satisfy a given specification. 
For stochastic systems like Markov decision processes (MDPs), model checking-based formal verification has proven effective in determining whether these systems satisfy the specifications \citep{baier2008principles}. However, model checking typically provides only basic information: a satisfaction guarantee or a counterexample that violates the specification, rather than an in-depth explanation of the system's behavior.

As a promising approach to explain such behaviors, cause-effect reasoning has attracted a lot of attention in both machine learning and formal method communities \citep{baier2021verification, oberst2019counterfactual, baier2022operational, coenen2022temporal, finkbeiner2023counterfactuals, baier2024foundations, finkbeiner2024synthesis, dimitrova2020probabilistic, kazemi2022causal, baier2024backward, oberst2019counterfactual, tsirtsis2021counterfactual, triantafyllou2022actual, baier2021game} (see Section \ref{RelatedWorks:CausalAnalysis} for more details). Recently, probabilistic causality for MDPs based on a \textit{probability-raising} (PR) principle has been advocated \citep{baier2021probabilistic, baier2022probability, baier2024foundations}. PR causes are defined as a subset of states of an MDP such that passing through the states increases the likelihood of undesired behavior compared to bypassing them, regardless of action selections. The authors have proposed a model checking-based algorithm to intensively identify ones that satisfy a path coverage criterion, offering an explainability advantage in verifying MDPs over other causal reasoning studies. 

While this PR-causality framework has been established well for exact MDPs, reliably computing PR causes becomes a challenge when the transition probabilities themselves are epistemically uncertain. Although the algorithms in \citep{baier2022probability, baier2024foundations} have relied on fully known MDPs, this assumption is often unrealistic in complicated systems. Common approaches to handle the epistemically uncertain parameters of MDPs, such as transition probabilities, are using intervals \citep{givan2000bounded, delahaye2011decision, puggelli2013polynomial, jackson2020safety} and defining a class of uncertain (parametric) MDPs (upMDPs) \citep{nilim2005robust, wiesemann2013robust, wolff2012robust, rickard2024learning}. 
Furthermore, \textit{scenario optimization program} (SOP) \citep{campi2008exact, campi2011sampling} has emerged as a powerful approach for handling such parameter uncertainty via sampling. 

Although SOP-based formal verification and synthesis in upMDPs well-investigated in \citep{cubuktepe2020scenario, badings2022scenario, nilim2005robust, wiesemann2013robust, wolff2012robust, rickard2024learning} (see Section \ref{RelatedWorks:VerificationSynthesis} for details), the following non-trivial question remains in the causal reasoning context: Identifying \textit{potential} PR causes \textit{exhaustively} but \textit{nonredundantly}. 
\cite{cubuktepe2020scenario, badings2022scenario}, which are most related to this paper, have provided probabilistically approximately correct (PAC) lower bounds for satisfaction probabilities by applying model checking to sampled MDPs and SOP formulations, but they do not offer causative states.


In this paper, based on parameter sampling, model checking, and a set covering strategy for the samples, we propose a method to exhaustively and nonredundantly identify subsets of states that act as potential PR causes in a upMDP.
We show that the resulting collection of subsets satisfies the following three properties in a PAC manner: (i) Potential cause: Each identified subset of states becomes a PR cause with at least a specified probability. (ii) Exhaustiveness: A lower bound of the probability that the undesired paths traverse the collection as much as possible is maximized.
(iii) Nonredundancy: Removing a subset of states from the collection decreases the lower bound. 

Our contributions are summarized as follows:
\begin{enumerate}
    \item We formalize the problem of exhaustively and nonredundantly identifying PR causes on a upMDP by introducing two probabilities that capture (a) the likelihood of a subset of states being a PR cause and (b) the likelihood that undesired paths maximally pass through such subsets. This addresses the limitation in \citep{baier2022probability} when applying to uncertain models.
    \item We provide lower bounds for the introduced probabilities over the parameter space based on a state set inclusion condition and a path condition. (Theorems \ref{thm_certif} and \ref{thm_optimality}). This result enables the computation of the PAC-guaranteed probabilities using sampled parameters.
    \item We develop a set covering-based method combined with model checking-based PR cause synthesis to compute a solution to the problem formalized in 1 with PAC-guarantees, and its correctness is established in Theorem \ref{thm_solution_certif}. This result paves the way towards compact and verifiable explanations under aleatoric and epistemic uncertainties.
    \item We demonstrate the effectiveness of the proposed method with a path-planning scenario. The proposed method effectively finds exhaustive and nonredundant key waypoints that substantially influence undesired outcomes. However, two na\"ive approaches to be compared do not. This case study suggests a potential utility in control applications with epistemic and aleatoric uncertainties.
\end{enumerate}


\section{Preliminaries}
\subsection{Uncertain Parametric Markov Decision Processes}
We first consider a parametric Markov decision process (pMDP). Then, we extend it toward an uncertain parametric MDP (upMDP) with probability distributions over the parameter space of the pMDP.

A pMDP is a tuple $\mathcal{M}_u = (S, A, s_0, u, P)$, where $S$ is a finite set of states; $A$ is a finite set of actions; $s_0 \in S$ is an initial state; $u = [p_0, \ldots, p_{m-1}]^\top$ is a real-valued parameter vector, where $u$ takes a value in a subset of $\mathbb{R}^m$ denoted by $\mathcal{V} \subseteq \mathbb{R}^m$; and $P: S \times A \times S \times \mathcal{V} \to [0,1]$ is a transition function, where $P(s,a,s',u)$ denotes the probability of transition from $s$ to $s'$ with the action $a$. For any state $s \in S$, we denote the set of enabled actions at $s$ by $\mathcal{A}(s) = \{ a \in A \;|\; \exists u, \exists s' \mbox{ s.t. } P(s,a,s',u) > 0 \}$. We say that $s \in S$ is terminal if $\mathcal{A}(s) = \emptyset$. We denote the set of terminal states by $E$ and assume $E \neq \emptyset$. A path is a sequence of states and actions $\rho = s_0 a_0 s_1 a_1 s_2 \ldots $ such that $P(s_i, a_i, s_{i+1}, u) > 0$ for some $u \in \mathcal{V}$ and all indices $i \geq 0$. 
A policy $\pi : \mathrm{FinPath} \times A \to [0,1]$ is a function that assigns a probability to a pair of a finite path and an enabled action at the last state, where $\mathrm{FinPath}$ is the set of all finite paths.
A path reaching $E$ is interpreted as an undesired one.

A upMDP is a tuple $\mathcal{M}_\mathbb{P} = (\mathcal{M}_u, \mathbb{P})$, where $\mathcal{M}_u$ is a pMDP and $\mathbb{P}$ is a probability distribution over the parameter space $\mathcal{V}$ of $\mathcal{M}$. Intuitively, a upMDP is a pMDP combined with an associated probability distribution over possible parameter values. So, a parameter $u \in \mathcal{V}$ generated from $\mathbb{P}$ yields a concrete MDP $\mathcal{M}_u$ if the instantiated transition function $P_u = P(\cdot, \cdot, \cdot, u)$ is well-defined, that is, $\sum_{s' \in S} P_u(s,a,s') = 1$ for any $s \in S$ and any $a \in A$ if $a \in \mathcal{A}(s)$, otherwise $0$. We assume that all parameter values in $\mathcal{V}$ yield well-defined MDPs. 
For any MDP $\mathcal{M}_u$, any state $s \in S$, and any policy $\pi$, we define the probability measure $\mathrm{Pr}^\pi_{\mathcal{M}_u, s}$ on measurable sets of paths starting from $s$ under $\pi$ in a standard way \citep{baier2008principles}. For simplicity, we abbreviate $\mathrm{Pr}^\pi_{\mathcal{M}_u, s_0}$ as $\mathrm{Pr}^\pi_{\mathcal{M}_u}$.

\begin{example}
\label{example1}
We consider a upMDP depicted in Fig. \ref{example3:upMDP} with the set of parameter variables $u = [p, q]^\top$, the set of actions $A=\{a, b\}$, and the terminal state $E = \{s_5\}$. The enabled action is only $a$ at states other than $s_0$. Their transition probabilities are represented using parameters $p$ and $q$. $p$ and $q$ are equal to $1/2$ with probability $1/10$ and, with probability $9/10$, $p$ and $q$ follow the uniform distributions over $[0.11, 0.51]$ and $[0.3, 0.7]$, respectively.
\begin{figure}[htbp]
\centering
    \includegraphics[width=0.7\linewidth]{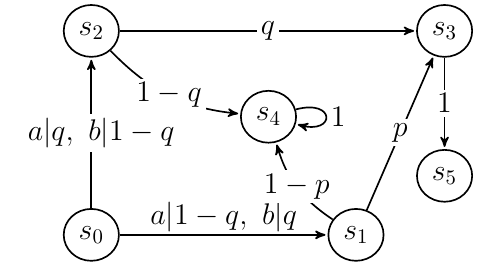}
\caption{The upMDP with the parameter variables $V=\{p, q\}$, the set of actions $A=\{ a, b\}$, and the terminal state $E = \{ s_5 \}$. The enabled action is only $a$ at states other than $s_0$ and thus we omit it.}
\label{example3:upMDP}
\end{figure}
\end{example}

\subsection{Probability-Raising Causality}
For convenience, we use linear temporal logic (LTL)-like notations such as $\neg$ (negation), $\land$ (and),  $\Diamond$ (eventually), and $\mathrm{U}$ (until) to represent some path properties. For any $X_0, X_1 \subseteq S$, the formula $X_0 \mathrm{U} X_1$ is satisfied by a (state) path $\rho = s_0 s_1 \ldots$ such that there exists $j \geq 0$ such that, for any $i < j$, $s_i \in X_0$ and $s_j \in X_1$ hold, and we denote $\Diamond X_1 = S \mathrm{U} X_1$. Furthermore, $\neg \Diamond X_1$ is satisfied by a path $\rho$ such that all states in $\rho$ are not contained by $X_1$.
For simplicity, we abbreviate $\{s_0\} \mathrm{U} \{ s_1\}$ and $\Diamond \{s_2\}$ as $s_0 \mathrm{U} s_1$ and $\Diamond s_2$, respectively, for any $s_0, s_1, $ and $s_2 \in S$.

We consider a cause-effect relation over an MDP based on the probability-raising principle introduced in \citep{baier2022probability, baier2024foundations}.
\begin{definition}[SPR cause \citep{baier2022probability}]
\label{def:SPR_cause}
    For any MDP $\mathcal{M}_u$ with a parameter $u \in \mathcal{V}$, the set of terminal states $E \subseteq S$, and a subset $C \subseteq S \setminus E$, we say that $C$ is a strict probability-raising (SPR) cause for $E$ on $\mathcal{M}_u$ if and only if the following conditions (M) and (S) hold:
    \begin{description}
        \item[(M)] $\forall c \in C, \exists \pi$ such that $\mathrm{Pr}_{\mathcal{M}_u}^\pi(\neg C \mathrm{U} c) > 0$.
        \item[(S)] $\forall c \in C, \forall \pi \mbox{ where } \mathrm{Pr}_{\mathcal{M}_u}^\pi (\neg C \mathrm{U} c) > 0$,
        \begin{align}
        \mathrm{Pr}_{\mathcal{M}_u}^\pi(\Diamond E \;|\; \neg C \mathrm{U} c) > \mathrm{Pr}_{\mathcal{M}_u}^\pi(\Diamond E).
        \end{align}
    \end{description}
\end{definition}
Intuitively, $\Diamond E$ means that a path eventually reaches a state in $E$, and $\neg C \mathrm{U} c$ denotes that a path eventually reaches $c$ avoiding others in $C$.
SPR cause is a minimal subset $C$ of states and is interpreted as waypoints that raise the probability of reaching terminal states $E$. This probability-raising condition is defined by $\textbf{(S)}$, implying that visiting any state in $C$ increases the probability of reaching $E$. The minimality of the subset is defined by $\textbf{(M)}$, removing states in $C$ that can only be reached by passing through the others.
We denote that $C$ satisfies the above conditions \textbf{(M)} and \textbf{(S)} on an MDP $\mathcal{M}_u$ by $(\mathcal{M}_u, C) \models \textbf{(M)}$ and $(\mathcal{M}_u, C) \models \textbf{(S)}$, respectively.

In Example \ref{example1}, the set of states $C=\{s_2, s_3\}$ is an SPR cause on the MDP $\mathcal{M}_{u}$ with $p=0.3$ and $q=0.6$ for the following reasons: First, there exist two paths $s_0 a s_2$ and $s_0 b s_1 a s_3$. Hence, $C$ satisfies the condition \textbf{(M)} on $\mathcal{M}_u$. Second, the minimal probabilities over any policies $\mathrm{Pr}^\mathrm{min}_{\mathcal{M}_u}(\Diamond s_5 \;|\; \neg C \mathrm{U} s_2) = 0.6$ and $\mathrm{Pr}^\mathrm{min}_{\mathcal{M}_u}(\Diamond s_5 \;|\; \neg C \mathrm{U} s_3) = 1$ are greater than the maximal probability $\mathrm{Pr}^\mathrm{max}_{\mathcal{M}_u}(\Diamond s_3) = 0.48$. Thus, $C$ satisfies \textbf{(S)} on $\mathcal{M}$.

In the following, we do not mention the set of terminal states $E$ for any upMDP when it is clear from the context.
\begin{definition}[SPR-cause probability]
    For any upMDP $\mathcal{M}_\mathbb{P}$ and any $C \subset S$, we define the probability that $C$ is an SPR cause on $\mathcal{M}_\mathbb{P}$ as:
\begin{align}
\label{cause_prob}
    F(\mathcal{M}_\mathbb{P}, C) = \int_{u \in \mathcal{V}} I(u, C) \mathrm{d} \mathbb{P}(u),
\end{align}
where $I: \mathcal{V} \times 2^S \to \{ 0, 1 \}$ is an indicator function such that, for any $u \in \mathcal{V}$ and any $C \subset S$, $I(u, C) = 1$ if $C$ is an SPR cause on $\mathcal{M}_u$ and otherwise $0$. We call the probability defined by (\ref{cause_prob}) the \textit{SPR-cause probability} for $C$.
\end{definition}

To evaluate how often an SPR cause covers paths reaching $E$ on the upMDP, we consider the probability of paths passing through an SPR cause $C$ conditioned on reaching $E$ as a probabilistic coverage of SPR causes.
\begin{definition}[Recall]
    For any MDP $\mathcal{M}_u$, any policy $\pi$, and any $C \subseteq S$, we introduce the conditional probability called recall for $C$ under $\pi$ as
\begin{align}
\label{def:recall}
    \mathrm{Pr}_{\mathcal{M}_u}^\pi (\Diamond C \;|\; \Diamond E),
\end{align}
and $\mathrm{Pr}^\pi_{\mathcal{M}_u}(\Diamond C \;|\; \Diamond E) = 0$ if $\mathrm{Pr}^\pi_{\mathcal{M}_u}(\Diamond E) = 0$.
\end{definition}
\begin{definition}[Recall optimality]
\label{def:recall_optimality}
    For any MDP $\mathcal{M}_u$, any $C \subseteq S$, and any $\mathcal{S} \subseteq 2^S$, we say that $C$ is recall-optimal on $\mathcal{M}_u$ over $\mathcal{S}$ if and only if $C$ is an SPR cause and satisfies
    \begin{align}
    \label{PR-recall_optimal}
        & \forall C' \in \mathcal{S}, C' \mbox{ is an SPR cause on $\mathcal{M}_u$} \nonumber \\
        & \implies \min_{\pi} \mathrm{Pr}_{\mathcal{M}_u}^\pi (\Diamond C | \Diamond E) - \mathrm{Pr}_{\mathcal{M}_u}^\pi (\Diamond C' | \Diamond E) \!\geq\! 0,
    \end{align}
\end{definition}
Intuitively, when the conditional probability of passing through $C$ before making undesired paths is maximal among all SPR causes, $C$ is recall optimal.
\begin{definition}[Recall-optimal probability]
\label{def:recall_optimal_probability}
    For any MDP $\mathcal{M}_u$ and any $\mathcal{C}, \mathcal{S} \in 2^S$, we define the probability that $\mathcal{C} \subseteq 2^S$ covers recall-optimal SPR causes over $\mathcal{S} \subseteq 2^S$ under $\mathcal{M}_\mathbb{P}$ as:
    \begin{align}
    \label{recall_opt_prob}
        R(\mathcal{M}_\mathbb{P}, \mathcal{C}, \mathcal{S}) = \int_{u \in \mathcal{V}} I'(u, \mathcal{C}, \mathcal{S}) \mathrm{d} \mathbb{P}(u),
    \end{align}
    where $I': \mathcal{V} \times 2^{2^S} \times 2^{2^S} \to \{ 0, 1 \}$ is an indicator function such that, for any $u \in \mathcal{V}$, any $\mathcal{C} \subseteq 2^S$, and any $\mathcal{S} \subseteq 2^S$, $I'(u, \mathcal{C}, \mathcal{S}) = 1$ if there exists $C \in \mathcal{C}$ such that $C$ is recall-optimal over $\mathcal{S}$ under $\mathcal{M}_u$ and otherwise $0$. We call the probability defined by (\ref{recall_opt_prob}) the \textit{recall-optimal probability} for $\mathcal{C}$ over $\mathcal{S}$.
\end{definition}
Recall optimal probability is interpreted as the probability over the parameter space of a upMDP that a given collection of subsets of states contains a recall optimal SPR cause.

Moreover, we define a collection of subsets of states whose SPR-cause probabilities exceed $\delta > 0$ as 
\begin{align}
\label{S_delta}
    \mathcal{S}_\delta = \{ C \subset S \;|\; F(\mathcal{M}_\mathbb{P}, C) > \delta \}.
\end{align}

\section{Problem Setting}

Our goal is to find a nonredundant but exhaustive collection of potential SPR causes $\mathcal{C}^* \subset 2^S$ on the given upMDP. To reflect this intuition, we consider the following problem.
\begin{problem}
\label{problem:exact}
    For any $\delta$, compute $\mathcal{C}^* \subset 2^S$ that satisfies the following two conditions:
    \begin{description}
        \item[(C1)] $\mathcal{C}^* \in \argmax_{\mathcal{C} \subseteq \mathcal{S}_\delta} R(\mathcal{M}_\mathbb{P}, \mathcal{C}, \mathcal{S}_\delta)$, 
        \item[(C2)] $\forall \mathcal{C}' \subset \mathcal{C}^*, R(\mathcal{M}_\mathbb{P}, \mathcal{C}^*, \mathcal{S}_\delta) > R(\mathcal{M}_\mathbb{P}, \mathcal{C}', \mathcal{S}_\delta)$,
    \end{description}
    where $\mathcal{S}_\delta$ is defined by (\ref{S_delta}).
\end{problem}
In Problem \ref{problem:exact}, \textbf{(C1)} and \textbf{(C2)} state an exhaustiveness and a nonredundancy conditions for $\mathcal{C}^*$, respectively, for the recall-optimal probability.
Intuitively, \textbf{(C1)} requires that $\mathcal{C}^*$ maximally covers the paths reaching some terminal states in a probabilistic sense while the SPR-cause probability for each $C \in \mathcal{C}^*$ exceeds a given threshold $\delta$. The maximality of recall-optimal probability corresponds to exhaustiveness, and the threshold for the SPR-cause probability reflects our intuition for potential causes. Plus, \textbf{(C2)} requires that the collection $\mathcal{C}^*$ is nonredundant in the sense that removing any member of $\mathcal{C}^*$ reduces the recall-optimal probability.
We can analyticaly compute $F$ defined by (\ref{cause_prob}) and $R$ defined by (\ref{recall_opt_prob}) if $\mathcal{V}$ is finite and $\mathbb{P}$ is known.
However, in practice, $\mathcal{V}$ can be infinite, and $\mathbb{P}$ can be unknown. Hence, it is generally undecidable to compute the probabilities \citep{arming2018parameter}. Thus, we relax Problem \ref{problem:exact} by a probably approximately correct (PAC) style formulation. We consider the following two problems, assuming we can sample parameters identically and independently distributed from $\mathbb{P}$.
\begin{problem}
\label{problem:PACbound}
    For any $C \subset S$, any $\mathcal{C}, \mathcal{S} \subset 2^S$, any $N > 0$, and any $\beta \in (0,1)$, compute $\eta_N(C, \beta)$ and $\zeta_N(\mathcal{C}, \mathcal{S}, \beta)$  that satisfy the following conditions:
    \begin{align}
    \label{problem:eta}
        &\mathbb{P}^N( F(\mathcal{M}_\mathbb{P}, C) \geq \eta_N(C, \beta) ) \geq \beta, \\
    \label{problem:zeta}
        &\mathbb{P}^N( R(\mathcal{M}_\mathbb{P}, \mathcal{C}, \mathcal{S}) \geq \zeta_N(\mathcal{C}, \mathcal{S}, \beta) ) \geq \beta,
    \end{align}
    where $\mathbb{P}^N$ is the $N$ product measure of $\mathbb{P}$.
\end{problem}
\begin{problem}
\label{problem:PACoptimal_diff}
    For any $N > 0$ and any $\delta, \beta \in (0,1)$, compute $\mathcal{C}^* \subset 2^S$ that satisfies the following two conditions:
    \begin{description}
        \item[(PC1)] $\mathcal{C}^* \in \argmax_{\mathcal{C} \subseteq \mathcal{S}_{N, \delta, \beta}} \zeta_N(\mathcal{C}, \mathcal{S}_{N, \delta, \beta}, \beta)$, 
        \item[(PC2)] $\forall \mathcal{C}' \subset \mathcal{C}^*$, \\ $\mathbb{P}^N(  R(\mathcal{M}_\mathbb{P}, \mathcal{C}^*, \mathcal{S}_{N, \delta, \beta}) > R(\mathcal{M}_\mathbb{P}, \mathcal{C}', \mathcal{S}_{N, \delta, \beta}) ) > \beta$,
    \end{description}
    where both $\eta_N$ and $\zeta_N$ are the solutions to Problem \ref{problem:PACbound}, and $\mathcal{S}_{N, \delta, \beta}$ is defined as \begin{align}
    \label{S_delta_beta}
        \mathcal{S}_{N, \delta, \beta} = \{ C \subset 2^S \;|\; \eta_N(C, \beta) > \delta \}.
    \end{align}
\end{problem}
In Problem \ref{problem:PACbound}, we compute probabilistically guaranteed lower bounds for the SPR-cause probability and recall-optimal probability. Problem \ref{problem:PACoptimal_diff} relaxes Problem \ref{problem:exact} based on the PAC bounds.

\section{Probability-Raising Causal Identification on Uncertain Parametric MDPs}

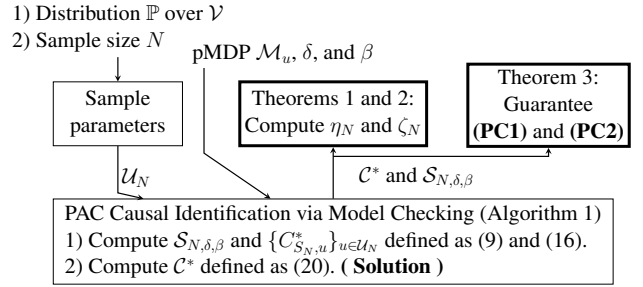
\begin{figure}[htbp]
\centering
\resizebox{\linewidth}{!}{\begin{tikzpicture}[node distance=1cm]
    \node[draw, rectangle, minimum width=13cm, minimum height=2cm, align=left, font=\Large] (A) at (0, 0) {
        PAC Causal Identification via Model Checking (Algorithm \ref{Alg_PG_SPREst}) \\
        1) Compute $\mathcal{S}_{N, \delta, \beta}$ and $\{ C^*_{S_N, u} \}_{u \in \mathcal{U}_N}$ defined as (\ref{S_delta_beta}) and (\ref{canonical_cause}).\\
        2) Compute $\mathcal{C}^*$ defined as (\ref{C_star}). \textbf{( Solution )}
    };

    \node[draw=none, align=left, font=\Large] (E) at (-5, 5) {
        1) Distribution $\mathbb{P}$ over $\mathcal{V}$\\
        2) Sample size $N$
    };

    \node[draw, rectangle, minimum width=3cm, minimum height=1.5cm, align=center, font=\Large] (B) at (-5, 3) {Sample\\ parameters};
    \node[draw, rectangle, line width=0.7mm, minimum width=4cm, minimum height=1.5cm, align=center, font=\Large] (C) at (0, 3) {Theorems 1 and 2:\\ Compute $\eta_N$ and $\zeta_N$};
    \node[draw, rectangle, line width=0.7mm, minimum width=3cm, minimum height=1.5cm, align=center, font=\Large] (D) at (5.0, 3.2) {Theorem 3:\\ Guarantee\\ \textbf{(PC1)} and \textbf{(PC2)}};

    \node[draw=none] (Empty) at (-3, 4.2) {};
    \node[draw=none, align=left, font=\Large] (F) at (-1.1, 4.4) {
        pMDP $\mathcal{M}_u$, $\delta$, and $\beta$
    };

    \node[draw=none, align=left, font=\Large] (G) at (2.0, 1.5) {
        $\mathcal{C}^*$ and $\mathcal{S}_{N, \delta, \beta}$
    };

    \node[draw=none, align=left, font=\Large] (H) at (-4.5, 1.5) {
        $\mathcal{U}_N$
    };

    \draw[->, >=stealth, thick] (E) -- (B);
    \draw[->, >=stealth, thick] (B.south) -- ++(0, -1.1) -- (A);
    \draw[->, >=stealth, thick] (A) -- ++(0, 2) coordinate (mid) -- (C);
    \draw[->, >=stealth, thick] (mid) -| (D);

    \draw[->, >=stealth, thick] (Empty.south) -- ++(0,-2) -- (A);

\end{tikzpicture}}
\caption{Overview of our proposed method. We first sample the parameters of the upMDP. Then, we compute the solution to Problem \ref{problem:PACoptimal_diff} from the sampled MDPs based on model checking. We provide the probabilistic guarantees for the obtained result using Theorems \ref{thm_certif}, \ref{thm_optimality}, and \ref{thm_solution_certif}.}
\label{fig:overview}
\end{figure}

In this section, we describe how we obtain the solutions to Problems \ref{problem:PACbound} and \ref{problem:PACoptimal_diff}.
In Section \ref{subsection:ProbGuarantee}, we provide the solutions $\eta_N$ and $\zeta_N$ to Problem \ref{problem:PACbound} using parameter sampling.
In Section \ref{subsection:Estimation}, we construct a solution to Problem \ref{problem:PACoptimal_diff} based on a set covering for the sampled parameters.
In Section \ref{subsection:Algorithm}, we provide a model checking-based algorithm to compute the solution to Problem \ref{problem:PACoptimal_diff}.
We show the overview of our approach in Fig.\ \ref{fig:overview}. 
In the following, we denote by $\mathcal{U}_N = (u_i)_{i=1}^N$ the sequence of $N$ parameter vectors sampled from the parameter space $\mathcal{V}$ of the upMDP $\mathcal{M}_\mathbb{P}$.

\subsection{PAC Guarantees for Probability-Raising Causes on Uncertain Parametric MDPs}
\label{subsection:ProbGuarantee}

In this section, we provide a solution to Problem \ref{problem:PACbound}.
We first give a lower bound of the SPR-cause probability for any subset of states $C \subseteq S \setminus E$ as a solution to (\ref{problem:eta}).
For any $u \in \mathcal{V}$ and any $S' \subseteq S$, we define $\mathscr{C}_{S',u} \subset S$ as 
\begin{align}
    \label{all_cause_states}
    \mathscr{C}_{S',u} = \{ c \in S' \;|\; (\mathcal{M}_u, \{c\}) \models \textbf{(S)} \},
\end{align}
where \textbf{(S)} is defined in Def. \ref{def:SPR_cause}.
For any $C \subseteq S$, we define $n(C, \mathcal{U}_N)$ as the number of samples $u \in \mathcal{U}_N$ that satisfy the following condition:
\begin{align}
\label{n_count}
    C \subseteq \mathscr{C}_{S,u}, (\mathcal{M}_u, C) \models \textbf{(M)}.
\end{align}
Note that $n(C, \mathcal{U}_N)$ is regarded as the number of times that each $c \in C$ is an SPR cause on $\mathcal{M}_u$ and $C$ satisfies the condition \textbf{(M)} of SPR cause in Def. \ref{def:SPR_cause}.
Then, we define the function $\eta_N : 2^S \times (0,1) \to \mathbb{R}$ as
\begin{align}
\label{eta}
    \eta_N(C, \beta) = t^*(N - n(C, \mathcal{U}_N), \beta),
\end{align}
where the function $t^* : [N] \times [0,1] \to \mathbb{R}$ is defined such that, for any $\beta$, 
\begin{align}
\label{t_star_maxmin}
    t^*(0, \beta) = (1 - \beta)^{1/N}, \ t^*(N, \beta)=0,
\end{align}and, for any $k = 1, \ldots, N-1$, $t^*(k, \beta)$ is the solution of
\begin{align}
\label{t_star}
    \frac{1 - \beta}{N} = \sum_{i=0}^k \binom{N}{i} (1 - t)^i t^{N-i}.
\end{align}

\begin{theorem}
\label{thm_certif}
    For any upMDP $\mathcal{M}_\mathbb{P}$, any $N>0$, any $C \subseteq S \setminus E$, and any $\beta \in (0, 1]$, we have
    \begin{align}
        \mathbb{P}^N \{ F(\mathcal{M}_\mathbb{P}, C) \geq \eta_N(C, \beta) \} \geq \beta,
    \end{align}
    where $\eta_N$ is defined as (\ref{eta}).
\end{theorem}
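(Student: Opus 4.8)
The plan is to transfer the claim from the intractable SPR‑cause probability $F(\mathcal{M}_\mathbb{P},C)$ to the Bernoulli parameter that the count $n(C,\mathcal{U}_N)$ actually estimates, and then run a Clopper--Pearson-type tail bound tailored to the piecewise definition of $t^*$. First I would set $q$ to be the $\mathbb{P}$-probability that a drawn parameter $u$ satisfies condition~(\ref{n_count}), i.e.\ that $C\subseteq\mathscr{C}_{S,u}$ and $(\mathcal{M}_u,C)\models\textbf{(M)}$; since the $u_i$ are i.i.d.\ from $\mathbb{P}$, $n(C,\mathcal{U}_N)$ is then a $\mathrm{Binomial}(N,q)$ variable. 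Next I would check, from Def.~\ref{def:SPR_cause}, that~(\ref{n_count}) is a sufficient condition for $C$ to be an SPR cause on $\mathcal{M}_u$: it gives $\textbf{(M)}$ verbatim, while $C\subseteq\mathscr{C}_{S,u}$ supplies the probability-raising inequality $\textbf{(S)}$. Hence $\{u:(\ref{n_count})\text{ holds}\}\subseteq\{u:I(u,C)=1\}$, so $q\le F(\mathcal{M}_\mathbb{P},C)$, and since both are deterministic, $\{q\ge\eta_N(C,\beta)\}\subseteq\{F(\mathcal{M}_\mathbb{P},C)\ge\eta_N(C,\beta)\}$. It therefore suffices to prove $\mathbb{P}^N\{q\ge\eta_N(C,\beta)\}\ge\beta$.

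For this, I would first record the relevant structure of $t^*$. Writing $g_k(t)=\sum_{i=0}^k\binom{N}{i}(1-t)^it^{N-i}$, each $g_k$ is continuous and strictly increasing on $[0,1]$, from $g_k(0)=0$ to $g_k(1)=1$, so for $1\le k\le N-1$ the value $t^*(k,\beta)$ is the unique root of $g_k(t)=(1-\beta)/N$; since $g_{k+1}\ge g_k$ pointwise and $t^*(N,\beta)=0$, the map $k\mapsto t^*(k,\beta)$ is non-increasing on $\{1,\dots,N\}$. A short check --- using $g_0(t)=t^N$, so that $g_0(t^*(0,\beta))=1-\beta\ge(1-\beta)/N$ despite the off-pattern definition of $t^*(0,\beta)$ --- extends this to $t^*(0,\beta)\ge t^*(1,\beta)$, so $t^*(\cdot,\beta)$ is non-increasing on all of $\{0,\dots,N\}$.

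Then I would analyze the failure event $\mathcal{E}=\{\,q<t^*(N-n(C,\mathcal{U}_N),\beta)\,\}$. If $q\ge(1-\beta)^{1/N}=t^*(0,\beta)$, then $\eta_N(C,\beta)\le t^*(0,\beta)\le q$ deterministically and $\mathcal{E}=\emptyset$. Otherwise $\beta<1$, and setting $m^\star:=\max\{m\in\{0,\dots,N\}:t^*(m,\beta)>q\}$ --- nonempty since $0$ qualifies, and $\le N-1$ since $t^*(N,\beta)=0\le q$ --- monotonicity of $t^*$ gives $t^*(N-n,\beta)>q\iff n\ge N-m^\star$, so that
\[
  \mathbb{P}^N(\mathcal{E})=\mathbb{P}\{\mathrm{Binomial}(N,q)\ge N-m^\star\}=\sum_{i=0}^{m^\star}\binom{N}{i}(1-q)^iq^{N-i}=g_{m^\star}(q).
\]
If $m^\star=0$ this equals $q^N<\big((1-\beta)^{1/N}\big)^N=1-\beta$; if $1\le m^\star\le N-1$ it equals $g_{m^\star}(q)<g_{m^\star}(t^*(m^\star,\beta))=(1-\beta)/N\le1-\beta$, using $q<t^*(m^\star,\beta)$ and strict monotonicity of $g_{m^\star}$. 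In every case $\mathbb{P}^N(\mathcal{E})\le1-\beta$, hence $\mathbb{P}^N\{q\ge\eta_N(C,\beta)\}\ge\beta$, which together with the reduction proves the theorem.

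I expect the main obstacle to be twofold. First, rigorously proving that~(\ref{n_count}) is indeed sufficient for SPR causality: the membership $c\in\mathscr{C}_{S,u}$ is a statement about $\neg\{c\}\,\mathrm{U}\,c\equiv\Diamond c$, whereas $\textbf{(S)}$ in Def.~\ref{def:SPR_cause} is phrased via the stricter event $\neg C\,\mathrm{U}\,c$, so transferring the probability-raising inequality to that conditioning --- presumably using $\textbf{(M)}$ and a sub-MDP/coupling argument over all policies --- is where the real work sits, and this inclusion is exactly what validates $q\le F(\mathcal{M}_\mathbb{P},C)$ and hence the whole bound. Second, the off-pattern value $t^*(0,\beta)=(1-\beta)^{1/N}$ must be tracked separately throughout: it is precisely what makes the worst case $m^\star=0$ attain the level $1-\beta$ (rather than $(1-\beta)/N$), so the stated $\beta$-guarantee is tight there and cannot be strengthened by the same argument.
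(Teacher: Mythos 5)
Your proposal is correct, but it reaches Theorem \ref{thm_certif} by a genuinely different route than the paper. The paper frames the bound as a scenario optimization program with interval constraints $K^u_C=[0,\gamma^u_C]$ and invokes the Campi--Garatti results (Theorems \ref{thm_scenario_exact} and \ref{thm_scenario_discarding}); because the sampling-and-discarding theorem requires an a priori fixed number of discarded constraints, the paper then takes a Boole union bound over all possible discard counts $q=1,\dots,N$, which is exactly why the defining equation (\ref{t_star}) carries the $(1-\beta)/N$ level. You instead work directly with the Bernoulli parameter $q=\mathbb{P}\{u:\text{(\ref{n_count})}\}$, note $n(C,\mathcal{U}_N)\sim\mathrm{Binomial}(N,q)$, establish monotonicity of $t^*(\cdot,\beta)$ (including the off-pattern value $t^*(0,\beta)=(1-\beta)^{1/N}$), and compute the failure probability exactly as $g_{m^\star}(q)$, bounding it by $1-\beta$ in every case; this Clopper--Pearson-style argument is more elementary and self-contained, avoids the union-bound bookkeeping entirely (indeed it shows the $(1-\beta)/N$ level in (\ref{t_star}) is more conservative than your analysis needs), while the paper's SOP formulation buys uniformity with the machinery reused for Theorem \ref{thm_optimality} and Proposition \ref{prop_certify_PC2}. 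The one step you flag as ``the real work'' --- that (\ref{n_count}) suffices for $C$ to be an SPR cause, so $q\le F(\mathcal{M}_\mathbb{P},C)$ --- is not something you need to construct from scratch: it is precisely Lemma \ref{lemma_SPR_cause_necessary_sufficient} (Lemma 21 of \citealp{baier2022probability}), which the paper also uses by citation rather than reproving, and which in fact gives an equivalence, so $q=F(\mathcal{M}_\mathbb{P},C)$ up to the (implicit) measurability of the event. With that lemma cited, your argument is complete and proves the stated bound.
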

By Theorem \ref{thm_certif}, $\eta_N$ defined as (\ref{eta}) is a solution to (\ref{problem:eta}).

We provide the lower bound of the recall-optimal probability of any collection of subsets of states as a solution to (\ref{problem:zeta}).
For any $u \in \mathcal{V}$ and any $S' \subset S$, we define a canonical SPR cause $C^*_{S',u}$ for $S'$ as
\begin{align}
\label{canonical_cause}
    C^*_{S',u} \! =\! \{ c \in \mathscr{C}_{S',u} | \exists \pi \mbox{ s.t. } \mathrm{Pr}_{\mathcal{M}_u}^\pi(\neg \mathscr{C}_{S',u} \mathrm{U} c) > 0 \},
\end{align}
Intuitively, $C^*_{S',u}$ is interpreted as the ``front" of all singleton SPR causes over $S'$.
For any $\mathcal{C}, \mathcal{S} \subseteq 2^{S}$, we define $m(\mathcal{C}, \mathcal{S}, \mathcal{U}_N )$ as the number of samples $u \in \mathcal{U}_N$ such that there exists $C \in \mathcal{C}$ that satisfies the following three conditions: (i) $C$ covers all path reaching $E$ that passes through $C^*_{S',u}$ for the union $S'$ of $\mathcal{S}$, (ii) $C$ is a subset of $\mathscr{C}_{S',u}$, and (iii) $C$ satisfies \textbf{(M)} in Def. \ref{def:SPR_cause}, that is,
\begin{align}
\label{m_count}
    \exists C \in \mathcal{C}& \mbox{ s.t. } C \subseteq \mathscr{C}_{S',u}, (\mathcal{M}_u, C) \models \textbf{(M)}, \nonumber \\
    & \forall \rho \in \mathrm{FinPath}, \rho \models \Diamond C^*_{S',u} \land \Diamond E \implies \Diamond C,
\end{align}
where $\rho \models \varphi$ denotes that the path $\rho$ on an MDP $\mathcal{M}_u$ satisfies the formula $\varphi$ and $S' = \cup_{C' \in \mathcal{S}} C'$.
We define the function $\zeta_N : 2^{2^S} \times 2^{2^S} \times (0,1) \to \mathbb{R}$ as
\begin{align}
\label{zeta}
    \zeta_N(\mathcal{C}, \mathcal{S}, \beta) = t^*(N - m(\mathcal{C}, \mathcal{S}, \mathcal{U}_N), \beta),
\end{align}
where $t^*$ is defined as (\ref{t_star}).

\begin{theorem}
\label{thm_optimality}
    For any upMDP $\mathcal{M}_\mathbb{P}$, any $N>0$, any $\mathcal{C}, \mathcal{S} \subseteq 2^{S}$, and any $\beta \in (0,1]$, we have
        \begin{align}
        \label{PR-recall_optimality_bound}
            &\mathbb{P}^N \{ R(\mathcal{M}_\mathbb{P}, \mathcal{C}, \mathcal{S}) \geq \zeta_N(\mathcal{C}, \mathcal{S}, \beta) \} \geq \beta,
        \end{align}
        where $\zeta_N$ is defined as (\ref{zeta}).
\end{theorem}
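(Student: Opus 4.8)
The plan is to reduce Theorem~\ref{thm_optimality} to the same scenario-optimization machinery that underpins Theorem~\ref{thm_certif}, so that the only real work is verifying that the event being counted by $m(\mathcal{C},\mathcal{S},\mathcal{U}_N)$ is exactly the ``good'' event whose probability one wants to lower bound. Concretely, fix $\mathcal{C},\mathcal{S}\subseteq 2^S$ and set $S'=\bigcup_{C'\in\mathcal{S}}C'$. I would define, for each parameter $u\in\mathcal{V}$, the Bernoulli random variable $Y(u)=1$ iff there exists $C\in\mathcal{C}$ satisfying the three conditions in (\ref{m_count}) on $\mathcal{M}_u$, and $Y(u)=0$ otherwise; this is measurable in $u$ because each condition is a model-checking query on the finite-state MDP $\mathcal{M}_u$ and the transition probabilities depend measurably on $u$. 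The key structural claim is that $Y(u)=1$ implies $I'(u,\mathcal{C},\mathcal{S}_\delta)=1$ in the sense of Def.~\ref{def:recall_optimal_probability}, i.e.\ that the three conditions are sufficient for some $C\in\mathcal{C}$ to be recall-optimal over $\mathcal{S}$ on $\mathcal{M}_u$. Hence $\Pr_{u\sim\mathbb{P}}(Y(u)=1)\le R(\mathcal{M}_\mathbb{P},\mathcal{C},\mathcal{S})$, and $m(\mathcal{C},\mathcal{S},\mathcal{U}_N)=\sum_{i=1}^N Y(u_i)$ is a $\mathrm{Binomial}(N,\mu)$ count with $\mu=\Pr(Y=1)\le R$.

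The second ingredient is a purely probabilistic lemma about $t^*$: for i.i.d.\ Bernoulli$(\mu)$ draws $Y_1,\dots,Y_N$ with $K=\sum_i Y_i$, one has $\mathbb{P}^N(\mu \ge t^*(N-K,\beta))\ge\beta$. This is precisely the statement already used to prove Theorem~\ref{thm_certif} (there with the count $n(C,\mathcal{U}_N)$), so I would either invoke it directly as an established fact or re-derive it in one line: conditioning on the observed value $K=k$, the defining equation (\ref{t_star}) for $t^*(N-k,\beta)$ says the tail probability $\sum_{i=0}^{N-k}\binom{N}{i}(1-t)^i t^{N-i}$ equals $(1-\beta)/N$ at $t=t^*(N-k,\beta)$, and summing the failure probability over all $N+1$ possible values of $k$ via a union bound gives total failure at most $(1-\beta)$; the boundary cases $k=0$ and $k=N$ are handled by (\ref{t_star_maxmin}). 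Chaining the two pieces: $\mathbb{P}^N(R(\mathcal{M}_\mathbb{P},\mathcal{C},\mathcal{S})\ge\zeta_N(\mathcal{C},\mathcal{S},\beta)) \ge \mathbb{P}^N(\mu\ge t^*(N-m,\beta))\ge\beta$, since $R\ge\mu$ and $\zeta_N=t^*(N-m,\beta)$ by (\ref{zeta}).

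The main obstacle is the structural claim that conditions (i)--(iii) of (\ref{m_count}) are sufficient for recall-optimality, and this is where I would spend most of the effort. The argument should run as follows. Condition (ii), $C\subseteq\mathscr{C}_{S',u}$, together with (iii), $(\mathcal{M}_u,C)\models\textbf{(M)}$, and the fact that every $c\in\mathscr{C}_{S',u}$ satisfies $(\mathcal{M}_u,\{c\})\models\textbf{(S)}$, should give that $C$ is itself an SPR cause on $\mathcal{M}_u$ (the singleton-$\textbf{(S)}$ property propagates to the set-$\textbf{(S)}$ property in Def.~\ref{def:SPR_cause}, since the condition in (S) quantifies over $c\in C$ individually). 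For recall-optimality one must show $\min_\pi[\Pr^\pi_{\mathcal{M}_u}(\Diamond C\mid\Diamond E)-\Pr^\pi_{\mathcal{M}_u}(\Diamond C'\mid\Diamond E)]\ge 0$ for every SPR cause $C'\in\mathcal{S}_\delta$; here is where the ``canonical cause'' $C^*_{S',u}$ and condition (i) enter. Any SPR cause $C'\subseteq S'$ (in particular any $C'\in\mathcal{S}$, hence any relevant $C'$) consists of states each satisfying $\textbf{(S)}$, so $C'\subseteq\mathscr{C}_{S',u}$, and any path reaching $C'$ while avoiding the rest of $C'$ must first meet the ``front'' $C^*_{S',u}$; thus $\Diamond C'\land\Diamond E \implies \Diamond C^*_{S',u}\land\Diamond E$, and by condition (i) this implies $\Diamond C$. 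Therefore $\{\rho:\rho\models\Diamond C'\land\Diamond E\}\subseteq\{\rho:\rho\models\Diamond C\land\Diamond E\}$ for every policy, which yields $\Pr^\pi_{\mathcal{M}_u}(\Diamond C\mid\Diamond E)\ge\Pr^\pi_{\mathcal{M}_u}(\Diamond C'\mid\Diamond E)$ pointwise in $\pi$ and hence after taking $\min_\pi$. The delicate points to get right are: handling the restriction to $\mathcal{S}_\delta$ versus $\mathcal{S}$ consistently with how Def.~\ref{def:recall_optimality} and Def.~\ref{def:recall_optimal_probability} use $\mathcal{S}$; the edge case $\Pr^\pi_{\mathcal{M}_u}(\Diamond E)=0$ where recall is defined to be $0$; and confirming that ``$C'$ is an SPR cause over $\mathcal{S}$'' forces $C'\subseteq S'$ so that the $C^*_{S',u}$ argument applies. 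Once this inclusion-of-path-events lemma is in hand, the rest is the routine scenario-optimization tail bound described above.
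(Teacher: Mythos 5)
Your architecture coincides with the paper's: you show that the event counted by $m(\mathcal{C},\mathcal{S},\mathcal{U}_N)$, i.e.\ the three conditions of (\ref{m_count}), is sufficient for some $C\in\mathcal{C}$ to be recall-optimal over $\mathcal{S}$ on $\mathcal{M}_u$, and then you run the same $t^*$ confidence bound as in Theorem \ref{thm_certif} on the resulting Bernoulli count (the paper packages this as the scenario program (\ref{SCP_for_thm2}) with interval constraints $[0,\xi^u_{\mathcal{C},\mathcal{S}}]$, which is the same binomial argument). Your path-front argument --- the first visited state of $\mathscr{C}_{S',u}$ along a path lies in $C^*_{S',u}$, hence $\Diamond C'\land\Diamond E\Rightarrow\Diamond C^*_{S',u}\land\Diamond E\Rightarrow\Diamond C$, and the recall inequality follows pointwise in $\pi$ before taking the minimum --- is exactly the content of Lemmas \ref{lemma_path_implication}, \ref{lemma_recall_inequality}, and \ref{lemma_recall_sufficient}, and your treatment of $\mathcal{S}$ versus $S'=\cup_{C'\in\mathcal{S}}C'$ and of the $\mathrm{Pr}^{\pi}_{\mathcal{M}_u}(\Diamond E)=0$ edge case is consistent with the paper.

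Two of your justifications need repair, though neither changes the route. First, the parenthetical claim that singleton \textbf{(S)} ``propagates to the set-\textbf{(S)} property \ldots\ since (S) quantifies over $c\in C$ individually'' is not a proof: for the set $C$ the conditioning event is $\neg C\,\mathrm{U}\,c$, not $\Diamond c$, so the singleton property for each member does not transfer termwise. That $C\subseteq\mathscr{C}_{S,u}$ together with \textbf{(M)} characterizes SPR causes is precisely Lemma 21 of \citep{baier2022probability} (the paper's Lemma \ref{lemma_SPR_cause_necessary_sufficient}); you should cite it rather than argue this way, and you also need its converse direction where you assert that every SPR cause $C'\subseteq S'$ is contained in $\mathscr{C}_{S',u}$. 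Second, your union-bound accounting ``over all $N+1$ possible values of $k$'' does not literally sum to at most $1-\beta$: at the boundary $k=N$ the threshold is $t^*(0,\beta)=(1-\beta)^{1/N}$ and the corresponding failure probability is only bounded by $\mu^N<1-\beta$, not by $(1-\beta)/N$, so that term cannot be added to the others. The fix is the paper's case split (the zero-discard case handled separately via Theorem \ref{thm_scenario_exact}, and $q=1,\dots,N$ via the union bound with $(1-\beta)/N$ each), or the observation that the failure event is a single binomial upper tail. Since you explicitly defer to the Theorem \ref{thm_certif} machinery anyway, these are citation and bookkeeping repairs rather than missing ideas.
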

By Theorem \ref{thm_optimality}, $\zeta_N$ defined by (\ref{zeta}) is a solution to (\ref{problem:zeta}).

\begin{remark}[On Lower Bounds in Theorems \ref{thm_certif} and \ref{thm_optimality}]
    The right-hand side of (\ref{t_star}) is the cumulative distribution function of the binary distribution for $k$ with the success probability $1 - t$. Clearly, the maximum and minimum values of $t^*$ defined by (\ref{t_star_maxmin}) and (\ref{t_star}) are $(1-\beta)^{1/N}$ and $0$.
    Thus, the values of $\eta_N(C, \beta)$ and $\zeta_N(\mathcal{C}, \mathcal{S}, \beta)$ defined by (\ref{eta}) and (\ref{zeta}), respectively, are strictly decreasing for $\beta$, and they converge to $0$ as $\beta$ goes to $1$. Moreover, $t^*$ is strictly decreasing for $k$. Hence, $\eta_N(C, \beta)$ and $\zeta_N(\mathcal{C}, \mathcal{S}, \beta)$ are strictly increasing for $n(C, \mathcal{U}_N)$ and $m(\mathcal{C}, \mathcal{S}, \mathcal{U}_N)$, respectively, and they converge to $(1 - \beta)^{1/N}$ when $n(C, \mathcal{U}_N)$ and $m(\mathcal{C}, \mathcal{S}, \mathcal{U}_N)$ go to $N$. Please see Section 5.4 in \citep{badings2022scenario} for a more detailed discussion about the behavior of $t^*$.
\end{remark}

\begin{remark}[Proof idea for Theorems \ref{thm_certif} and \ref{thm_optimality}]
Our probabilistic lower bounds for SPR-cause and recall-optimal probabilities are based on optimization formulations known as \textit{scenario optimization programs} (SOPs) \citep{campi2008exact, campi2011sampling}. We consider two SOPs whose
constraints require that a given set of states $C \subseteq S$ is an SPR cause, and a given collection of sets of states $\mathcal{C} \subseteq 2^S$ contains a recall-optimal SPR cause, respectively, in some sampled MDPs. Please see Appendices \ref{appendix:scenario} and \ref{Appendix_thm12} for details.
\end{remark}

\subsection{PAC-Guaranteed Recall-Optimal Probability-Raising Causes on Uncertain Parametric MDPs}
\label{subsection:Estimation}

In this section, we provide the solution to Problem \ref{problem:PACoptimal_diff}.
For any $\beta, \delta \in (0,1)$ and any $N>0$, we define the collection $\mathcal{C}^*$ of subsets of states as
\begin{align}
\label{C_star}
    \mathcal{C}^* = \{ C^*_{S_N, u_i} \;|\; i \in \mathcal{I} \},
\end{align}
where $\mathcal{I}$ is a set of indices for sampled parameter vectors defined such that
\begin{align}
    \bigcup_{i \in \mathcal{I}} \mathcal{U}^i_N = \mathcal{U}_N \land \bigcup_{i \in \mathcal{I}'}\mathcal{U}^i_N \neq \mathcal{U}_N, \forall \mathcal{I}' \subset \mathcal{I},
\end{align}
$\mathcal{U}^i_N$ is defined as
\begin{align}
\label{recall_optimal_samples}
    \mathcal{U}^i_N = & \{ u_j \in \mathcal{U}_N \;|\; C^*_{S_N, u_i} \subseteq \mathscr{C}_{u_j} \nonumber \\
    & \land \forall \rho, \rho \models \Diamond C^*_{S_N, u_j} \land \Diamond E \implies \Diamond C^*_{S_N, u_i} \}
\end{align}
and $C^*_{S_N, u_i}$ is defined by (\ref{canonical_cause}) with $S_N = \cup_{C' \in \mathcal{S}_{N, \delta, \beta}} C'$.

\begin{theorem}
\label{thm_solution_certif}
    For any upMDP $\mathcal{M}_\mathbb{P}$, any $\delta, \beta \in (0,1)$, and any $N>0$, $\mathcal{C}^*$ defined by (\ref{C_star}) is a solution to Problem \ref{problem:PACoptimal_diff}.
\end{theorem}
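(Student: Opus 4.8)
The plan is to verify the two requirements \textbf{(PC1)} and \textbf{(PC2)} of Problem \ref{problem:PACoptimal_diff} separately, relying on three facts: the strict monotonicity of $t^*$ (hence of $\zeta_N$ in $m$ and of $\eta_N$ in $n$) recorded in the Remark following Theorem \ref{thm_optimality}; the fact that, by construction, $\mathcal{I}$ indexes a \emph{minimal} cover of $\mathcal{U}_N$ by the sets $\mathcal{U}^i_N$ of (\ref{recall_optimal_samples}); and a correspondence lemma linking ``$u_j\in\mathcal{U}^i_N$'' to the semantic statements counted by $m(\cdot,\cdot,\mathcal{U}_N)$ in (\ref{m_count}) and by $n(\cdot,\mathcal{U}_N)$ in (\ref{n_count}). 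For the lemma I would show that $u_j\in\mathcal{U}^i_N$ implies $C:=C^*_{S_N,u_i}$ satisfies the three conditions of (\ref{m_count}) on $\mathcal{M}_{u_j}$ (with $S'=S_N$) and is moreover an SPR cause on $\mathcal{M}_{u_j}$ that is recall-optimal over $\mathcal{S}_{N,\delta,\beta}$: conditions (i) and (iii) of (\ref{m_count}) are read off directly from (\ref{recall_optimal_samples}); condition \textbf{(M)} holds because, by (\ref{canonical_cause}), each $c\in C$ is reachable in $\mathcal{M}_{u_i}$ along a path avoiding $\mathscr{C}_{S_N,u_i}\supseteq C$, and reachability along a fixed path depends only on the common support of $P$, so the same path has positive probability under some policy of $\mathcal{M}_{u_j}$; and a first-visit (Markov) argument combines the singleton-\textbf{(S)} membership encoded in $C\subseteq\mathscr{C}_{S_N,u_j}$ with the path-coverage condition (iii) to yield that $C$ is an SPR cause and recall-dominates every SPR cause drawn from $\mathcal{S}_{N,\delta,\beta}$. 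I expect this first-visit argument (of the kind underlying Definition \ref{def:recall_optimality}) to be the technically delicate part of the lemma.

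\textbf{(PC1).} Since $\bigcup_{i\in\mathcal{I}}\mathcal{U}^i_N=\mathcal{U}_N$, the lemma shows every sample is counted by $m(\mathcal{C}^*,\mathcal{S}_{N,\delta,\beta},\mathcal{U}_N)$, so this count equals its maximum $N$; as $\zeta_N(\mathcal{C},\mathcal{S}_{N,\delta,\beta},\beta)=t^*(N-m(\mathcal{C},\mathcal{S}_{N,\delta,\beta},\mathcal{U}_N),\beta)$ is strictly increasing in $m$ and $m\le N$ always, $\mathcal{C}^*$ attains $\max_{\mathcal{C}\subseteq\mathcal{S}_{N,\delta,\beta}}\zeta_N$. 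It remains to check that $\mathcal{C}^*$ is itself a subcollection of $\mathcal{S}_{N,\delta,\beta}$, i.e.\ $\eta_N(C^*_{S_N,u_i},\beta)>\delta$ for each $i\in\mathcal{I}$; here I would use $C^*_{S_N,u_i}\subseteq S_N=\bigcup_{C'\in\mathcal{S}_{N,\delta,\beta}}C'$ together with the lemma (every $u_j\in\mathcal{U}^i_N$ is also counted by $n(C^*_{S_N,u_i},\mathcal{U}_N)$), noting that the indices entering $\mathcal{I}$ are precisely those whose canonical causes already pass the threshold defining $\mathcal{S}_{N,\delta,\beta}$. This gives \textbf{(PC1)}.

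\textbf{(PC2).} It suffices to treat $\mathcal{C}'=\mathcal{C}^*\setminus\{C^*_{S_N,u_{i_0}}\}$ for $i_0\in\mathcal{I}$, since any proper subcollection is contained in such a one. Minimality of $\mathcal{I}$ yields a sample $u_{j_0}\in\mathcal{U}^{i_0}_N\setminus\bigcup_{i\in\mathcal{I}\setminus\{i_0\}}\mathcal{U}^i_N$. By the lemma, $C^*_{S_N,u_{i_0}}$ is recall-optimal over $\mathcal{S}_{N,\delta,\beta}$ on $\mathcal{M}_{u_{j_0}}$, so $I'(u_{j_0},\mathcal{C}^*,\mathcal{S}_{N,\delta,\beta})=1$; the converse direction of the correspondence (recall-optimality of $C^*_{S_N,u_i}$ on $\mathcal{M}_{u_{j_0}}$ would force $u_{j_0}\in\mathcal{U}^i_N$) shows no $C\in\mathcal{C}'$ is recall-optimal there, so $I'(u_{j_0},\mathcal{C}',\mathcal{S}_{N,\delta,\beta})=0$. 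Because $I'(\cdot,\mathcal{C}',\cdot)\le I'(\cdot,\mathcal{C}^*,\cdot)$ pointwise we always have $R(\mathcal{M}_\mathbb{P},\mathcal{C}^*,\mathcal{S}_{N,\delta,\beta})\ge R(\mathcal{M}_\mathbb{P},\mathcal{C}',\mathcal{S}_{N,\delta,\beta})$; to upgrade this to a \emph{strict} inequality with $\mathbb{P}^N$-probability exceeding $\beta$ I would re-run the scenario-optimization argument behind Theorem \ref{thm_optimality}, observing that the discrepancy set $\{u:I'(u,\mathcal{C}^*,\cdot)=1,\ I'(u,\mathcal{C}',\cdot)=0\}$ contains the ``uncovered region'' associated with $i_0$, with the sample $u_{j_0}$ as an interior witness, and applying the same binomial-tail bound that defines $t^*$ to conclude this region has positive $\mathbb{P}$-measure except on an event of $\mathbb{P}^N$-probability at most $1-\beta$.

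The step I expect to be the main obstacle is this last one in \textbf{(PC2)}: converting the purely combinatorial non-redundancy of the minimal cover $\mathcal{I}$ over the $N$ samples into a statement about the continuous measure $\mathbb{P}$ of the discrepancy region --- exactly the role of the support-constraint machinery of scenario optimization \citep{campi2008exact,campi2011sampling} --- which requires carefully matching our set-covering construction to its non-degeneracy hypotheses. The correspondence lemma, in particular its first-visit Markov component and the converse direction invoked in \textbf{(PC2)}, is a secondary but still non-routine ingredient.
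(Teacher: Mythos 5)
Your plan matches the paper's own proof essentially step for step: your ``correspondence lemma'' is the paper's necessary-and-sufficient characterization of recall-optimality via the conditions in (\ref{recall_optimal_samples}) (its Lemmas on path implication and recall sufficiency/necessity), \textbf{(PC1)} is obtained exactly as in the paper by noting the cover makes $m(\mathcal{C}^*,\mathcal{S}_{N,\delta,\beta},\mathcal{U}_N)=N$ so $\zeta_N$ attains its maximum $(1-\beta)^{1/N}$, and your \textbf{(PC2)} argument --- a discrepancy indicator $\xi_{\mathrm{diff}}$, a nonempty witness set guaranteed by minimality of the cover, and a re-run of the sample-discarding scenario bound to get positive measure of the discrepancy region with confidence $\beta$ --- is precisely the paper's Proposition certifying \textbf{(PC2)}. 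The steps you flag as delicate are exactly those the paper discharges with its auxiliary lemmas and the Campi--Garatti discarding theorem, so the proposal is correct and takes essentially the same route.
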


In the following Example \ref{example1_SPR}, we show how the condition in (\ref{recall_optimal_samples}) eliminates redundant samples for the recall-optimal probability from all sampled parameters and how to construct the exhaustive solution $\mathcal{C}^*$ defined by (\ref{C_star}).
\begin{example}
    \label{example1_SPR}
    We consider solving Problem \ref{problem:PACoptimal_diff} for the upMDP in Example \ref{example3}. Let $\delta=0$ and $\beta=0.99$. We sampled $N=1000$ parameters. We obtained a solution $\mathcal{C}^* = \{ \{s_3\} \}$ with $\mathcal{S}_{N, \delta, \beta}$ $ =\{ \{s_1\}, \{s_2\}, \{s_3\}, \{s_1, s_3\}, \{s_2, s_3\} \}$, where $\{s_3\}$, $\{s_1, s_3\}$, and $\{s_2, s_3\}$ are the canonical SPR causes on the sampled $7$, $2$, and $991$ MDPs with $p = q$, $p > q$, and $p < q$, respectively. $\{s_3\}$ is a subset of $\{s_1, s_3\}$ and $\{s_2, s_3\}$, and all paths passing through $\{s_1, s_3\}$ or $\{s_2, s_3\}$ go through $\{s_3\}$. Thus, $\mathcal{U}^i_N$ defined by (\ref{recall_optimal_samples}) that corresponds to $\{s_3\}$ is equal to $\mathcal{U}_N$. According to Theorem \ref{thm_certif}, the subsets of states $\{s_1\}$, $\{s_2\}$, $\{s_3\}$, $\{s_1, s_3\}$, and $\{s_2, s_3 \}$ provide the lower bounds of SPR-cause probabilities $4.3 \times 10^{-6}$, $0.98$, $0.995$, $4.3 \times 10^{-6}$, and $0.98$ with the confidence probability $\beta=0.99$, respectively, and they exceed $\delta$. For \textbf{(PC1)}, the lower bound $\zeta_N(\mathcal{C}^*, \mathcal{S}_{N, \delta, \beta}, \beta)$ of recall-optimal probability is the maximum value $(1 - \beta)^{1/N} = 0.995$. 
    For \textbf{(PC2)}, the subset $\mathcal{C} = \emptyset \subset \mathcal{C}^*$ clearly does not contain any SPR cause for all parameters. So, $R(\mathcal{M}_\mathbb{P}, \mathcal{C}^*, S_{N, \delta, \beta}) > R(\mathcal{M}_\mathbb{P}, \mathcal{C}, S_{N, \delta, \beta})$ holds.
\end{example}
\begin{remark}[Interpretation for (\ref{C_star})]
    Intuitively, $\mathcal{U}^i_N$ defined by (\ref{recall_optimal_samples}) collects all sampled parameters $u_j$ such that $C^*_{S_N, u_i}$ for $i$-th sample $u_i$ is recall optimal on $\mathcal{M}_{u_j}$ (See Lemma \ref{lemma_recall_optimal_necessary_sufficient} in Appendix \ref{Appendix_thm3}). So, $\mathcal{C}^*$ is constructed to contain some recall-optimal SPR causes for each sampled MDP.
\end{remark}
An easier example is shown in Appendix \ref{appendix:other_example}. Further, additional discussion about the interpretability of the resulting SPR causes is provided in Appendix \ref{appendix:interpretability}.

\subsection{Causal Identification Algorithm}
\label{subsection:Algorithm}
We provide an algorithm to obtain the solution to Problem \ref{problem:PACoptimal_diff}. The overall procedure is shown in Algorithm \ref{Alg_PG_SPREst}.
In Lines from 1 to 6, we compute the canonical SPR cause $C^*_{S_N, u}$ defined as (\ref{canonical_cause}) for each sampled parameter vector, where $S_N = \cup_{C' \in \mathcal{S}_{N,\delta,\beta}}C'$, for each $\mathcal{M}_u$ with $u \in \mathcal{U}_N$ by applying Algorithm \ref{Alg_SinCheck} to all states in $S \setminus E$.
Then, in Line 8, we obtain $\mathcal{C}^*_{\mathcal{U}_N, \delta, \beta}$ defined by (\ref{C_star}) as the solution to Problem \ref{problem:PACoptimal_diff}. We compute the lower bounds of SPR-cause probabilities and recall-optimal probability through (\ref{eta}) and (\ref{zeta}). 
Note that we can check the conditions of (\ref{canonical_cause}), (\ref{m_count}), and (\ref{C_star}) using a standard graph algorithm for path finding over $\mathcal{M}_u$.
Algorithm \ref{Alg_SinCheck} is identical to Algorithm 2 in \citep{baier2022probability} and checks whether a given state $\{c \}$ is an SPR cause on a given MDP $\mathcal{M}_u$, that is, $(\mathcal{M}_u, \{c\}) \models \text{(S)}$. The computation of $\tau_u$ is sound and complete, i.e.,  $\tau_u(c) = 1$ if and only if $\{ c \}$ is an SPR cause on $\mathcal{M}_u$, by Lemma 4 in \citep{baier2022probability}. In Lines 1 and 2, we compute the minimum probability $w_c$ of reaching $E$ from $c$ on $\mathcal{M}_u$ and the maximum probability $q_s$ of reaching $E$ from any $s \in S$ on a modified MDP $\mathcal{M}_u^{[c]}$. The computation is conducted by existing model checking algorithms such as value iteration. 
The modified MDP $\mathcal{M}_u^{[c]}$ is defined by removing all actions at $c$ and adding a terminal state $\mathrm{noeff}$ that is not in $E$ and an action $\gamma$ that is enabled at $c$ with the transition probabilities $P_u(c, \gamma, \mathrm{eff}) = \mathrm{Pr}^\mathrm{min}_{\mathcal{M}_u,c}(\Diamond E)$ and $P_u(c, \gamma, \mathrm{noeff}) = 1 -\mathrm{Pr}^\mathrm{min}_{\mathcal{M}_u,c}(\Diamond E)$, where $\mathrm{eff}$ is a fixed state in $E$ and $\mathrm{Pr}^\mathrm{min}_{\mathcal{M}_u,c}(\Diamond E) = \min_{\pi} \mathrm{Pr}^\pi_{\mathcal{M}_u,c}(\Diamond E)$. $\tau_u(c)$ is calculated depending on the following 3 cases. When $w_c - q_{s_0} > 0$ and $w_c - q_{s_0} < 0$, we set $\tau_u(c) = 1$ and $-1$, respectively. In the corner case $w_c - q_{s_0} = 0$, we consider the set of actions $\mathcal{A}_{res}(s)\{ a \in A | q_s\! =\! \sum_{s' \in S^{[c]}} P_u(s,a,s') q_{s'} \}$ for any state $s$, where $S^{[c]}$ is the set of states in $\mathcal{M}^{[c]}_u$. We construct a sub-MDP $\mathcal{M}^{\mathrm{max}, [c]}_u$ of $\mathcal{M}^{[c]}_u$ induced from $\mathcal{A}_{res}$. In Lines from 8 to 12, if $c$ is reachable from $s_0$ in the sub-MDP $\mathcal{M}^{\mathrm{max},[c]}_u$, then we set $\tau_u(c) = -1$, and otherwise $\tau_u(c) = 1$.
The complexity of Algorithm \ref{Alg_PG_SPREst} is polynomial for $|S|$ and $N$ in practice. Please see Appendix \ref{appendix:complexity} for details.

\begin{algorithm}[htbp]
\caption{Probabilistically Guaranteed Recall-Optimal SPR Cause Identification for upMDP}
\begin{algorithmic}[1] 
\label{Alg_PG_SPREst}
\REQUIRE upMDP $\mathcal{M}_\mathbb{P}$, $N > 0$, $\delta \geq 0$, and $\beta \in [0,1]$.
\ENSURE $\mathcal{C}^*$, $\eta_N$, and $\zeta_N$.
\STATE Sample $N$ parameters $\mathcal{U}_N$ from $\mathcal{V}$ according to $\mathbb{P}$.
\FOR{each $u \in \mathcal{U}_N$}
\FOR{each $c \in S \setminus E$}
\STATE $\tau_u(c) \gets$ Algorithm\ref{Alg_SinCheck}($\mathcal{M}_u, c)$.
\ENDFOR
\STATE Compute $\mathscr{C}_{S_N, u}$ as $\mathscr{C}_{S_N, u} \! =\! \{ c \in S | \tau_u(c) = 1 \}$ and $C^*_{S_N,u}$ defined by (\ref{canonical_cause}), where $S_N = \cup_{C' \in \mathcal{S}_{N,\delta,\beta}} C'$.
\ENDFOR
\STATE Initialize $\mathcal{I} = \emptyset$.
\WHILE{$\bigcup_{i \in \mathcal{I}} \mathcal{U}^i_N \neq \mathcal{U}_N$.}
\STATE $\mathcal{I} \gets \mathcal{I} \cup\{i\}$ such that $\mathcal{U}^i_N \setminus \bigcup_{j \in \mathcal{I}} \mathcal{U}^j_N \neq \emptyset$.
\ENDWHILE
\STATE Compute $\mathcal{C}^*$ from $\mathcal{I}$.
\STATE Compute $\eta_N$ and $\zeta_N$ for $\mathcal{C}^*$ and $\mathcal{S}_{N, \delta, \beta}$ using Theorems \ref{thm_certif} and \ref{thm_optimality}.
\end{algorithmic}
\end{algorithm}

\begin{algorithm}[htbp]
\caption{SPR Cause Checking for Single State}
\begin{algorithmic}[1] 
\label{Alg_SinCheck}
\REQUIRE MDP $\mathcal{M}_u$, $c \in S$.
\ENSURE $\tau_u(c) \in \{-1, 1\}$.
\STATE Compute $w_c = \min_\pi \mathrm{Pr}^{\pi}_{\mathcal{M}_u, c}(\Diamond E)$.
\STATE Compute $q_{s} = \max_\pi \mathrm{Pr}^{\pi}_{\mathcal{M}^{[c]}_u, s}(\Diamond E)$ for each $s \in S$.
\IF{ $w_c - q_{s_0} > 0$ }
\STATE $\tau_u(c) = 1$
\ELSIF{$w_c - q_{s_0} < 0$}
\STATE $\tau_u(c) = -1$.
\ELSIF{$w_c - q_{s_0} = 0$}
\IF{ c is reachable from $s_0$ in $\mathcal{M}^{\mathrm{max},[c]}_u$}
\STATE $\tau_u(c) = -1$.
\ELSE
\STATE $\tau_u(c) = 1$.
\ENDIF
\ENDIF
\end{algorithmic}
\end{algorithm}

\section{Example}
\subsection{Setup}
We demonstrate that the proposed method exhaustively and nonredundantly finds SPR-causes on a $10 \times 10$ grid world with $3$ obstacles, where the robot behaves as depicted in Figs. \ref{ex:env1} and \ref{ex:env2}. The undesired behavior is reaching the terminal red cells. The blue arrows represent the enabled actions for each set of cells surrounded by the green dashed line. 
The actions the robot can take are $\mathrm{Up}, \mathrm{Right}, \mathrm{Down}, \mathrm{Left}, \mbox{ and } \mathrm{Stay}$. 
In Fig. \ref{ex:env1}, the cells $\{(4, i) \}_{i=6,7,8}$ is the one-way aisle where the robot is allowed to choose to go up. On the other hand, in Fig. \ref{ex:env2}, the robot can choose to move up and down the aisle. 
The robot transits in the intended direction with probability $p_0$ and moves perpendicularly with probability $(1 - p_0)/2$. The robot always stays in the same cell when it chooses $\mathrm{Stay}$ or is about to collide with the obstacles or go outside. The robot enters the red cell with probability $p_1$ (resp., $p_2$) regardless of the selected action when it is in the cell $(9, 5)$ (resp., $(8, 9)$) and moves in the intended way with the remaining probability. The robot moves carefully in the intended direction with probability 1 in the cells adjacent to the red cells, other than $(9, 5)$ and $(8, 9)$. 
We set the probability distributions followed by $p_0$, $p_1$, and $p_2$ as the uniform distributions over the intervals $[0.85, 0.9]$, $[0.45, 0.6]$, and $[0.5, 0.7]$. We sampled $N=100$ parameter vectors from the parameter space. We conducted the experiments $10$ times for each environment with $\beta= \beta_0 (=0.9), \beta_1(=0.99)$ and $\delta=\delta_0(=0.001), \delta_1(=0.1)$. We implemented Algorithm \ref{Alg_SinCheck} based on value iteration to compute $w_c$ and $q_s$ and to check some path conditions. We computed ${\eta}_N$ and ${\zeta}_N$ defined in (\ref{eta}) and (\ref{zeta}), respectively, using bi-section search.

SPR causes are cells whose traversal increases the probability of reaching red cells.
All true SPR causes when $p_1 < p_2$ and $p_1 > p_2$ are shown as the yellow and orange/green cells, respectively. This is because crossing the yellow cells leads to entering the red cells by going through the top side when $p_1 < p_2$. Likewise, passing through the orange/green cells causes entry into the red cells from below when $p_1 > p_2$.
Our method identifies and consolidates them through model checking for sampled MDPs and set covering.
A detailed explanation for SPR causes in Example is described in Appendix \ref{appendix:explain_example_sec}.

We compared our proposed method with two na\"ive approaches: $\mathrm{NA1}$ and $\mathrm{NA2}$. $\mathrm{NA1}$ computes the canonical SPR cause $C^*_{S,u}$ defined by (\ref{canonical_cause}) based on Algorithm 2 in \citep{baier2022probability} for the mean parameters. $\mathrm{NA2}$ collects the canonical SPR causes $C^*_{S,u}$ for 8 parameter vectors corresponding to the vertices of the parameter region.

\begin{figure}[htbp]
  	\centering
  	\subfigure[]{
  		\includegraphics[width=0.45\linewidth]{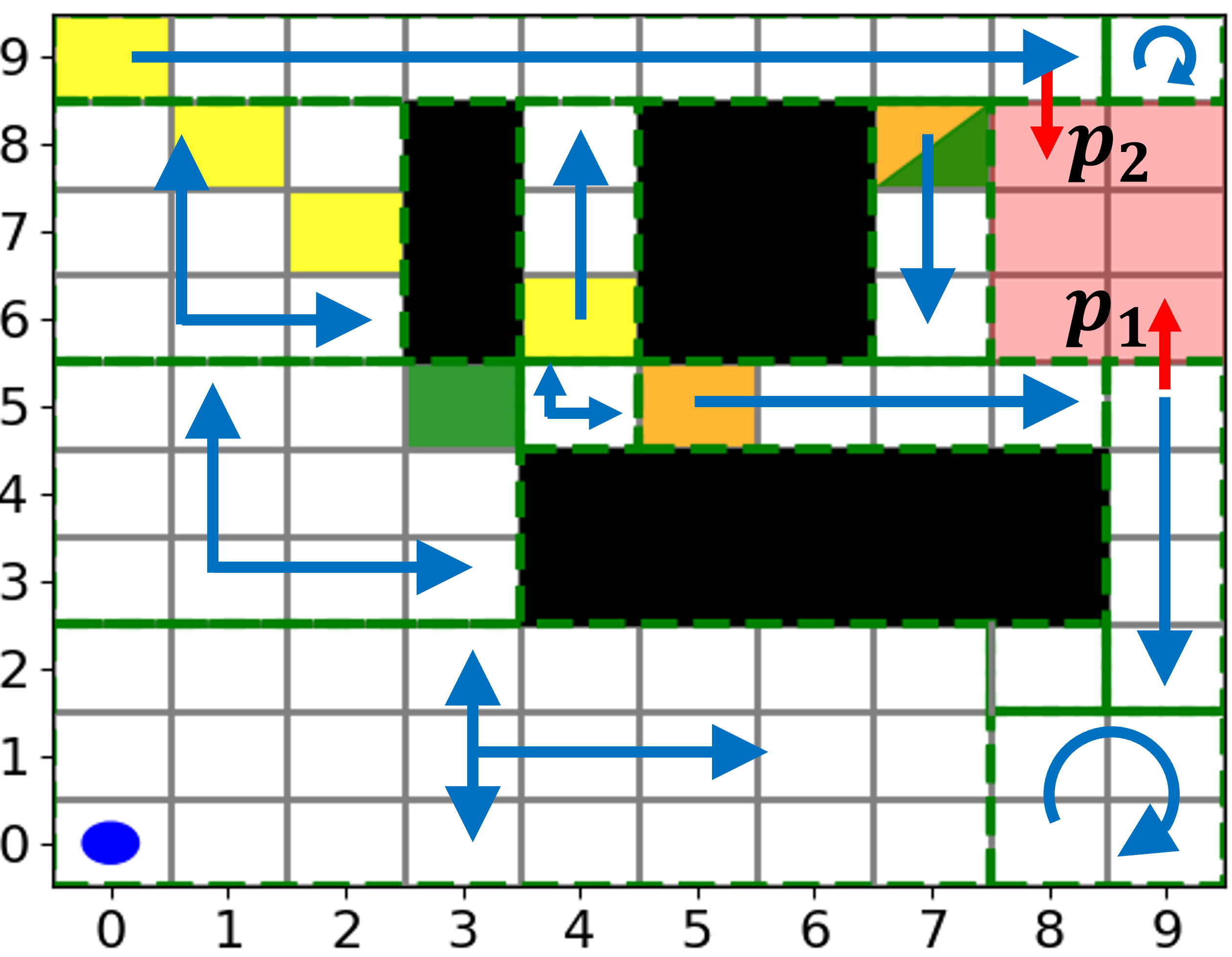}
            \label{ex:env1}
  	}
  	\subfigure[]{
  		\includegraphics[width=0.45\linewidth]{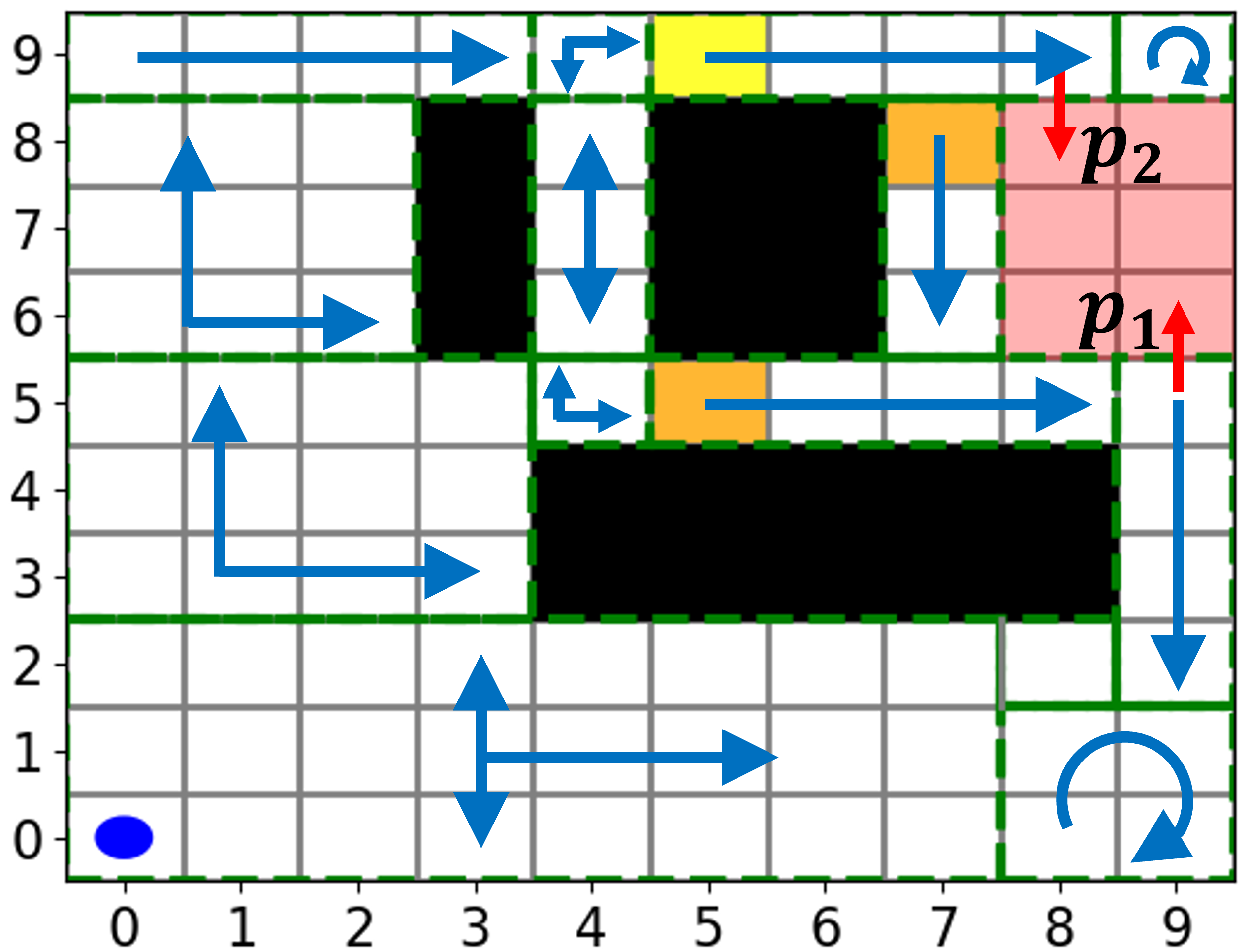}
  		\label{ex:env2}
  	}
  	\caption{Two grid-worlds with three obstacles (black rectangles). The robot starts from $(0,0)$. It enters the terminal red cells by the red arrow transitions with the indicated probabilities $p_1$ and $p_2$. The blue arrows denote the actions enabled in the regions. The yellow and orange/green cells denote the SPR cause when $p_1 < p_2$ and $p_1 > p_2$, respectively, obtained from our method with $\delta=0.001$. The difference between the two environments is the enabled actions in the top middle cells.}
   \vspace{0mm}
\end{figure}

\subsection{Results}
\begin{figure*}[htbp]
  	\centering
  	\subfigure[]{
  		\includegraphics[width=0.305\linewidth]{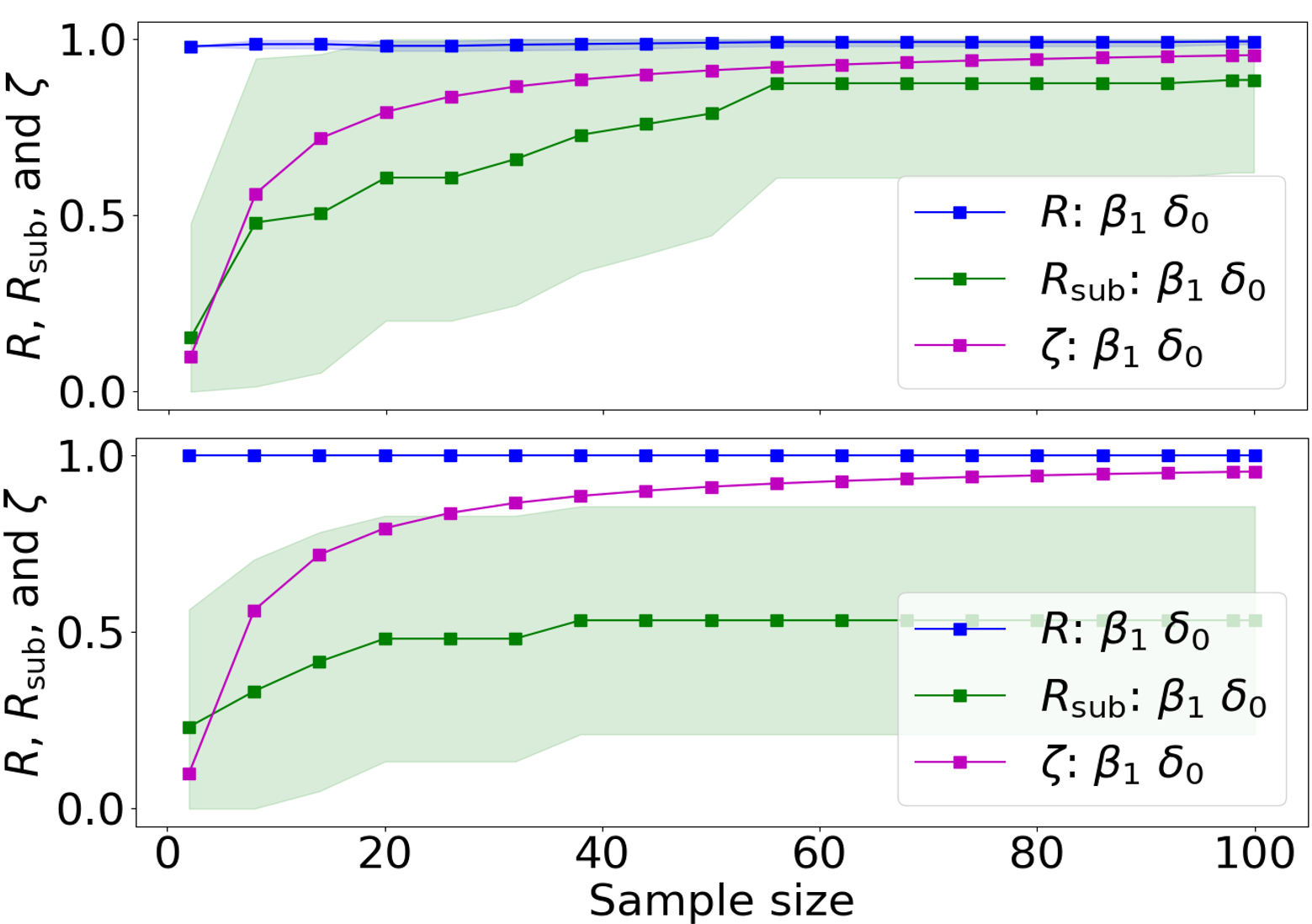}
            \label{ex:result_R_zeta}
  	}
  	\subfigure[]{
  		\includegraphics[width=0.305\linewidth]{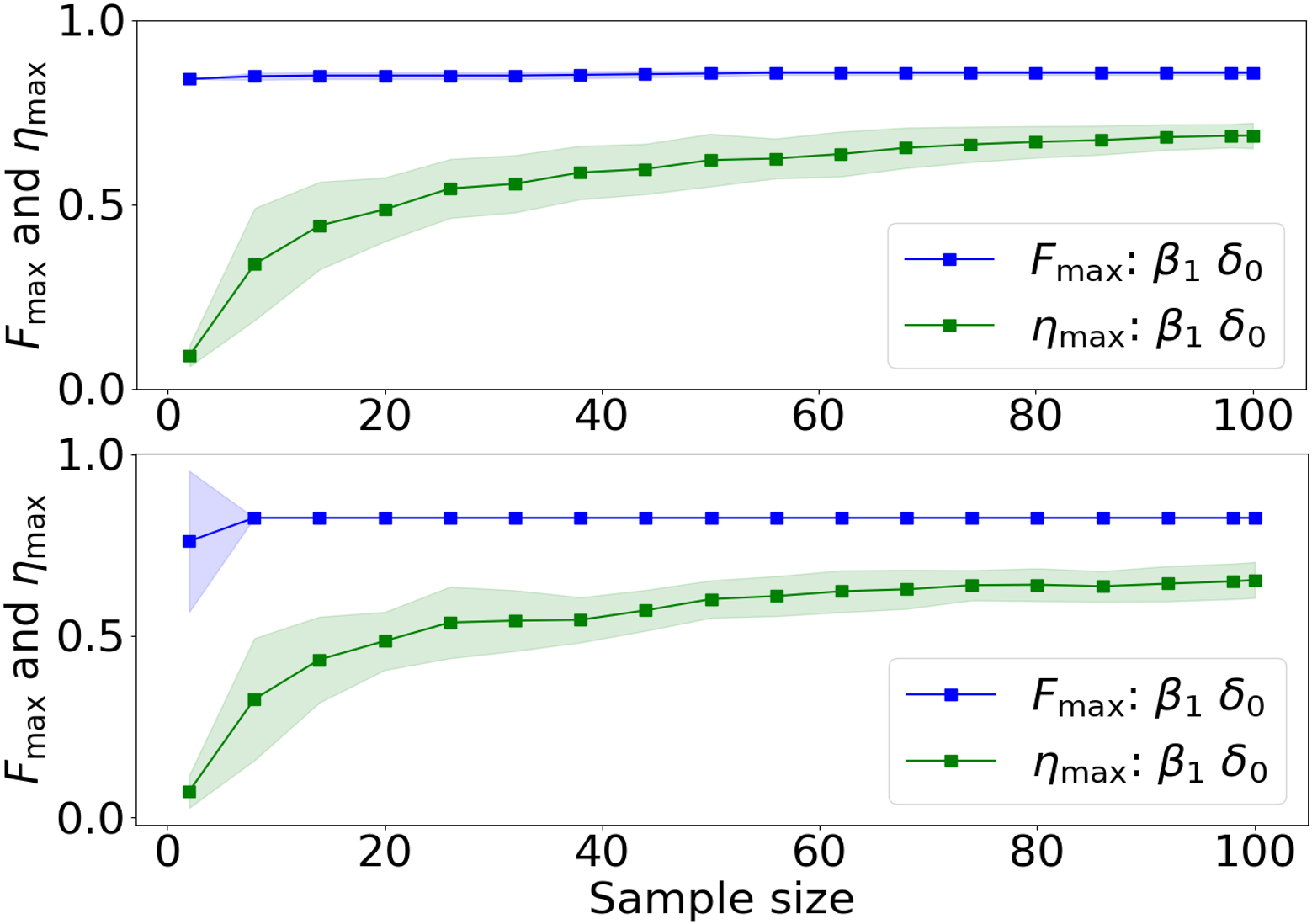}
  		\label{ex:results_F_eta_max}
  	}
        \subfigure[]{
  		\includegraphics[width=0.305\linewidth]{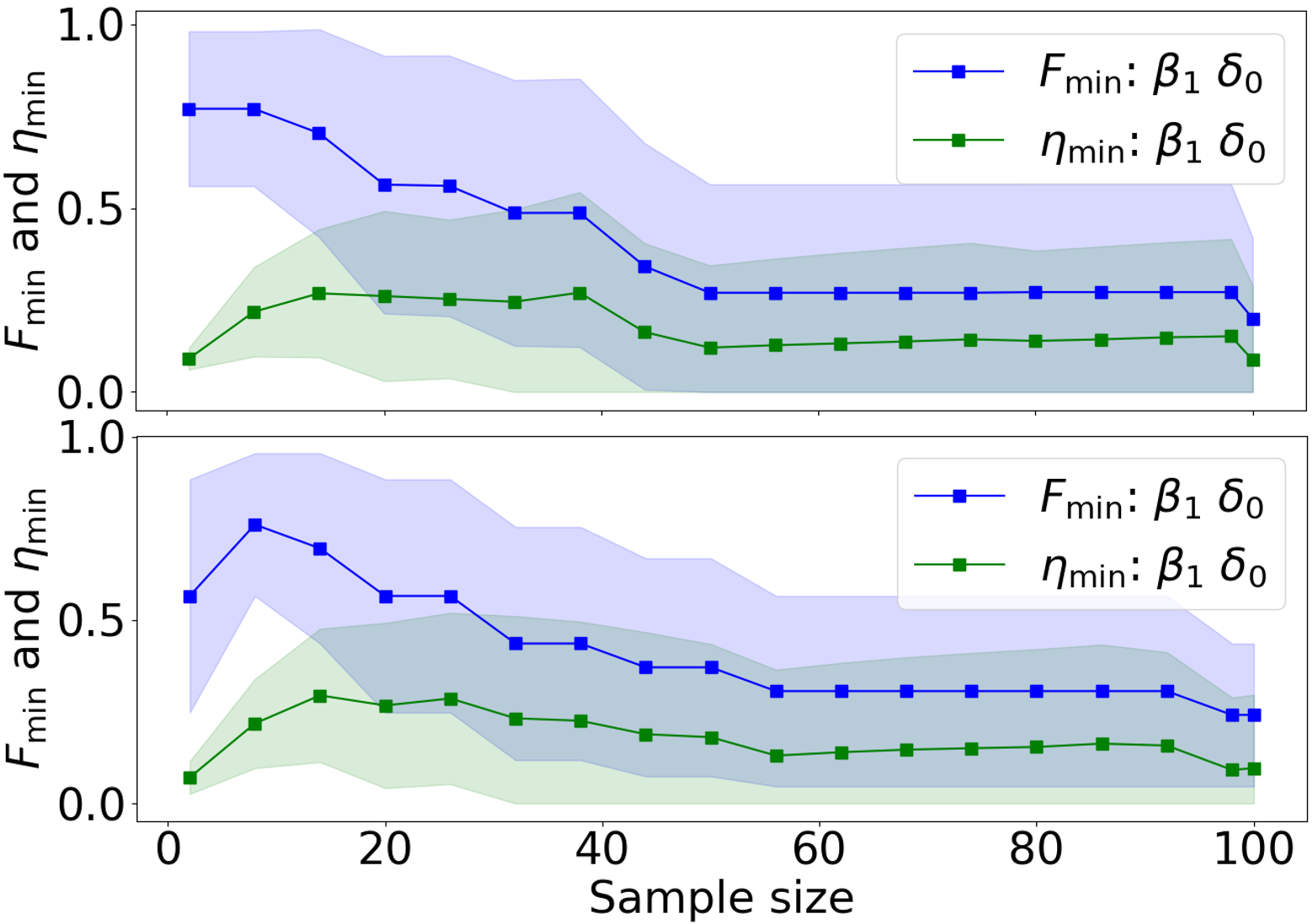}
  		\label{ex:results_F_eta_min}
  	}
  	\caption{Results for $\delta_0$ and $\beta_1$. (a): The means and standard deviations of the recall-optimal probabilities $R$ and their lower bounds $\zeta$ for the solutions obtained from the proposed method against each sample size (blue and magenta). We further show the maximum recall-optimal probabilities $R_\mathrm{sub}$ for all proper subsets of the solutions (green). (b) (resp., (c)): The means and standard deviations of the maximum (resp., minimum) SPR-cause probabilities and their lower bounds over all potential SPR causes in the solutions obtained from the proposed method against each sample size. The results of the upper and the lower figures in (a)-(c) correspond to two environments depicted in Fig\ \ref{ex:env1} and \ref{ex:env2}, respectively.}
   \vspace{0mm}
\end{figure*}

We show that the proposed method finds solutions to Problem \ref{problem:PACoptimal_diff} well for various sample sizes.
We show the results for $\delta_0$ and $\beta_1$ (See Appendix \ref{appendix:detailed_results} for more results).
Fig.\ \ref{ex:result_R_zeta} shows the recall-optimal probabilities $R$ and their lower bounds $\zeta$ of the solution $\mathcal{C}^*$ obtained from the proposed method as blue and magenta lines, respectively, for each sample size $N$ with $\mathcal{S}_{N, \delta, \beta}$ defined by (\ref{S_delta_beta}). $R(\mathcal{M}_\mathbb{P}, \mathcal{C}^*, \mathcal{S}_{N, \delta, \beta})$ is approximately computed by Monte Carlo method in (\ref{recall_opt_prob}) with $1000$ samples. We observe that $\zeta$ bounds $R$ from below and exceeds $\delta_0$ for all sample sizes, and the error between them decreases towards $0$. Moreover, the lower bounds are always obtained as the maximal values $(1- \beta)^{1/N}$. This implies that the proposed method achieves \textbf{(PC1)} in Problem \ref{problem:PACoptimal_diff}. Then, we compare $R$ with the maximum recall-optimal probabilities $R_\mathrm{sub}$ over all proper subsets of the solution $\mathcal{C}^*$ shown in Fig.\ \ref{ex:result_R_zeta} (green lines). $R_\mathrm{sub}$ are always lower than $R$, and thus the obtained solution achieves \textbf{(PC2)} well.
Fig.\ \ref{ex:results_F_eta_max} (resp., \ref{ex:results_F_eta_min}) shows the maximum (resp., minimum) SPR-cause probabilities $F_\mathrm{max}$ (resp., $F_\mathrm{min}$) and the maximum (resp., minimum) lower bounds $\eta_\mathrm{max}$ (resp., $\eta_\mathrm{min}$) over all state sets $C$ in the obtained solution $\mathcal{C}^*$ as blue and green lines, respectively, for each sample size $N$. 
Likewise $R$, $F_\mathrm{max}$ and $F_\mathrm{min}$ are approximately calculated with $1000$ samples.
$\eta_\mathrm{max}$ (resp., $\eta_\mathrm{min}$) bounds $F_\mathrm{max}$ (resp., $F_\mathrm{min}$) from blow for each sample size. The error between $\eta$ and $F$ decreases with sample size.

To investigate the relations between obtained SPR causes and the system behavior, we depict $\mathcal{C}^*$ obtained from the proposed method with $\delta=0.001$ and $N=100$ for the last experiment in Figs.\ \ref{ex:env1} and \ref{ex:env2}, and they correspond to the yellow, orange, and green cells. The sets of the yellow and orange/green cells correspond to SPR causes with $p_1 < p_2$ and $p_1 > p_2$, respectively.
In Fig.\ \ref{ex:env1}, we observed that our method found all possible SPR causes. Moreover, by modifying the one-way aisle in the top middle into a two-way aisle in Fig.\ \ref{ex:env2}, the size of the potential SPR causes is reduced. 

We compare the proposed method with the na\"ive approaches. $\mathrm{NA1}$ only found SPR causes $\{ (0,8), (1,8), (2,7), (4,6) \}$ in Fig.\ \ref{ex:env1}, and could not find SPR causes associated with $p_1 \!>\! p_2$. Thus, the recall-optimal probabilities obtained from $\mathrm{NA1}$ are lower than $1$ when $\delta \!=\! \delta_0$. In contrast, the proposed method further found $\{(3,5), (7,8)\}$ and $\{(5,5), (7,8)\}$, and hence achieves \textbf{(PC1)}. $\mathrm{NA2}$ additionally found SPR causes $\{(3,5), (7,8)\}$ and $\{(5,5), (7,8)\}$ in the environment in Fig.\ \ref{ex:env1}, but removing them does not reduce the recall-optimal probability when $\delta \!=\! \delta_1$. Hence, the solution provided by $\mathrm{NA2}$ is redundant in the sense of (\textbf{PC2}). The proposed method only provides $\{(0,9),\! (1,8),\! (2,7),\! (4,6)\}$ when $\delta = \delta_1$ and thus achieves the nonredundancy.

\section{Related Works}
\subsection{Formal Verification and Synthesis for Uncertain Systems}
\label{RelatedWorks:VerificationSynthesis}
Formal verification and synthesis for epistemically uncertain systems have been successfully advocated to guarantee that the system satisfies a given specification \citep{cubuktepe2020scenario, badings2022scenario, badings2022sampling, nilim2005robust, wiesemann2013robust, wolff2012robust, rickard2024learning}. The existing works, however, have not considered causal analysis. \cite{cubuktepe2020scenario, badings2022scenario, badings2022sampling} have developed the PAC-guaranteed verification methods for upMDPs and upMCs based on SOP formulation using samples of model parameters. The synthesis of probabilistically guaranteed control policies on upMDPs has also been investigated using optimization and learning based on system behavior samples in \citep{nilim2005robust, wiesemann2013robust, wolff2012robust, rickard2024learning}.
Identifications of uncertain MDPs also have been deeply studied using Bayesian learning \citep{dearden1999model, ross2007bayes, suilen2022robust} and data-driven abstractions \citep{jackson2020safety, lavaei2022constructing}.

\subsection{Causal Analysis for Deterministic and Stochastic Systems}
\label{RelatedWorks:CausalAnalysis}
Causal analysis for deterministic and stochastic systems has been intensively investigated \citep{baier2024foundations, coenen2022temporal, finkbeiner2024synthesis, dimitrova2020probabilistic, finkbeiner2023counterfactuals, kazemi2022causal, baier2024backward, oberst2019counterfactual, tsirtsis2021counterfactual, triantafyllou2022actual, baier2021game}. However, existing works have not considered the epistemic uncertainty of the system model itself.
\cite{baier2024foundations} have extended the PR cause-effect relations in \citep{baier2021probabilistic, baier2022probability} based on linear temporal logic. The PR causality and their model checking-based identification have the advantage of an explainable verification for probabilistic systems. \cite{coenen2022temporal, finkbeiner2024synthesis} have considered temporal causality on transition systems using omega-regular properties. An action-level PR causality has been considered on MDPs in \citep{dimitrova2020probabilistic}
based on hyperproperty \citep{clarkson2014temporal}. 
Aside from the PR principle, another well-known causality is based on \textit{counterfactuality}. Counterfactual reasoning operators have recently been incorporated into some temporal logic frameworks in \citep{finkbeiner2023counterfactuals, kazemi2022causal}. 
To perform counterfactual reasoning on MDPs, \cite{oberst2019counterfactual, tsirtsis2021counterfactual, triantafyllou2024agent, triantafyllou2022actual} have conducted encoding of MDPs using structural causal models. 
As a complementary concept for causality, responsibility attribution has been investigated in \citep{baier2024backward, triantafyllou2022actual, baier2021game}.

\section{Conclusion}
This paper presented a novel approach to exhaustively and nonredundantly identify subsets of states as potential probability-raising (PR) causes on an uncertain parametric Markov decision process.
Based on model checking and a set covering strategy for the sampled parameters, the identified potential PR causes satisfy the following three properties in a probabilistically approximately correct manner: 
(i) The probability of each identified subset being a PR cause is at least a specified value. (ii) A lower bound of the probability that the undesired paths pass through the subsets as much as possible is maximized. (iii) Eliminating any identified subset degrades the lower bound.
We plan to extend the proposed method to concurrent systems.

\bibliography{reference}

\begin{thebibliography}{40}
\providecommand{\natexlab}[1]{#1}
\providecommand{\url}[1]{\texttt{#1}}
\expandafter\ifx\csname urlstyle\endcsname\relax
  \providecommand{\doi}[1]{doi: #1}\else
  \providecommand{\doi}{doi: \begingroup \urlstyle{rm}\Url}\fi

\bibitem[Arming et~al.(2018)Arming, Bartocci, Chatterjee, Katoen, and Sokolova]{arming2018parameter}
Sebastian Arming, Ezio Bartocci, Krishnendu Chatterjee, Joost-Pieter Katoen, and Ana Sokolova.
\newblock Parameter-independent strategies for pmdps via pomdps.
\newblock In \emph{Proceedings of the 15th International Conference on Quantitative Evaluation of Systems}, pages 53--70. Springer, 2018.

\bibitem[Badings et~al.(2022{\natexlab{a}})Badings, Cubuktepe, Jansen, Junges, Katoen, and Topcu]{badings2022scenario}
Thom Badings, Murat Cubuktepe, Nils Jansen, Sebastian Junges, Joost-Pieter Katoen, and Ufuk Topcu.
\newblock Scenario-based verification of uncertain parametric mdps.
\newblock \emph{International Journal on Software Tools for Technology Transfer}, 24\penalty0 (5):\penalty0 803--819, 2022{\natexlab{a}}.

\bibitem[Badings et~al.(2022{\natexlab{b}})Badings, Jansen, Junges, Stoelinga, and Volk]{badings2022sampling}
Thom~S Badings, Nils Jansen, Sebastian Junges, Marielle Stoelinga, and Matthias Volk.
\newblock Sampling-based verification of ctmcs with uncertain rates.
\newblock In \emph{Proceedings of the 34th International Conference on Computer Aided Verification}, pages 26--47. Springer, 2022{\natexlab{b}}.

\bibitem[Baier and Katoen(2008)]{baier2008principles}
Christel Baier and Joost-Pieter Katoen.
\newblock \emph{Principles of Model Checking}.
\newblock MIT press, 2008.

\bibitem[Baier et~al.(2021{\natexlab{a}})Baier, Dubslaff, Funke, Jantsch, Majumdar, Piribauer, and Ziemek]{baier2021verification}
Christel Baier, Clemens Dubslaff, Florian Funke, Simon Jantsch, Rupak Majumdar, Jakob Piribauer, and Robin Ziemek.
\newblock From verification to causality-based explications.
\newblock In \emph{Proceeding of the 48th International Colloquium on Automata, Languages, and Programming}, pages 1--1, 2021{\natexlab{a}}.

\bibitem[Baier et~al.(2021{\natexlab{b}})Baier, Funke, Jantsch, Piribauer, and Ziemek]{baier2021probabilistic}
Christel Baier, Florian Funke, Simon Jantsch, Jakob Piribauer, and Robin Ziemek.
\newblock Probabilistic causes in {Markov} chains.
\newblock In \emph{Proceedings of the 19th International Symposium on Automated Technology for Verification and Analysis}, pages 205--221, 2021{\natexlab{b}}.

\bibitem[Baier et~al.(2021{\natexlab{c}})Baier, Funke, and Majumdar]{baier2021game}
Christel Baier, Florian Funke, and Rupak Majumdar.
\newblock A game-theoretic account of responsibility allocation.
\newblock In \emph{Proceedings of the 30th International Joint Conference on Artificial Intelligence}, pages 1773--1779. IJCAI, 2021{\natexlab{c}}.

\bibitem[Baier et~al.(2022{\natexlab{a}})Baier, Dubslaff, Funke, Jantsch, Piribauer, and Ziemek]{baier2022operational}
Christel Baier, Clemens Dubslaff, Florian Funke, Simon Jantsch, Jakob Piribauer, and Robin Ziemek.
\newblock Operational causality--necessarily sufficient and sufficiently necessary.
\newblock In \emph{A Journey from Process Algebra via Timed Automata to Model Learning: Essays Dedicated to Frits Vaandrager on the Occasion of His 60th Birthday}, pages 27--45. Springer, 2022{\natexlab{a}}.

\bibitem[Baier et~al.(2022{\natexlab{b}})Baier, Funke, Piribauer, and Ziemek]{baier2022probability}
Christel Baier, Florian Funke, Jakob Piribauer, and Robin Ziemek.
\newblock On probability-raising causality in {Markov} decision processes.
\newblock \emph{arXiv preprint arXiv:2201.08768}, 2022{\natexlab{b}}.

\bibitem[Baier et~al.(2024{\natexlab{a}})Baier, Piribauer, and Ziemek]{baier2024foundations}
Christel Baier, Jakob Piribauer, and Robin Ziemek.
\newblock Foundations of probability-raising causality in {Markov} decision processes.
\newblock \emph{Logical Methods in Computer Science}, 20, 2024{\natexlab{a}}.

\bibitem[Baier et~al.(2024{\natexlab{b}})Baier, van~den Bossche, Kl{\"u}ppelholz, Lehmann, and Piribauer]{baier2024backward}
Christel Baier, Roxane van~den Bossche, Sascha Kl{\"u}ppelholz, Johannes Lehmann, and Jakob Piribauer.
\newblock Backward responsibility in transition systems using general power indices.
\newblock In \emph{Proceedings of the AAAI Conference on Artificial Intelligence}, volume~38, pages 20320--20327, 2024{\natexlab{b}}.

\bibitem[Campi and Garatti(2008)]{campi2008exact}
Marco~C Campi and Simone Garatti.
\newblock The exact feasibility of randomized solutions of uncertain convex programs.
\newblock \emph{SIAM Journal on Optimization}, 19\penalty0 (3):\penalty0 1211--1230, 2008.

\bibitem[Campi and Garatti(2011)]{campi2011sampling}
Marco~C Campi and Simone Garatti.
\newblock A sampling-and-discarding approach to chance-constrained optimization: Feasibility and optimality.
\newblock \emph{Journal of Optimization Theory and Applications}, 148\penalty0 (2):\penalty0 257--280, 2011.

\bibitem[Clarkson et~al.(2014)Clarkson, Finkbeiner, Koleini, Micinski, Rabe, and S{\'a}nchez]{clarkson2014temporal}
Michael~R Clarkson, Bernd Finkbeiner, Masoud Koleini, Kristopher~K Micinski, Markus~N Rabe, and C{\'e}sar S{\'a}nchez.
\newblock Temporal logics for hyperproperties.
\newblock In \emph{Proceedings of the 3rd International Conference on Principles of Security and Trust}, pages 265--284. Springer Verlag, 2014.

\bibitem[Coenen et~al.(2022)Coenen, Finkbeiner, Frenkel, Hahn, Metzger, and Siber]{coenen2022temporal}
Norine Coenen, Bernd Finkbeiner, Hadar Frenkel, Christopher Hahn, Niklas Metzger, and Julian Siber.
\newblock Temporal causality in reactive systems.
\newblock In \emph{Proceedings of the 20th International Symposium on Automated Technology for Verification and Analysis}, pages 208--224. Springer, 2022.

\bibitem[Cubuktepe et~al.(2020)Cubuktepe, Jansen, Junges, Katoen, and Topcu]{cubuktepe2020scenario}
Murat Cubuktepe, Nils Jansen, Sebastian Junges, Joost-Pieter Katoen, and Ufuk Topcu.
\newblock Scenario-based verification of uncertain mdps.
\newblock In \emph{Proceedings of the 26th International Conference on Tools and Algorithms for the Construction and Analysis of Systems}, pages 287--305. Springer, 2020.

\bibitem[Dearden et~al.(1999)Dearden, Friedman, and Andre]{dearden1999model}
Richard Dearden, Nir Friedman, and David Andre.
\newblock Model based {Bayesian} exploration.
\newblock In \emph{Proceedings of the 15th Conference on Uncertainty in Artificial Intelligence}, pages 150--159, 1999.

\bibitem[Delahaye et~al.(2011)Delahaye, Larsen, Legay, Pedersen, and W{{a}}sowski]{delahaye2011decision}
Beno{i}t Delahaye, Kim~G Larsen, Axel Legay, Mikkel~L Pedersen, and Andrzej W{{a}}sowski.
\newblock Decision problems for interval {Markov} chains.
\newblock In \emph{Proceedings of the 5th International Conference on Language and Automata Theory and Applications}, pages 274--285. Springer, 2011.

\bibitem[Dimitrova et~al.(2020)Dimitrova, Finkbeiner, and Torfah]{dimitrova2020probabilistic}
Rayna Dimitrova, Bernd Finkbeiner, and Hazem Torfah.
\newblock Probabilistic hyperproperties of {Markov} decision processes.
\newblock In \emph{Proceedings of the 18th International Symposium on Automated Technology for Verification and Analysis}, pages 484--500. Springer, 2020.

\bibitem[Dorri et~al.(2018)Dorri, Kanhere, and Jurdak]{dorri2018multi}
Ali Dorri, Salil~S Kanhere, and Raja Jurdak.
\newblock Multi-agent systems: A survey.
\newblock \emph{IEEE Access}, 6:\penalty0 28573--28593, 2018.

\bibitem[Dudek et~al.(1996)Dudek, Jenkin, Milios, and Wilkes]{dudek1996taxonomy}
Gregory Dudek, Michael~RM Jenkin, Evangelos Milios, and David Wilkes.
\newblock A taxonomy for multi-agent robotics.
\newblock \emph{Autonomous Robots}, 3:\penalty0 375--397, 1996.

\bibitem[Finkbeiner and Siber(2023)]{finkbeiner2023counterfactuals}
Bernd Finkbeiner and Julian Siber.
\newblock Counterfactuals modulo temporal logics.
\newblock In \emph{Proceedings of the 24th International Conference on Logic}, volume~94, pages 181--204, 2023.

\bibitem[Finkbeiner et~al.(2024)Finkbeiner, Frenkel, Metzger, and Siber]{finkbeiner2024synthesis}
Bernd Finkbeiner, Hadar Frenkel, Niklas Metzger, and Julian Siber.
\newblock Synthesis of temporal causality.
\newblock In \emph{Proceedings of the 36th International Conference on Computer Aided Verification}, pages 87--111. Springer, 2024.

\bibitem[Givan et~al.(2000)Givan, Leach, and Dean]{givan2000bounded}
Robert Givan, Sonia Leach, and Thomas Dean.
\newblock Bounded-parameter {Markov} decision processes.
\newblock \emph{Artificial Intelligence}, 122\penalty0 (1-2):\penalty0 71--109, 2000.

\bibitem[Jackson et~al.(2020)Jackson, Laurenti, Frew, and Lahijanian]{jackson2020safety}
John Jackson, Luca Laurenti, Eric Frew, and Morteza Lahijanian.
\newblock Safety verification of unknown dynamical systems via {Gaussian} process regression.
\newblock In \emph{Proceedings of the 59th Conference on Decision and Control}, pages 860--866. IEEE, 2020.

\bibitem[Kazemi and Paoletti(2022)]{kazemi2022causal}
Milad Kazemi and Nicola Paoletti.
\newblock Causal temporal reasoning for {Markov} decision processes.
\newblock \emph{arXiv preprint arXiv:2212.08712}, 2022.

\bibitem[Lavaei et~al.(2022)Lavaei, Soudjani, Frazzoli, and Zamani]{lavaei2022constructing}
Abolfazl Lavaei, Sadegh Soudjani, Emilio Frazzoli, and Majid Zamani.
\newblock Constructing mdp abstractions using data with formal guarantees.
\newblock \emph{IEEE Control Systems Letters}, 7:\penalty0 460--465, 2022.

\bibitem[Moraffah et~al.(2020)Moraffah, Karami, Guo, Raglin, and Liu]{moraffah2020causal}
Raha Moraffah, Mansooreh Karami, Ruocheng Guo, Adrienne Raglin, and Huan Liu.
\newblock Causal interpretability for machine learning-problems, methods and evaluation.
\newblock \emph{ACM SIGKDD Explor. Newslet.}, 22\penalty0 (1):\penalty0 18--33, 2020.

\bibitem[Nilim and El~Ghaoui(2005)]{nilim2005robust}
Arnab Nilim and Laurent El~Ghaoui.
\newblock Robust control of {Markov} decision processes with uncertain transition matrices.
\newblock \emph{Operations Research}, 53\penalty0 (5):\penalty0 780--798, 2005.

\bibitem[Oberst and Sontag(2019)]{oberst2019counterfactual}
Michael Oberst and David Sontag.
\newblock Counterfactual off-policy evaluation with gumbel-max structural causal models.
\newblock In \emph{Proceedings of the 15th International Conference on Machine Learning}, pages 4881--4890. PMLR, 2019.

\bibitem[Puggelli et~al.(2013)Puggelli, Li, Sangiovanni-Vincentelli, and Seshia]{puggelli2013polynomial}
Alberto Puggelli, Wenchao Li, Alberto~L Sangiovanni-Vincentelli, and Sanjit~A Seshia.
\newblock Polynomial-time verification of pctl properties of mdps with convex uncertainties.
\newblock In \emph{Proceedings of the 25th International Conference on Computer Aided Verification}, pages 527--542. Springer, 2013.

\bibitem[Rickard et~al.(2024)Rickard, Abate, and Margellos]{rickard2024learning}
Luke Rickard, Alessandro Abate, and Kostas Margellos.
\newblock Learning robust policies for uncertain parametric {Markov} decision processes.
\newblock In \emph{Proceedings of the 6th Annual Learning for Dynamics \& Control Conference}, pages 876--889. PMLR, 2024.

\bibitem[Ross et~al.(2007)Ross, Chaib-draa, and Pineau]{ross2007bayes}
Stephane Ross, Brahim Chaib-draa, and Joelle Pineau.
\newblock {Bayes}-adaptive pomdps.
\newblock \emph{Advances in neural information processing systems}, 20:\penalty0 1225--1232, 2007.

\bibitem[Schnoebelen(2002)]{schnoebelen2002complexity}
Philippe Schnoebelen.
\newblock The complexity of temporal logic model checking.
\newblock \emph{Adv. Modal Logic}, 4\penalty0 (35):\penalty0 393--436, 2002.

\bibitem[Suilen et~al.(2022)Suilen, Sim{\~a}o, Parker, and Jansen]{suilen2022robust}
Marnix Suilen, Thiago~D Sim{\~a}o, David Parker, and Nils Jansen.
\newblock Robust anytime learning of {Markov} decision processes.
\newblock \emph{Advances in Neural Information Processing Systems}, 35:\penalty0 28790--28802, 2022.

\bibitem[Triantafyllou et~al.(2022)Triantafyllou, Singla, and Radanovic]{triantafyllou2022actual}
Stelios Triantafyllou, Adish Singla, and Goran Radanovic.
\newblock Actual causality and responsibility attribution in decentralized partially observable {Markov} decision processes.
\newblock In \emph{Proceedings of the 5th AAAI/ACM Conference on AI, Ethics, and Society}, pages 739--752, 2022.

\bibitem[Triantafyllou et~al.(2024)Triantafyllou, Sukovic, Mandal, and Radanovic]{triantafyllou2024agent}
Stelios Triantafyllou, Aleksa Sukovic, Debmalya Mandal, and Goran Radanovic.
\newblock Agent-specific effects: A causal effect propagation analysis in multi-agent mdps.
\newblock In \emph{Proceedings of 41st International Conference on Machine Learning}, 2024.

\bibitem[Tsirtsis et~al.(2021)Tsirtsis, De, and Rodriguez]{tsirtsis2021counterfactual}
Stratis Tsirtsis, Abir De, and Manuel Rodriguez.
\newblock Counterfactual explanations in sequential decision making under uncertainty.
\newblock \emph{Advances in Neural Information Processing Systems}, 34:\penalty0 30127--30139, 2021.

\bibitem[Wiesemann et~al.(2013)Wiesemann, Kuhn, and Rustem]{wiesemann2013robust}
Wolfram Wiesemann, Daniel Kuhn, and Ber{\c{c}} Rustem.
\newblock Robust {Markov} decision processes.
\newblock \emph{Mathematics of Operations Research}, 38\penalty0 (1):\penalty0 153--183, 2013.

\bibitem[Wolff et~al.(2012)Wolff, Topcu, and Murray]{wolff2012robust}
Eric~M Wolff, Ufuk Topcu, and Richard~M Murray.
\newblock Robust control of uncertain {Markov} decision processes with temporal logic specifications.
\newblock In \emph{Proceedings of the 51st Conference on Decision and Control}, pages 3372--3379. IEEE, 2012.

\end{thebibliography}

\newpage

\onecolumn

\title{Supplementary Material}
\maketitle

\appendix

\section{Derivation of Main Theorems}
\subsection{Scenario Optimization}
\label{appendix:scenario}
In this section, we briefly review the scenario optimization programs (SOPs) \citep{campi2008exact, campi2011sampling}, and how the obtained solutions are guaranteed with PAC bounds. We consider a family of constraints $K^u \subseteq \mathbb{R}^d$ with a natural number $d$ parameterized by $u \in \mathcal{V}$, where $\mathcal{V}$ is an uncountable (and possibly continuous) parameter space, and $K^u$ is convex and closed for each $u$. We assume that $\mathcal{V}$ is endowed with a $\sigma$-algebra $\mathcal{B}(\mathcal{V})$ and a probability measure $\mathbb{P}$. Suppose that $N$ i.i.d. samples sequence $\mathcal{U}_N = (u_i)_{i=1,\ldots,N}$ are available.

For any $\mathcal{U}'_N \subseteq \mathcal{U}_N$, we consider the following optimization program called SOP:
\begin{align}
\label{SOP_template}
    \max_{\gamma} \quad &\gamma \nonumber \\
    \mathrm{s. t.} \quad & \gamma \in  \bigcap_{u \in \mathcal{U}'_{N}} K^u,
\end{align}
We introduce the following assumptions.
\begin{assumption}[Assumption 2.1 in \citep{campi2011sampling}]
\label{assum:exist_unique}
    Every optimization problem subject to a finite set $F \subset \mathcal{V}$, i.e., 
    \begin{align}
    \label{finite_SOP}
        \max_{\gamma} \quad &\gamma \nonumber \\
    \mathrm{s. t.} \quad & \gamma \in  \bigcap_{u \in F} K^u,
    \end{align}
    is feasible, and its feasibility domain has a nonempty interior. Moreover, the solution of (\ref{finite_SOP}) exists and is unique.
\end{assumption}
\begin{assumption}[Assumption 2.2 in \citep{campi2011sampling}]
    \label{assum:violates}
    Almost surely with respect to multi-sample $\mathcal{U}_N$, the solution $\gamma^*$ of the optimization program in (\ref{SOP_template}) violates constraints for all $u \in \mathcal{U}_N \setminus \mathcal{U}'_N$.
\end{assumption}

Then, we have the following arguments.
\begin{theorem}[Theorem 1 in \citep{campi2008exact}]
    \label{thm_scenario_exact}
    Under Assumption \ref{assum:exist_unique}, if $|\mathcal{U}'_N| = 0$, then we have
    \begin{align}
        \mathbb{P}^N \left\{ \mathbb{P}(\gamma^* \not \in K^u) > \varepsilon \right\} < \sum_{i=0}^{d-1} \varepsilon^i (1 - \varepsilon)^{N - i},
    \end{align}
    where $\gamma^*$ is the solution to (\ref{SOP_template}).
\end{theorem}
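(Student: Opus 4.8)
The statement is the exact feasibility bound of Campi and Garatti \citep{campi2008exact}, specialized to the case of no discarded constraints (so that the imposed set equals $\mathcal{U}_N$ and Assumption \ref{assum:violates} plays no role); I read the right-hand side as the binomial tail $\sum_{i=0}^{d-1}\binom{N}{i}\varepsilon^i(1-\varepsilon)^{N-i}$, which is the quantity the argument below produces. The plan is to control the distribution of the \emph{violation probability} $V(\gamma) := \mathbb{P}(u : \gamma \notin K^u)$ evaluated at the scenario optimizer $\gamma^*$, and the whole proof rests on the notion of \emph{support constraints} together with the i.i.d.\ structure of $\mathcal{U}_N$.

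First I would set up support constraints: call $u_i$ a support constraint if deleting it strictly enlarges the optimal value of (\ref{SOP_template}). Because each $K^u$ is convex and closed and the decision variable lives in $\mathbb{R}^d$, a Helly-type argument shows that $\gamma^*$ is already determined by at most $d$ of the $N$ constraints; under Assumption \ref{assum:exist_unique} (existence, uniqueness, and nonempty interior of the feasibility domain) the optimizer is unique and the support set is well defined, and in the fully supported case it has exactly $d$ elements. This is the structural ingredient that injects the dimension $d$ into the final bound.

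Next I would exploit exchangeability. Since the samples are i.i.d., for every index set $J \subseteq \{1,\dots,N\}$ of size $d$ the event ``$J$ is the support set of the full program'' has a probability that does not depend on $J$, and these events are disjoint and (in the fully supported case) exhaust the probability space. Conditioning on a fixed support set $J$, the optimizer coincides with the optimizer $\gamma_J^*$ of the subprogram keeping only the constraints in $J$, and all remaining $N-d$ constraints are automatically satisfied by $\gamma_J^*$. The crux is a symmetrization/conditioning step showing that, given $J$ and given $V(\gamma_J^*)=v$, the other $N-d$ samples act as fresh i.i.d.\ draws, so the conditional probability that they all satisfy $\gamma_J^*$ is $(1-v)^{N-d}$.

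Finally I would assemble the pieces: summing the $\binom{N}{d}$ symmetric contributions and integrating against the law of $v$ shows that $V(\gamma^*)$ is (exactly, in the fully supported case) $\mathrm{Beta}(d, N-d+1)$ distributed, so $\mathbb{P}^N\{V(\gamma^*) \le \varepsilon\} = \sum_{i=d}^{N}\binom{N}{i}\varepsilon^i(1-\varepsilon)^{N-i}$ and hence $\mathbb{P}^N\{V(\gamma^*) > \varepsilon\} = \sum_{i=0}^{d-1}\binom{N}{i}\varepsilon^i(1-\varepsilon)^{N-i}$; dropping the fully supported assumption replaces the equality by the claimed inequality. The main obstacle is the middle step: rigorously bounding the number of support constraints by $d$ via the Helly argument, and then justifying the conditioning that turns ``exactly $d$ support constraints'' into the binomial-tail expression while correctly handling degenerate configurations, which is precisely what the uniqueness and nonempty-interior clauses of Assumption \ref{assum:exist_unique} are there to exclude.
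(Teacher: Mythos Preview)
The paper does not give its own proof of this statement: it is quoted verbatim as Theorem~1 of \citep{campi2008exact} in the review appendix and then used as a black box in the proofs of Theorems~\ref{thm_certif} and~\ref{thm_optimality}. Your sketch is a faithful outline of the original Campi--Garatti argument (support constraints bounded by $d$ via a Helly-type step, exchangeability/conditioning on the support set, and the Beta$(d,N{-}d{+}1)$ identity in the fully supported case yielding the binomial tail, with stochastic domination handling the degenerate case), so there is nothing substantive to compare. You are also right to read the right-hand side as $\sum_{i=0}^{d-1}\binom{N}{i}\varepsilon^{i}(1-\varepsilon)^{N-i}$ and to interpret the hypothesis as ``no constraints discarded'' (i.e.\ $\mathcal{U}'_N=\mathcal{U}_N$); the paper's ``$|\mathcal{U}'_N|=0$'' and the missing binomial coefficients are evidently typos, since the companion Theorem~\ref{thm_scenario_discarding} and all downstream uses carry the binomial coefficients.
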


\begin{theorem}[Theorem 2.1 in \citep{campi2011sampling}]
    \label{thm_scenario_discarding}
    Under Assumptions \ref{assum:exist_unique} and \ref{assum:violates}, we have that
    \begin{align}
        \mathbb{P}^N \left\{ \mathbb{P}(\gamma^* \not \in K^u) > \varepsilon \right\} < \binom{k + d -1}{k}\sum_{i=0}^{k+d-1} \varepsilon^i (1 - \varepsilon)^{N - i},
    \end{align}
    where $\gamma^*$ is the solution to (\ref{SOP_template}).
\end{theorem}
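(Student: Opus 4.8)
The plan is to reduce the discarding bound to the no-discard result of Theorem~\ref{thm_scenario_exact} by exploiting the convex-geometric structure of the program~(\ref{SOP_template}), following the sampling-and-discarding argument of \citep{campi2011sampling}. Throughout I would write $V(\gamma) = \mathbb{P}(\gamma \notin K^u)$ for the violation probability of a candidate $\gamma$, and let $k = |\mathcal{U}_N \setminus \mathcal{U}'_N|$ denote the number of discarded samples. First I would recall the notion of a \emph{support constraint}: under Assumption~\ref{assum:exist_unique} the solution of~(\ref{SOP_template}) restricted to any finite index set is unique, so one may define the support subset as those retained constraints whose removal strictly increases the optimum. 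A Helly-type argument for convex programs living in $d$ dimensions then guarantees that $\gamma^*$ admits at most $d$ support constraints among the retained samples $\mathcal{U}'_N$.

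Next I would combine this with Assumption~\ref{assum:violates}, which forces every discarded sample to be violated by $\gamma^*$. The consequence is that $\gamma^*$ is uniquely pinned down by a configuration of at most $d + k$ samples: the (at most $d$) support constraints together with the $k$ violated, discarded ones. More precisely, for each index set $J \subseteq \{1,\dots,N\}$ of cardinality $d+k$ I would introduce the event that $\gamma^*$ is produced exactly by the constraints indexed by $J$ — retaining its support constraints and discarding the violated $k$ — and verify that, up to a $\mathbb{P}^N$-null set, these events exhaust the sample space. On each such event the solution coincides with the fully-supported (no-discard) solution of a scenario program built on the corresponding sub-sample, so Theorem~\ref{thm_scenario_exact} applies to control the conditional probability that $V(\gamma^*) > \varepsilon$, contributing the tail sum $\sum_{i=0}^{k+d-1} \varepsilon^i (1-\varepsilon)^{N-i}$.

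The decisive step is the combinatorial count that yields the prefactor $\binom{k+d-1}{k}$ rather than the naive $\binom{N}{k}$. Here I would argue that the \emph{optimal} removal of $k$ constraints, together with the uniqueness granted by Assumption~\ref{assum:exist_unique}, severely restricts how the discarded samples may be arranged relative to the $d$ support constraints: the admissible support-plus-removal patterns correspond to distributing $k$ removals across a $d$-dimensional support structure, of which there are exactly $\binom{k+d-1}{d-1} = \binom{k+d-1}{k}$. Summing the per-configuration base-case bounds over these patterns then produces the claimed inequality, and a useful consistency check is that at $k=0$ the prefactor collapses to $\binom{d-1}{0}=1$ and the sum to $\sum_{i=0}^{d-1}$, recovering Theorem~\ref{thm_scenario_exact} exactly. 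I expect this counting — establishing that the optimal-discarding rule admits only $\binom{k+d-1}{k}$ distinct configurations, with neither over- nor under-counting in degenerate cases where fewer than $d$ support constraints are active — to be the main obstacle; once the support-constraint machinery is in place, the reduction to Theorem~\ref{thm_scenario_exact} is structural and largely bookkeeping.
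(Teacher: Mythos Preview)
The paper does not prove this statement: Theorem~\ref{thm_scenario_discarding} is quoted verbatim as ``Theorem~2.1 in \citep{campi2011sampling}'' and used as a black-box ingredient in Appendix~\ref{appendix:scenario}, with no accompanying proof or sketch. Consequently there is no ``paper's own proof'' against which your proposal can be compared.

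That said, as a freestanding outline your sketch is broadly faithful to the argument in \citep{campi2011sampling}: the reduction to Theorem~\ref{thm_scenario_exact} via support constraints, the Helly-type bound of $d$ on the number of active constraints, and the use of Assumption~\ref{assum:violates} to ensure the discarded samples are genuinely violated are the right ingredients. The one place where your account is too loose is the counting step producing $\binom{k+d-1}{k}$. In the actual proof this prefactor does \emph{not} arise from ``distributing $k$ removals across a $d$-dimensional support structure'' but from a more delicate inductive argument on $k$: one shows that any optimal discarding rule can be realized by iteratively removing a single active support constraint $k$ times, and the multiset count of such removal sequences gives $\binom{k+d-1}{k}$. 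Your combinatorial description as stated would not obviously produce the right count, and degenerate cases (fewer than $d$ supports, ties) require the non-degeneracy and uniqueness hypotheses to be invoked carefully. If you intend to write this proof out, that induction on $k$ is where the real work lies; the rest, as you say, is bookkeeping.
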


\subsection{Derivation of Theorems \ref{thm_certif} and \ref{thm_optimality}}
\label{Appendix_thm12}
In this section, we prove Theorems \ref{thm_certif} and \ref{thm_optimality} based on scenario optimization program (SOP) formulations.
We first review the following lemma to show the necessary and sufficient condition of SPR causes.
\begin{lemma}[Lemma 21 in \citep{baier2022probability}]
    \label{lemma_SPR_cause_necessary_sufficient}
    For any pMDP $\mathcal{M}_u$, any $C \subset S$, and any $u_j \in \mathcal{V}$, $C$ is an SPR cause on $\mathcal{M}_{u_j}$ if and only if $C \subseteq \mathscr{C}_{S, u_j}$ and $C$ satisfies \textbf{(M)} in Def.\ref{def:SPR_cause}, where $\mathscr{C}_{S, u_j}$ is defined by (\ref{all_cause_states}).
\end{lemma}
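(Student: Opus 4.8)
The plan is to prove the two implications of the biconditional after first reducing it to a purely single-state question. Writing ``$C$ is an SPR cause'' as $(\mathcal{M}_{u_j},C)\models\textbf{(M)}\wedge\textbf{(S)}$, and observing that $\textbf{(M)}$ appears verbatim on the right-hand side, it suffices to prove that, for a set $C$ already satisfying $\textbf{(M)}$, the set-level condition $(\mathcal{M}_{u_j},C)\models\textbf{(S)}$ holds if and only if every $c\in C$ lies in $\mathscr{C}_{S,u_j}$, i.e.\ $(\mathcal{M}_{u_j},\{c\})\models\textbf{(S)}$. The only genuine difference between these two statements is the conditioning event: the set-level condition at $c$ conditions on $\neg C\,\mathrm{U}\,c$ (reaching $c$ as the \emph{first} $C$-state), whereas membership in $\mathscr{C}_{S,u_j}$ (see (\ref{all_cause_states}) and Def.\ \ref{def:SPR_cause}) conditions on $\neg\{c\}\,\mathrm{U}\,c=\Diamond c$ (reaching $c$ at all, possibly after visiting other states of $C$). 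Reconciling these two events is the crux of the argument.

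The main tool is the local, set-\emph{independent} characterization of the raising property supplied by Algorithm \ref{Alg_SinCheck}. By its soundness and completeness (Lemma 4 in \citep{baier2022probability}), $\{c\}$ being a single-state SPR cause — equivalently, for reachable $c$, membership in $\mathscr{C}_{S,u_j}$ — is decided by comparing $w_c=\mathrm{Pr}^{\min}_{\mathcal{M}_{u_j},c}(\Diamond E)$ with $q_{s_0}=\mathrm{Pr}^{\max}_{\mathcal{M}^{[c]}_{u_j},s_0}(\Diamond E)$ in the bypass MDP $\mathcal{M}^{[c]}_{u_j}$, together with the tie-break when $w_c=q_{s_0}$. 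The key point is that $\mathcal{M}^{[c]}_{u_j}$ modifies \emph{only} the state $c$, so this comparison is blind to the remaining states of $C$. I would show that the set-level inequality at $c$ collapses to the \emph{same} comparison. Two facts drive this: (i) by the strong Markov property, conditioning on either $\neg C\,\mathrm{U}\,c$ or $\Diamond c$ leaves the continuation from the first visit to $c$ governed by a residual policy, so $\mathrm{Pr}^\pi_{\mathcal{M}_{u_j}}(\Diamond E\mid\neg C\,\mathrm{U}\,c)$ is a mixture of continuation probabilities each at least $w_c$, with infimum over admissible $\pi$ equal to $w_c$; and (ii) writing $\mathrm{Pr}^\pi_{\mathcal{M}_{u_j}}(\Diamond E)$ as a mixture over the first-visited $C$-state, the raising inequality ``conditional $>$ unconditional'' reduces to ``conditional-at-$c$ $>$ complementary mass,'' whose worst case is the bypass optimum $q_{s_0}$.

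Concretely, for the backward direction I assume the strict comparison $w_c>q_{s_0}$ for every $c\in C$ and fix a policy $\pi$ with $\mathrm{Pr}^\pi(\neg C\,\mathrm{U}\,c)>0$; using (i) to lower-bound the conditional reach probability by $w_c$ and (ii) to upper-bound the complementary contribution by $q_{s_0}$, I obtain the strict inequality required by $\textbf{(S)}$. For the forward direction I argue by contraposition: if some $c\in C$ fails the local comparison, I construct a policy routing first-passage mass to $c$ with the $w_c$-minimizing continuation while the complementary paths attain the maximal bypass value, violating the set-level inequality at $c$. The hard part will be controlling the complementary paths in the set version — those that first reach some other $c'\in C$, or reach $E$ without passing through $c$ — so that their worst-case contribution still coincides with the single-state optimum $q_{s_0}$ rather than being inflated by the uncapped behavior at the other $C$-states; this is precisely where the fact that $\mathcal{M}^{[c]}_{u_j}$ touches only $c$ must be exploited. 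The remaining obstacle is the corner case $w_c=q_{s_0}$, where strictness of $\textbf{(S)}$ is decided not by the value comparison but by whether $c$ is reachable from $s_0$ in the maximal sub-MDP $\mathcal{M}^{\mathrm{max},[c]}_{u_j}$; I would mirror the reachability analysis of Lines 7--12 of Algorithm \ref{Alg_SinCheck} and show this tie-break is insensitive to the surrounding set $C$, thereby matching the set-level and singleton conditions and completing the equivalence.
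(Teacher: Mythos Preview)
The paper does not supply its own proof of this lemma: it is quoted verbatim as ``Lemma~21 in \citep{baier2022probability}'' and used as a black box in Appendix~\ref{Appendix_thm12}, with no accompanying argument. There is therefore nothing in the paper to compare your proposal against.

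On the substance of your proposal: the reduction you set up is the right one --- since \textbf{(M)} appears on both sides, the content is that, under \textbf{(M)}, the set-level condition \textbf{(S)} for $C$ holds iff the singleton condition \textbf{(S)} holds for each $\{c\}$ with $c\in C$. Routing this through the algorithmic characterization (Algorithm~\ref{Alg_SinCheck} / Lemma~4 of \citep{baier2022probability}) is a legitimate strategy, and your observation that $\mathcal{M}^{[c]}_{u_j}$ modifies only $c$ is exactly what makes the local comparison $w_c$ versus $q_{s_0}$ insensitive to the rest of $C$. That said, the route is somewhat circuitous: you are invoking a soundness/completeness result for an algorithm that was itself derived from the structural lemma you are trying to prove, so there is a risk of circularity depending on how Lemma~4 is established in \citep{baier2022probability}. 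A more self-contained argument would work directly with the Markov property: for any $\pi$ with $\mathrm{Pr}^\pi(\neg C\,\mathrm{U}\,c)>0$, the conditional $\mathrm{Pr}^\pi(\Diamond E\mid \neg C\,\mathrm{U}\,c)$ equals $\mathrm{Pr}^{\pi'}_{\mathcal{M}_{u_j},c}(\Diamond E)$ for the residual policy $\pi'$ from the first visit to $c$, and this is independent of $C$. Conversely, any policy witnessing a violation of the singleton condition at $c$ can be restricted to one that reaches $c$ without first entering $C\setminus\{c\}$ (this is where \textbf{(M)} is used), yielding a violation of the set-level condition. This avoids the detour through $\mathcal{M}^{[c]}_{u_j}$ and the tie-break analysis altogether.
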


For any MDP $\mathcal{M}_u$ and any $C \subset S$, we define $\gamma^u_C$ as 
\begin{align}
    \gamma^u_C = 
    \begin{cases}
        1 & \mbox{ if } C \mbox{ is an SPR cause on $\mathcal{M}_u$},\\
        0 & \mbox{ otherwise},
    \end{cases}
\end{align}
For any $C \subseteq S$, let $K^u_C = [0, \gamma^u_C]$. For any sequence of $N$ sampled parameters $\mathcal{U}_N \in \mathcal{V}^N$, we define $\mathcal{U}^C_{N,1} = \{ u \in \mathcal{U}_N \;|\; C \subseteq \mathscr{C}_{S,u}, (\mathcal{M}_\mathbb{P}, C) \models \textbf{(M)} \}$, where we treat the same elements in $\mathcal{U}_N$ as duplicates. Then, we define $\mathcal{U}^C_{N,0} = \mathcal{U}_N \setminus \mathcal{U}^C_{N,1}$, where $\mathscr{C}_{S,u}$ is defined as (\ref{all_cause_states}).
Then, we consider the following SOP:

\begin{align}
\label{SOP_SPR_cause}
    \max_{\gamma} \quad &\gamma \nonumber \\
    \mathrm{s. t.} \quad & \gamma \in  \bigcap_{u \in \mathcal{U}'_N} K^u_C,
    & 
\end{align}
where $\mathcal{U}'_N \subseteq \mathcal{U}_N$.
Let $\gamma^*_C$ be the solution of (\ref{SOP_SPR_cause}). Note that $\gamma^*_C = \min_{u \in \mathcal{U}'_N} \gamma^u_C$.

\begin{proof}[proof of Theorem \ref{thm_certif}]
For any $C \subset S$, we consider the two cases (1) $\mathcal{U}^C_{N,0} = \emptyset$ and (2) $\mathcal{U}^C_{N,0} \neq \emptyset$.

(1): We set $\mathcal{U}'_N = \mathcal{U}^C_{N,1}$. By Lemma \ref{lemma_SPR_cause_necessary_sufficient}, for any $u \in \mathcal{U}_{N, 1}^C$, $C$ is an SPR cause on $\mathcal{M}_u$. Thus, the intersection of all $K^u_C$ with $u \in \mathcal{U}^C_{N, 1}$ is $[0,1]$ and hence there exists a unique solution $\gamma^*_C = 1$ to (\ref{SOP_SPR_cause}). Thus, for any $\varepsilon > 0$, by Theorem \ref{thm_scenario_exact}, we have
\begin{align}
    \mathbb{P}^N \left\{ \mathbb{P}(\gamma^*_C \not\in K^u_C) > \varepsilon \right\} < (1 - \varepsilon)^N,
\end{align}
and hence
\begin{align}
    \mathbb{P}^N \left\{ \mathbb{P}(\gamma^*_C \in K^u_C) \geq 1 - \varepsilon \right\} \geq 1 - (1 - \varepsilon)^N,
\end{align}
where the internal probability $\mathbb{P}$ measures the frequency of $u$ for $K^u_C$.
We choose $\beta = 1 - (1 - \varepsilon)^N$ and hence
\begin{align}
    \mathbb{P}^N \left\{ \mathbb{P}(\gamma^u_C \geq \gamma^*_C) \geq (1 - \beta)^{1/N} \right\} \geq \beta.
\end{align}
We have that $ \gamma^*_C = 1$ and thus 
\begin{align}
    \mathbb{P}^N \left\{ \mathbb{P}(\gamma^u_C = 1) \geq (1 - \beta)^{1/N} \right\} \geq \beta.
\end{align}
Therefore, we have
\begin{align}
    \mathbb{P}^N \left\{ F(\mathcal{M}_\mathbb{P}, C) \geq (1- \beta)^N \right\} \geq \beta.
\end{align}

(2): For any $\mathcal{U}'_N \neq \emptyset$ in (\ref{SOP_SPR_cause}), the intersection of all $K^u_C$ with $u \in \mathcal{U}'_N$ is nonempty and there exists a unique solution $\gamma^*_C$ to (\ref{SOP_SPR_cause}). Thus, for any $\varepsilon > 0$ and any a-priori number of discarded sampled parameters $q = |\mathcal{U}_N \setminus \mathcal{U}'_N|$, by Theorem \ref{thm_scenario_discarding}, we have
\begin{align}
\label{eq_PACbound_case2}
    \mathbb{P}^N \left\{ \mathbb{P}(\gamma^*_C(q) \not\in K^u_C) \leq \varepsilon \right\} & = \mathbb{P}^N \left\{ \mathbb{P} (\gamma^*_C(q) \in K^u_C) \geq 1 - \varepsilon \right\} \nonumber \\
    & \geq 1 - \sum_{i=0}^{q} \binom{N}{i} \varepsilon^i (1 - \varepsilon)^{N-i}.
\end{align}
We set $t = 1 - \varepsilon$ and equate the right-hand side of (\ref{eq_PACbound_case2}) to $1 - (1- \beta)/N$. Then, we have
\begin{align}
\label{q_bound_for_F}
    \mathbb{P}^N \left\{ \mathbb{P} (\gamma^u_C \geq \gamma^*_C(q)) \geq t^*(q) \right\} \geq 1 - \frac{1 - \beta}{N},
\end{align}
where $t^*(q)$ is the solution of
\begin{align}
        \frac{1 - \beta}{N} = \sum_{i=0}^{q} \binom{N}{i} (1 - t)^i t^{N-i}.
\end{align}
We consider the case $q = |\mathcal{U}^C_{N,0}|$ to derive the lower bound of $F(\mathcal{M}_\mathbb{P}, C)$ for $\mathcal{U}'_N = \mathcal{U}^C_{N,1}$. Since the number $|\mathcal{U}_{N,0}^C|$ is not known before observing actual samples, we simultaneously consider the bound for all $q=1, \ldots, N-1$. By Bool's inequality and (\ref{q_bound_for_F}), we have
\begin{align}
    \mathbb{P}^N \left\{ \cap_{q=1}^{N} \mathcal{E}_q
\right\} & = 1 - \mathbb{P}^N \left\{ \cup_{q=1}^{N} \mathcal{E}_q^c \right\} \nonumber \\
         & \geq 1 - \sum_{q=1}^{N} \mathbb{P}^N \left\{ \mathcal{E}_q^c \right\} \nonumber \\
         & \geq \beta,
\end{align}
where $\mathcal{E}_q$ denotes the event $\mathbb{P} (\gamma^u_C \geq \gamma^*_C(q)) \geq t^*(q)$ and $\mathcal{E}_q^c$ is its complement event.
Clearly, we have
\begin{align}
\label{beta_bound_case2}
    \mathbb{P}^N \left\{ \mathbb{P} (\gamma^u_C \geq \gamma^*_C) \geq t^*(|\mathcal{U}^C_{N,0}|) \right\} & \geq \mathbb{P}^N \left\{ \forall q'=1,\ldots,N, \mathbb{P} (\gamma^u_C \geq \gamma^*_C) \geq t^*(q') \right\} \nonumber \\
    & \geq \beta. 
\end{align}
Note that we can take $\mathcal{U}'_N = \mathcal{U}^C_{N, 1}$ when $q = |\mathcal{U}^C_{N,0}|$. By Lemma \ref{lemma_SPR_cause_necessary_sufficient}, for any $u \in \mathcal{U}_{N, 1}^C$, $C$ is an SPR cause on $\mathcal{M}_u$. Thus, the intersection of all $K^u_C$ with $u \in \mathcal{U}^C_{N, 1}$ is $[0,1]$ and hence there exists a unique solution $\gamma^*_C = 1$ to (\ref{SOP_SPR_cause}) when $\mathcal{U}'_N = \mathcal{U}^C_{N,1}$. Thus, we have
\begin{align}
    \mathbb{P}^N \left\{ \mathbb{P} (\gamma^u_C \geq \gamma^*_C) = t^* (|\mathcal{U}^C_{N,0}|) \right\}
    & = \mathbb{P}^N \left\{ \mathbb{P} (\gamma^u_C = 1) \geq t^*(|\mathcal{U}^C_{N,0}|) \right\} \nonumber \\
    & = \mathbb{P}^N \left\{ F(\mathcal{M}_\mathbb{P}, C) \geq t^*(|\mathcal{U}^C_{N,0}|) \right\}.
\end{align}
Therefore, by (\ref{beta_bound_case2}), we have
\begin{align}
    \mathbb{P}^N \left\{ F(\mathcal{M}_\mathbb{P}, C) \geq t^*(|\mathcal{U}^C_{N,0}|) \right\} \geq \beta.
\end{align}
\end{proof}

We show the following lemmas before proving Theorem \ref{thm_optimality}.
For any path property $\varphi$ and any path $\rho$ of an MDP, we denote $\rho \models \varphi$ if and only if $\rho$ satisfies $\varphi$. Moreover, recall that we denote all finite paths on the pMDP as $\mathrm{FinPath}$.
\begin{lemma}
\label{lemma_path_implication}
    For any MDP $\mathcal{M}_u$, any $S' \subseteq S$, any SPR cause $C'$ on $\mathcal{M}_u$ that is a subset of $S'$, we have that
    \begin{align}
    \label{C_star_u_conditoin2}
        & \forall \rho \in \mathrm{FinPath}, \rho \models \Diamond C' \implies \Diamond C^*_{S',u},
    \end{align}
    where $C^*_{S',u}$ is defined by (\ref{canonical_cause}).
\end{lemma}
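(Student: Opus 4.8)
The plan is to prove a slightly more structural statement: along any path, the \emph{first} state lying in $\mathscr{C}_{S',u}$ already belongs to the front $C^*_{S',u}$. The lemma then follows immediately, because $\rho \models \Diamond C'$ forces $\rho$ to enter $\mathscr{C}_{S',u}$ at all.

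First I would record that $\mathscr{C}_{S',u} = S' \cap \mathscr{C}_{S,u}$, which is immediate from~(\ref{all_cause_states}) since the condition $(\mathcal{M}_u,\{c\}) \models \textbf{(S)}$ makes no reference to the index set $S'$. Since $C'$ is an SPR cause on $\mathcal{M}_u$, Lemma~\ref{lemma_SPR_cause_necessary_sufficient} gives $C' \subseteq \mathscr{C}_{S,u}$, and combined with the hypothesis $C' \subseteq S'$ this yields $C' \subseteq \mathscr{C}_{S',u}$; hence any $\rho$ with $\rho \models \Diamond C'$ also satisfies $\rho \models \Diamond \mathscr{C}_{S',u}$. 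Now fix such a $\rho = s_0 a_0 s_1 a_1 \cdots$, a path of $\mathcal{M}_u$ so that $P_u(s_k,a_k,s_{k+1}) > 0$ for all $k$, take an index $j$ with $s_j \in C'$, and let $i \le j$ be the least index with $s_i \in \mathscr{C}_{S',u}$; set $c := s_i$. By minimality, $s_k \notin \mathscr{C}_{S',u}$ for every $k < i$, so the finite prefix $s_0 a_0 \cdots a_{i-1} s_i$ realizes $\neg \mathscr{C}_{S',u}\,\mathrm{U}\,c$ (vacuously if $i = 0$). Any policy $\pi$ that deterministically plays $a_0,\dots,a_{i-1}$ along this prefix assigns it strictly positive probability, because all its transitions are positive under $P_u$; therefore $\mathrm{Pr}^\pi_{\mathcal{M}_u}(\neg \mathscr{C}_{S',u}\,\mathrm{U}\,c) > 0$. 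Since in addition $c \in \mathscr{C}_{S',u}$, definition~(\ref{canonical_cause}) gives $c \in C^*_{S',u}$, so $\rho$ visits $C^*_{S',u}$ at step $i$, i.e.\ $\rho \models \Diamond C^*_{S',u}$.

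I expect the only delicate point to be the reading of ``$\rho \models \varphi$ on $\mathcal{M}_u$'': I would take $\rho$ to be an honest path of $\mathcal{M}_u$ (every transition positive under $P_u$), which is precisely what lets the chosen prefix be turned into a positive-probability event under a suitable policy — the kind of witness that~(\ref{canonical_cause}) demands. Under this reading everything else is bookkeeping on indices; the mild corner case is $i = 0$, i.e.\ $s_0 \in \mathscr{C}_{S',u}$, where $\neg \mathscr{C}_{S',u}\,\mathrm{U}\,c$ holds vacuously, so $s_0 \in C^*_{S',u}$ and the claim is trivial.
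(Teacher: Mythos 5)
Your proof is correct and takes essentially the same route as the paper's: both hinge on the observation that the first state of $\mathscr{C}_{S',u}$ encountered along a path of $\mathcal{M}_u$ must lie in the front $C^*_{S',u}$, because the path prefix can be turned into a positive-probability event under a policy that replays its actions, while $C'\subseteq\mathscr{C}_{S',u}$ follows from Lemma \ref{lemma_SPR_cause_necessary_sufficient} together with $C'\subseteq S'$. The only difference is presentational: you argue directly via the least index entering $\mathscr{C}_{S',u}$ (and make the path-to-policy witness explicit), whereas the paper phrases the same argument as a proof by contradiction.
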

\begin{proof}
    Suppose that $\rho \models \neg \Diamond C^*_{S',u} \land \Diamond C'$. This implies that there exists $s \in C'$ such that $\rho \models \neg C^*_{S',u} \mathrm{U} s$. 
    By (\ref{canonical_cause}), for any $c \in \mathscr{C}_{S', u} \setminus C^*_{S', u}$ and for any $\rho \in \mathrm{FinPath}$, we have $\rho \not\models \neg \mathscr{C}_{S', u} \mathrm{U} c$. Thus, there exist $c' \in \mathscr{C}_{S', u}$ with $c' \neq c$ and a path $\rho'$ such that $\rho' \models \Diamond ( c' \land \Diamond c)$ and $\rho' \models \neg \mathscr{C}_{S', u} \mathrm{U} c'$. This implies that, for any path $\rho$, we have $\rho \not\models \neg C^*_{S', u} \mathrm{U} c$ since $C^*_{S',u}$ is constructed from (\ref{canonical_cause}). However, this contradicts the assumption.
\end{proof}

\begin{lemma}
\label{lemma_recall_inequality}
    For any MDP $\mathcal{M}_u$, any $S' \subset S$, and any SPR cause $C \subseteq S'$, $C^*_{S',u}$ is recall optimal over $\mathcal{S}$ for any $\mathcal{S} \subseteq 2^{S'}$,
    where $C^*_{S', u}$ is defined as (\ref{canonical_cause}).
\end{lemma}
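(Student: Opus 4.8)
The goal is to show that $C^*_{S',u}$ is recall-optimal over any $\mathcal{S} \subseteq 2^{S'}$, meaning (by Definition \ref{def:recall_optimality}) that $C^*_{S',u}$ is an SPR cause and that for every $C' \in \mathcal{S}$ that is an SPR cause on $\mathcal{M}_u$, we have $\min_\pi \bigl( \mathrm{Pr}^\pi_{\mathcal{M}_u}(\Diamond C^*_{S',u} \mid \Diamond E) - \mathrm{Pr}^\pi_{\mathcal{M}_u}(\Diamond C' \mid \Diamond E) \bigr) \geq 0$. The plan is first to verify that $C^*_{S',u}$ is itself an SPR cause, and then to show the recall dominance pathwise, reducing it to Lemma \ref{lemma_path_implication}.

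First I would establish that $C^*_{S',u}$ is an SPR cause on $\mathcal{M}_u$. By construction \eqref{canonical_cause}, every $c \in C^*_{S',u}$ lies in $\mathscr{C}_{S',u}$, so $(\mathcal{M}_u,\{c\}) \models \textbf{(S)}$, which should give condition \textbf{(S)} for the whole set $C^*_{S',u}$ exactly as in the argument used for singleton causes in \citep{baier2022probability}; and the defining clause ``$\exists \pi$ s.t. $\mathrm{Pr}^\pi_{\mathcal{M}_u}(\neg \mathscr{C}_{S',u}\,\mathrm{U}\,c) > 0$'' in \eqref{canonical_cause}, combined with the fact that $C^*_{S',u} \subseteq \mathscr{C}_{S',u}$ so that $\neg \mathscr{C}_{S',u} \Rightarrow \neg C^*_{S',u}$, yields $\mathrm{Pr}^\pi_{\mathcal{M}_u}(\neg C^*_{S',u}\,\mathrm{U}\,c) > 0$, which is condition \textbf{(M)}. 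Hence by Lemma \ref{lemma_SPR_cause_necessary_sufficient} (or directly from Definition \ref{def:SPR_cause}), $C^*_{S',u}$ is an SPR cause.

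Next, fix any $C' \in \mathcal{S}$ that is an SPR cause on $\mathcal{M}_u$; since $\mathcal{S} \subseteq 2^{S'}$, we have $C' \subseteq S'$. By Lemma \ref{lemma_path_implication}, every finite path $\rho$ with $\rho \models \Diamond C'$ also satisfies $\rho \models \Diamond C^*_{S',u}$. Consequently, for any policy $\pi$, the event $\{\Diamond C'\}$ is (up to null sets) contained in the event $\{\Diamond C^*_{S',u}\}$ as measurable path sets, so $\mathrm{Pr}^\pi_{\mathcal{M}_u}(\Diamond C^*_{S',u} \wedge \Diamond E) \geq \mathrm{Pr}^\pi_{\mathcal{M}_u}(\Diamond C' \wedge \Diamond E)$. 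Dividing both sides by $\mathrm{Pr}^\pi_{\mathcal{M}_u}(\Diamond E)$ when this is positive (and using the convention that the conditional recall is $0$ when $\mathrm{Pr}^\pi_{\mathcal{M}_u}(\Diamond E) = 0$, in which case both recalls are $0$ and the inequality is trivial), we get $\mathrm{Pr}^\pi_{\mathcal{M}_u}(\Diamond C^*_{S',u} \mid \Diamond E) \geq \mathrm{Pr}^\pi_{\mathcal{M}_u}(\Diamond C' \mid \Diamond E)$ for all $\pi$. Taking the minimum over $\pi$ of the difference gives the required nonnegativity, so \eqref{PR-recall_optimal} holds and $C^*_{S',u}$ is recall-optimal over $\mathcal{S}$.

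The only genuine subtlety I anticipate is the measure-theoretic step: turning the pathwise implication from Lemma \ref{lemma_path_implication} (stated for finite paths) into an inclusion of the measurable cylinder-generated events $\{\Diamond C'\} \subseteq \{\Diamond C^*_{S',u}\}$ on infinite paths, and handling the conditioning degeneracy $\mathrm{Pr}^\pi_{\mathcal{M}_u}(\Diamond E) = 0$ cleanly via the stated convention. Everything else is bookkeeping: the SPR-cause verification of $C^*_{S',u}$ reuses the singleton-cause machinery of \citep{baier2022probability}, and the recall comparison is just monotonicity of probability under event inclusion.
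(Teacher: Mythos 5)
Your proposal is correct and follows essentially the same route as the paper: the paper's proof also reduces the claim to Lemma \ref{lemma_path_implication}, obtaining $\mathrm{Pr}^\pi_{\mathcal{M}_u}(\Diamond C' \land \Diamond E) \leq \mathrm{Pr}^\pi_{\mathcal{M}_u}(\Diamond C^*_{S',u} \land \Diamond E)$ for every policy and every SPR cause $C' \in \mathcal{S}$, and concludes from there. You merely spell out steps the paper leaves implicit (that $C^*_{S',u}$ is itself an SPR cause via Lemma \ref{lemma_SPR_cause_necessary_sufficient}, and the division by $\mathrm{Pr}^\pi_{\mathcal{M}_u}(\Diamond E)$ with the degenerate case), which is fine.
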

\begin{proof}
    By Lemma \ref{lemma_path_implication}, for any policy $\pi$ and any SPR cause $C' \in \mathcal{S}$ on $\mathcal{M}_u$, we have that
    \begin{align}
        & \mathrm{Pr}^\pi_{\mathcal{M}_u}(\Diamond C' \land \Diamond E) \leq \mathrm{Pr}^\pi_{\mathcal{M}_u}(\Diamond C^*_{S',u} \land\Diamond E).
    \end{align} 
    This yields the claim.
\end{proof}

\begin{lemma}
\label{lemma_recall_sufficient}
    For any MDP $\mathcal{M}_u$, any $S' \subseteq S$, and any $C \subseteq \mathscr{C}_{S', u}$ that satisfies \textbf{(M)} in Def. \ref{def:SPR_cause}, if (\ref{recall_sufficient}) holds, then $C$ is recall-optimal on $\mathcal{M}_{u}$ over $\mathcal{S} = 2^{S'}$.
    \begin{align}
    \label{recall_sufficient}
        & \forall \rho \in \mathrm{FinPath}, \rho \models \Diamond C^*_{S', u} \land \Diamond E \implies \Diamond C,
    \end{align}
    where $\mathscr{C}_{S',u}$ is defined by (\ref{all_cause_states}), and $C^*_{S',u}$ is defined by (\ref{canonical_cause}).
\end{lemma}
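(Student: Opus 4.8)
The plan is to check the two requirements of Definition~\ref{def:recall_optimality} directly: that $C$ is an SPR cause on $\mathcal{M}_u$, and that for every SPR cause $C' \in 2^{S'}$ the recall of $C$ dominates that of $C'$ under every policy. For the first requirement, note that since $S' \subseteq S$, definition (\ref{all_cause_states}) immediately gives $\mathscr{C}_{S',u} \subseteq \mathscr{C}_{S,u}$; hence the hypothesis $C \subseteq \mathscr{C}_{S',u}$ implies $C \subseteq \mathscr{C}_{S,u}$, and together with the assumed condition $(\mathcal{M}_u, C) \models \textbf{(M)}$, Lemma~\ref{lemma_SPR_cause_necessary_sufficient} yields that $C$ is an SPR cause on $\mathcal{M}_u$.

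The core step is to fix an arbitrary SPR cause $C' \in 2^{S'}$ (so $C' \subseteq S'$) and an arbitrary policy $\pi$, and to prove the event inclusion $\{\rho : \rho \models \Diamond C' \land \Diamond E\} \subseteq \{\rho : \rho \models \Diamond C \land \Diamond E\}$ on the (maximal) paths of $\mathcal{M}_u$. Given such a $\rho$: because $\rho$ visits $C'$ at some finite position, a finite prefix of $\rho$ satisfies $\Diamond C'$; by Lemma~\ref{lemma_path_implication} (applicable since $C'$ is an SPR cause contained in $S'$) that prefix also satisfies $\Diamond C^*_{S',u}$, hence $\rho \models \Diamond C^*_{S',u}$. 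Since $\rho$ also reaches $E$ at a finite position, a sufficiently long finite prefix of $\rho$ satisfies $\Diamond C^*_{S',u} \land \Diamond E$, so the hypothesis (\ref{recall_sufficient}) forces that prefix to satisfy $\Diamond C$, whence $\rho \models \Diamond C$. The main obstacle I anticipate is exactly this bookkeeping between the finite-path statements of Lemma~\ref{lemma_path_implication} and (\ref{recall_sufficient}) and the infinite/maximal paths that carry the probability measure $\mathrm{Pr}^\pi_{\mathcal{M}_u}$: one has to argue carefully that reachability of a state set is always witnessed on a finite prefix and is preserved under prefix extension, and to treat terminal states in $E$ consistently with the standard measure construction (e.g.\ via self-loops on terminal states).

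Finally, monotonicity of the measure $\mathrm{Pr}^\pi_{\mathcal{M}_u}$ applied to the above inclusion gives $\mathrm{Pr}^\pi_{\mathcal{M}_u}(\Diamond C' \land \Diamond E) \le \mathrm{Pr}^\pi_{\mathcal{M}_u}(\Diamond C \land \Diamond E)$ for every $\pi$. When $\mathrm{Pr}^\pi_{\mathcal{M}_u}(\Diamond E) > 0$, dividing through by this quantity yields $\mathrm{Pr}^\pi_{\mathcal{M}_u}(\Diamond C \mid \Diamond E) \ge \mathrm{Pr}^\pi_{\mathcal{M}_u}(\Diamond C' \mid \Diamond E)$; when $\mathrm{Pr}^\pi_{\mathcal{M}_u}(\Diamond E) = 0$, both conditional recalls are $0$ by the convention in Definition~\ref{def:recall_optimality}, so the inequality is trivial. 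Hence $\min_{\pi}\bigl(\mathrm{Pr}^\pi_{\mathcal{M}_u}(\Diamond C \mid \Diamond E) - \mathrm{Pr}^\pi_{\mathcal{M}_u}(\Diamond C' \mid \Diamond E)\bigr) \ge 0$ for every SPR cause $C' \in 2^{S'}$, which, combined with $C$ being an SPR cause, is precisely the statement that $C$ is recall-optimal on $\mathcal{M}_u$ over $\mathcal{S} = 2^{S'}$.
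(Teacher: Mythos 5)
Your proof is correct and follows essentially the same route as the paper: you verify SPR-causehood via Lemma~\ref{lemma_SPR_cause_necessary_sufficient} and obtain recall dominance over every SPR cause $C' \subseteq S'$ by combining Lemma~\ref{lemma_path_implication} with hypothesis (\ref{recall_sufficient}). The only cosmetic difference is that you compose the two implications at the path level before passing to probabilities, whereas the paper chains the resulting conditional-probability inequalities through the recall optimality of $C^*_{S',u}$ (Lemma~\ref{lemma_recall_inequality}), which amounts to the same argument.
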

\begin{proof}
    By (\ref{recall_sufficient}), we have
    \begin{align}
    \label{recall_inequality}
        \forall \pi, \mathrm{Pr}^\pi_{\mathcal{M}_{u_j}}(\Diamond C^*_{S', u_j} \;|\; \Diamond E) \leq \mathrm{Pr}^\pi_{\mathcal{M}_{u_j}}(\Diamond C \;|\; \Diamond E).
    \end{align}
    By Lemma \ref{lemma_recall_inequality}, $C^*_{S', u_j}$ is recall optimal on $\mathcal{M}_{u_j}$ over $\mathcal{S}$. Thus, by (\ref{recall_inequality}), for any SPR cause $C' \subseteq S'$, we have
    \begin{align}
        \forall \pi, \mathrm{Pr}^\pi_{\mathcal{M}_{u_j}}(\Diamond C' \;|\; \Diamond E) \leq \mathrm{Pr}^\pi_{\mathcal{M}_{u_j}}(\Diamond C^*_{S', u_j} \;|\; \Diamond E) \leq \mathrm{Pr}^\pi_{\mathcal{M}_{u_j}}(\Diamond C \;|\; \Diamond E).
    \end{align}
    By Lemma \ref{lemma_SPR_cause_necessary_sufficient}, $C$ is SPR cause on $\mathcal{M}_{u_j}$.
    Therefore, $C$ is recall optimal on $\mathcal{M}_{u_j}$ over $\mathcal{S}$.
\end{proof}

For any MDP $\mathcal{M}_u$ and any $\mathcal{C}, \mathcal{S} \subseteq 2^S$, we define ${\xi}^u_{\mathcal{C}, \mathcal{S}}$ as
\begin{align}
\label{xi_u_CS}
    {\xi}^u_{\mathcal{C}, \mathcal{S}} =
    \begin{cases}
        1 & \mbox{ if } \exists C \in \mathcal{C} \mbox{ s.t. } C \mbox{ is recall-optimal on $\mathcal{M}_u$ over $\mathcal{S}$ },\\
        0 & \mbox{ otherwise}.
    \end{cases}
\end{align}
For any $\mathcal{C}, \mathcal{S} \subseteq 2^S$, we define the subset of sampled parameters $\mathcal{U}^{\mathcal{C}, \mathcal{S}}_{N, 1}$ as
\begin{align}
\label{U_CS'_N1}
    \mathcal{U}^{\mathcal{C}, \mathcal{S}}_{N, 1} = \{ u \in \mathcal{U}_N \;|\; &\exists C \in \mathcal{C} \mbox{ s.t. }
    C \subseteq \mathscr{C}_{S',u}, (\mathcal{M}_u, C) \models \textbf{(M)}, \nonumber \\
    &\forall \rho \in \mathrm{FinPath}, \rho \models \Diamond C^*_{S',u} \land \Diamond E \implies \Diamond C\},
\end{align}
where, we treat the same elements in $\mathcal{U}_N$ as duplicates, and $\mathscr{C}_{S', u}$ and $C^*_{S',u}$ are defined by (\ref{all_cause_states}) and (\ref{canonical_cause}) with $S' = \cup_{C' \in \mathcal{S}} C'$, respectively. Then, we define $\mathcal{U}_{N, 0}^{\mathcal{C}, \mathcal{S}}$ as $\mathcal{U}_N \setminus \mathcal{U}_{N,1}^{\mathcal{C}, \mathcal{S}}$.
To prove Theorem \ref{thm_optimality}, we consider the following SOP:
\begin{align}
\label{SCP_for_thm2}
    \max_{{\xi}} \quad & {\xi} \nonumber \\
    \mathrm{s. t.} \quad & {\xi} \in  \bigcap_{u \in \mathcal{U}'_N} {J}^u_{\mathcal{C}, \mathcal{S}},
\end{align}
where ${J}^u_{\mathcal{C}, \mathcal{S}} = [0, {\xi}^u_{\mathcal{C}, \mathcal{S}}]$ and $\mathcal{U}'_N \subseteq \mathcal{U}_N$. We denote the solution of (\ref{SCP_for_thm2}) by ${\xi}^*_{\mathcal{C}, \mathcal{S}}$. Note that ${\xi}^*_{\mathcal{C}, \mathcal{S}} = \min_{u \in \mathcal{U}'_N} {\xi}^u_{\mathcal{C}, \mathcal{S}}$.
\begin{proof}[Proof of Theorem \ref{thm_optimality}]
    For any $\mathcal{C}, \mathcal{S} \subset 2^S$, we consider the two case (1) $\mathcal{U}_{N,0}^{\mathcal{C}, \mathcal{S}} = \emptyset$ and (2) $\mathcal{U}_{N,0}^{\mathcal{C}, \mathcal{S}} \neq \emptyset$.

    (1): We set $\mathcal{U}'_N = \mathcal{U}^{\mathcal{C}, \mathcal{S}}_{N, 1}$. By Lemma \ref{lemma_recall_sufficient}, for any $u \in \mathcal{U}'_N$, there exists $C \in \mathcal{C}$ such that $C$ is recall-optimal on $\mathcal{M}_u$ over $\mathcal{S}$. Thus, the intersection of all $J^u_{\mathcal{C}, \mathcal{S}}$ with $u \in \mathcal{U}'_N$ is $[0,1]$ and hence there exists a unique solution $\xi^*_{\mathcal{C},\mathcal{S}} = 1$ to (\ref{SOP_SPR_cause}). Thus, for any $\varepsilon > 0$, by Theorem \ref{thm_scenario_exact}, we have
    \begin{align}
        \mathbb{P}^N \left\{ \mathbb{P} (\xi^*_{\mathcal{C}, \mathcal{S}} \in J^u_{\mathcal{C}, \mathcal{S}}) \geq 1 - \varepsilon \right\} \geq 1 - (1 - \varepsilon)^N,
    \end{align}
    where the internal probability $\mathbb{P}$ measures the frequency of $u$ for $J^u_{\mathcal{C}, \mathcal{S}}$.
    We choose $\beta = 1 - (1 - \varepsilon)^N$ and hence
    \begin{align}
        \mathbb{P}^N \left\{ \mathbb{P}(\xi^u_{\mathcal{C}, \mathcal{S}} \geq \xi^*_{\mathcal{C}, \mathcal{S}}) \geq (1 - \beta)^{1/N} \right\} \geq \beta.
    \end{align}
    We have that $\xi^*_{\mathcal{C}, \mathcal{S}} = 1$ and thus
    \begin{align}
        \mathbb{P}^N \left\{ \mathbb{P}(\xi^u_{\mathcal{C}, \mathcal{S}} = 1) \geq (1 - \beta)^{1/N} \right\} \geq \beta.
    \end{align}
    Therefore, we have
    \begin{align}
        \mathbb{P}^N \left\{ R(\mathcal{M}_\mathbb{P}, \mathcal{C}, \mathcal{S}) \geq (1 - \beta)^{1/N} \right\} \geq \beta.
    \end{align}

    (2): For any $\mathcal{U}'_N \neq \emptyset$ in (\ref{SCP_for_thm2}), the intersection of all $J^u_{\mathcal{C}, \mathcal{S}}$ with $u \in \mathcal{U}'_N$ is nonempty and there exists a unique solution $\xi^*_{\mathcal{C}, \mathcal{S}}$ to (\ref{SOP_SPR_cause}). Thus, for any $\varepsilon > 0$ and any a-priori number of discarded sampled parameters $q = |\mathcal{U}_N \setminus \mathcal{U}'_N|$, by Theorem \ref{thm_scenario_discarding}, we have
    \begin{align}
    \label{eq_recall_opt_PACbound_case2}
        \mathbb{P}^N \left\{ \mathbb{P}(\xi^*_{\mathcal{C}, \mathcal{S}}(q) \in J^u_{\mathcal{C}, \mathcal{S}}) \geq 1- \varepsilon \right\} \geq 1 - \sum_{i=0}^{q} \binom{N}{i} \varepsilon^i (1 - \varepsilon)^{N - i}.
    \end{align}
    We set $t = 1 - \varepsilon$ and equate the right-hand side of (\ref{eq_recall_opt_PACbound_case2}) to $1 - (1 - \beta)/N$, where $\beta \in (0,1)$. Then, we have
    \begin{align}
    \label{q_bound_for_R}
        \mathbb{P}^N \left\{ \mathbb{P}(\xi^u_{\mathcal{C}, \mathcal{S}} \geq \xi^*_{\mathcal{C}, \mathcal{S}}(q)) \geq t^*(q) \right\} \geq 1 - \frac{1-\beta}{N},
    \end{align}
    where $t^*(q)$ is the solution of
    \begin{align}
    \label{t_solution}
        \frac{1 - \beta}{N} = \sum_{i=0}^{q} \binom{N}{i} (1 - t)^i t^{N-i}.
    \end{align}
    We consider the case $q = |\mathcal{U}^C_{N,0}|$ to derive the lower bound of $R(\mathcal{M}_\mathbb{P}, \mathcal{C}, \mathcal{S})$ for $\mathcal{U}'_N = \mathcal{U}^{\mathcal{C}, \mathcal{S}}_{N,1}$. Since the number $|\mathcal{U}_{N,0}^{\mathcal{C}, \mathcal{S}}|$ is not known before observing actual samples, we consider simultaneously the bound for all $q=1, \ldots, N$. By Bool's inequality and (\ref{q_bound_for_R}), we have
    \begin{align}
        \mathbb{P}^N \left\{ \cap_{q=1}^{N} \mathcal{E}_q
    \right\} & = 1 - \mathbb{P}^N \left\{ \cup_{q=1}^{N} \mathcal{E}_q^c \right\} \nonumber \\
             & \geq 1 - \sum_{q=1}^{N} \mathbb{P}^N \left\{ \mathcal{E}_q^c \right\} \nonumber \\
             & \geq \beta,
    \end{align}
    where $\mathcal{E}_q$ denotes the event $\mathbb{P} (\xi^u_{\mathcal{C}, \mathcal{S}} \geq \xi^*_{\mathcal{C}, \mathcal{S}}) \geq t^*(q)$ and $\mathcal{E}_q^c$ is its complement event.
    Clearly, we have
    \begin{align}
    \label{beta_bound_case2_thm2}
        \mathbb{P}^N \left\{ \mathbb{P} (\xi^u_{\mathcal{C}, \mathcal{S}} \geq \xi^*_{\mathcal{C}, \mathcal{S}}) \geq t^*(|\mathcal{U}^{\mathcal{C},\mathcal{S}}_{N,0}|) \right\}
        & \geq \mathbb{P}^N \left\{ \forall q'=1,\ldots,N, \mathbb{P} (\xi^u_{\mathcal{C}, \mathcal{S}} \geq \xi^*_{\mathcal{C}, \mathcal{S}}) \geq t^*(q') \right\} \nonumber \\
        & \geq \beta. 
    \end{align}
    Note that we can take $\mathcal{U}'_N = \mathcal{U}^C_{N, 1}$ when $q = |\mathcal{U}^C_{N,0}|$. As in the case (1), by Lemma 21 in \citep{baier2022probability} and Lemma \ref{lemma_recall_inequality}, $\xi^*_{\mathcal{C}, \mathcal{S}} = 1$ holds. 
    Thus, we have
    \begin{align}
        \mathbb{P}^N \left\{ \mathbb{P} (\xi^u_{\mathcal{C}, \mathcal{S}} \geq \xi^*_{\mathcal{C}, \mathcal{S}}) = t^* (|\mathcal{U}^C_{N,0}|) \right\} 
        & = \mathbb{P}^N \left\{ \mathbb{P} (\xi^u_{\mathcal{C}, \mathcal{S}} = 1) \geq t^*(|\mathcal{U}^{\mathcal{C}, \mathcal{S}}_{N,0}|) \right\} \nonumber \\
        & = \mathbb{P}^N \left\{ R(\mathcal{M}_\mathbb{P}, \mathcal{C}, \mathcal{S}) \geq t^*(|\mathcal{U}^{\mathcal{C}, \mathcal{S}}_{N,0}|) \right\}.
    \end{align}
    Therefore, by (\ref{beta_bound_case2_thm2}), we have
    \begin{align}
        \mathbb{P}^N \left\{ R(\mathcal{M}_\mathbb{P}, \mathcal{C}, \mathcal{S}) \geq t^*(|\mathcal{U}^{\mathcal{C}, \mathcal{S}}_{N,0}|) \right\} \geq \beta.
    \end{align}
\end{proof}

\subsection{Derivation of Theorem \ref{thm_solution_certif}}
\label{Appendix_thm3}
In this section, we prove Theorem \ref{thm_solution_certif}. First, we show the following lemmas. 
Lemma \ref{lemma_recall_optimal_necessary_sufficient} states that the necessary and sufficient condition for any subset of states to be recall-optimal on any MDP $\mathcal{M}_u$.
Then, Lemma \ref{lemma_recall_optimality_max} states that the solution defined by (\ref{C_star}) accomplishes the maximum lower bound for the recall-optimal probability, that is the solution satisfies \textbf{(PC1)} in Problem \ref{problem:PACoptimal_diff}, based on Lemma \ref{lemma_recall_optimal_necessary_sufficient}.
Proposition \ref{prop_certify_PC2} with Lemmas \ref{lemma_recall_optimality_max} and \ref{lemma_recall_optimality_diff} assures that $\mathcal{C}^*$ defined by (\ref{C_star}) fulfills \textbf{(PC2)} in Problem \ref{problem:PACoptimal_diff}.

For simplicity, we abbreviate $C^*_{S_N, u_i}$ defined by (\ref{canonical_cause}) and $\mathscr{C}_{S_N, u_i}$ defined by (\ref{all_cause_states}) with $S_N = \cup_{C' \in \mathcal{S}_{N, \delta, \beta}} C'$ as $C^*_i$ and $\mathscr{C}_i$, respectively. 
Moreover, recall that we denote all finite paths on the pMDP as $\mathrm{FinPath}$.
For any $C, C', \mathscr{C} \subseteq S$, we consider the following condition:
\begin{align}
\label{partial_order_cause}
    \left( C \subseteq \mathscr{C} \right) \land \left( \forall \rho \in \mathrm{FinPath}, \rho \models \Diamond C' \land \Diamond E \implies \Diamond C \right).
\end{align}
We denote the first operand in (\ref{partial_order_cause}) as $\varphi_\mathrm{cause}(C, \mathscr{C})$ and the second operand in (\ref{partial_order_cause}) as $\varphi_\mathrm{recall}(C, C')$.

\begin{lemma}
\label{lemma_recall_optimal_necessary_sufficient}
    For any pMDP $\mathcal{M}_u$, any $S' \subset S$, any $C \subseteq S'$ that satisfies \textbf{(M)} in Def. \ref{def:SPR_cause}, and any $u_j \in \mathcal{V}$, $C$ is recall-optimal on $\mathcal{M}_{u_j}$ over $\mathcal{S} = 2^{S'}$ if and only if $\varphi_\mathrm{cause}(C, \mathscr{C}_{S', u_j}) \land \varphi_\mathrm{recall}(C, C^*_{S', u_j})$ holds, where $\mathscr{C}_{S', u_j}$ is defined by (\ref{all_cause_states}) and $C^*_{S', u_j}$ is defined by (\ref{canonical_cause}). 
\end{lemma}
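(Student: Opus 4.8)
The plan is to prove Lemma~\ref{lemma_recall_optimal_necessary_sufficient} as a biconditional, with the ``if'' direction essentially bundling the already-established Lemmas \ref{lemma_recall_inequality} and \ref{lemma_recall_sufficient}, and the ``only if'' direction requiring a fresh contrapositive argument.

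For the sufficiency direction, suppose $\varphi_\mathrm{cause}(C, \mathscr{C}_{S', u_j}) \land \varphi_\mathrm{recall}(C, C^*_{S', u_j})$ holds. Unpacking the definitions, $\varphi_\mathrm{cause}(C, \mathscr{C}_{S', u_j})$ is exactly $C \subseteq \mathscr{C}_{S', u_j}$, and $\varphi_\mathrm{recall}(C, C^*_{S', u_j})$ is exactly the path condition (\ref{recall_sufficient}) of Lemma~\ref{lemma_recall_sufficient}. Since $C$ satisfies \textbf{(M)} by hypothesis, Lemma~\ref{lemma_recall_sufficient} applies verbatim and yields that $C$ is recall-optimal on $\mathcal{M}_{u_j}$ over $\mathcal{S} = 2^{S'}$. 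So this direction is just a matter of noting that the two named predicates are literally the two hypotheses of the earlier lemma.

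For the necessity direction, suppose $C$ is recall-optimal on $\mathcal{M}_{u_j}$ over $2^{S'}$. By Definition~\ref{def:recall_optimality}, $C$ is then an SPR cause, so by Lemma~\ref{lemma_SPR_cause_necessary_sufficient} we get $C \subseteq \mathscr{C}_{S, u_j}$; combined with $C \subseteq S'$ this gives $C \subseteq \mathscr{C}_{S', u_j}$ (since $\mathscr{C}_{S',u_j} = \mathscr{C}_{S,u_j} \cap S'$ by (\ref{all_cause_states})), i.e.\ $\varphi_\mathrm{cause}(C, \mathscr{C}_{S', u_j})$ holds. It remains to derive the path condition $\varphi_\mathrm{recall}(C, C^*_{S', u_j})$, i.e.\ every finite path satisfying $\Diamond C^*_{S', u_j} \land \Diamond E$ also satisfies $\Diamond C$. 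The plan here is a contrapositive/contradiction argument: suppose there is a finite path $\rho$ with $\rho \models \Diamond C^*_{S', u_j} \land \Diamond E$ but $\rho \not\models \Diamond C$. I would use this path to build a policy $\pi$ (concentrating probability on the prefix of $\rho$ up to the $C^*_{S',u_j}$-visit, then behaving so as to eventually reach $E$ with positive probability) under which $\mathrm{Pr}^\pi_{\mathcal{M}_{u_j}}(\Diamond C^*_{S',u_j} \mid \Diamond E) > \mathrm{Pr}^\pi_{\mathcal{M}_{u_j}}(\Diamond C \mid \Diamond E)$, or at least under which the minimum over policies of the recall difference is negative. Since $C^*_{S',u_j}$ is itself an SPR cause belonging to $2^{S'}$ (it is a subset of $\mathscr{C}_{S',u_j} \subseteq S'$ and satisfies \textbf{(M)} by (\ref{canonical_cause})), this contradicts recall-optimality of $C$ via (\ref{PR-recall_optimal}).

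The main obstacle I expect is the policy-construction step in the necessity direction: I must ensure the witness path $\rho$ actually contributes to a strict violation of the recall inequality in the $\min_\pi$ sense of (\ref{PR-recall_optimal}), which requires care about conditioning on $\Diamond E$ when $\mathrm{Pr}^\pi(\Diamond E)$ could otherwise be small, and about the fact that $\rho$ avoids $C$ entirely. A cleaner route, which I would pursue first, is to avoid explicit policy construction and instead argue purely at the level of path sets: if $\rho \models \Diamond C^*_{S',u_j} \land \Diamond E \land \neg\Diamond C$ exists, then the cylinder set of $\rho$ witnesses that the set of $E$-reaching paths through $C^*_{S',u_j}$ is not contained in the set of $E$-reaching paths through $C$, so for a suitably chosen deterministic policy following $\rho$ the recall of $C$ is strictly below that of $C^*_{S',u_j}$, and recall-optimality of $C$ (which would force the reverse inequality via $C^*_{S',u_j} \in 2^{S'}$) is contradicted. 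Everything else is bookkeeping with the definitions of $\mathscr{C}$, $C^*$, \textbf{(M)}, and the earlier lemmas.
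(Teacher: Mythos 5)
Your proposal is correct and follows essentially the same route as the paper: the ``if'' direction is exactly Lemma \ref{lemma_recall_sufficient}, and the ``only if'' direction uses Lemma \ref{lemma_SPR_cause_necessary_sufficient} for $\varphi_\mathrm{cause}$ and a witness path $\rho \models \Diamond C^*_{S',u_j} \land \Diamond E \land \neg\Diamond C$ to force a strict recall gap against $C^*_{S',u_j}$, contradicting (\ref{PR-recall_optimal}). The only cosmetic difference is that you argue directly by contradiction while the paper argues contrapositively in two cases, and your ``cleaner route'' should explicitly invoke the inclusion from Lemma \ref{lemma_path_implication} (every $E$-reaching path through $C$ passes through $C^*_{S',u_j}$) to turn the witness cylinder into the strict probability inequality, just as the paper does via (\ref{path_set_inclusion}).
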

\begin{proof}
\begin{description}
    \item [$(\Leftarrow)$] This immediately follows from Lemma \ref{lemma_recall_sufficient}.
    \item [$(\Rightarrow)$] Suppose that $\neg \varphi_\mathrm{cause}(C, \mathscr{C}_{S', u_j}) \lor \neg \varphi_\mathrm{recall}(C, C^*_{S',u_j})$ holds. Then, we consider two cases (i) $\varphi_\mathrm{cause}(C, \mathscr{C}_{S', u_j})$ does not hold and (ii) $\varphi_\mathrm{cause}(C, \mathscr{C}_{S', u_j})$ holds but $\varphi_\mathrm{recall}(C, C^*_{S',u_j})$ does not hold.
    \begin{description}
        \item [(i)] By Lemma \ref{lemma_SPR_cause_necessary_sufficient}, $C$ is not SPR cause on $\mathcal{M}_{u_j}$, and hence $C$ is not recall optimal $\mathcal{M}_{u_j}$ over $\mathcal{S}$.
        \item [(ii)] By $\neg \varphi_\mathrm{recall}(C, C^*_{S', u_j})$, we have
        \begin{align}
        \label{path_baypass_C}
            \exists \rho \mbox{ s.t. } \rho \models \Diamond C^*_{S',u_j} \land \Diamond E \land \neg \Diamond C.
        \end{align}
        By $\varphi_\mathrm{cause}(C, \mathscr{C}_{S', u_j})$ and Lemma \ref{lemma_SPR_cause_necessary_sufficient}, $C$ is an SPR cause on $\mathcal{M}_u$, and hence, by Lemma \ref{lemma_recall_sufficient},
        \begin{align}
        \label{path_set_inclusion}
            \{ \rho \in \mathrm{FinPath} \;|\; \rho \models \Diamond C^*_{S',u_j} \land \Diamond E \} \supseteq \{ \rho \in \mathrm{FinPath} \;|\; \rho \models \Diamond C \land \Diamond E \}.
        \end{align}
        Thus, by (\ref{path_set_inclusion}) and (\ref{path_baypass_C}), we have
        \begin{align}
            \{ \rho \in \mathrm{FinPath} \;|\; \rho \models \Diamond C^*_{S',u_j} \land \Diamond E \} \supset \{ \rho \in \mathrm{FinPath} \;|\; \rho \models \Diamond C \land \Diamond E \}.
        \end{align}
        Hence, we have
        \begin{align}
            \exists \pi \mbox{ s.t. } \mathrm{Pr}^\pi_{\mathcal{M}_{u_j}}(\Diamond C^*_{S',u_j} \;|\; \Diamond E) > \mathrm{Pr}^\pi_{\mathcal{M}_{u_j}}(\Diamond C \;|\; \Diamond E).
        \end{align}
        Therefore, $C$ is not recall optimal $\mathcal{M}_{u_j}$ over $\mathcal{S}$.
    \end{description}
\end{description}
\end{proof}

\begin{lemma}
\label{lemma_recall_optimality_max}
    For any upMDP $\mathcal{M}_\mathbb{P}$, any $\delta, \beta \in [0,1]$, and any $N > 0$, $\mathcal{C}^*$ defined by (\ref{C_star}) satisfies the following two conditions:
    \begin{align}
        \label{complete_recall_optimality_Cstar}
        \forall u \in \mathcal{U}_N, \exists C \in \mathcal{C}^* \mbox{ s.t. } C \mbox{ is recall-optimal on } \mathcal{M}_u \mbox{ over } \mathcal{S}_{N, \delta, \beta},
    \end{align}
    and
    \begin{align}
    \label{bound_tightness_Cstar}
        {\zeta}_N(\mathcal{C}^*, \mathcal{S}_{N, \delta, \beta}, \beta) = (1 - \beta)^{1/N},
    \end{align}
    where $\zeta_N$ is defined as (\ref{zeta}).
\end{lemma}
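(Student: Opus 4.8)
The plan is to establish (\ref{complete_recall_optimality_Cstar}) from the set-covering property that defines the index set $\mathcal{I}$ together with Lemma \ref{lemma_recall_optimal_necessary_sufficient}, and then to read off (\ref{bound_tightness_Cstar}) as an immediate corollary, since it only asserts that the \emph{computed} bound $\zeta_N$ attains its maximal value $t^*(0,\beta)=(1-\beta)^{1/N}$.

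For (\ref{complete_recall_optimality_Cstar}) I would fix an arbitrary $u_j\in\mathcal{U}_N$. Because $\mathcal{I}$ satisfies $\bigcup_{i\in\mathcal{I}}\mathcal{U}^i_N=\mathcal{U}_N$, there is some $i\in\mathcal{I}$ with $u_j\in\mathcal{U}^i_N$. Unpacking the definition (\ref{recall_optimal_samples}) of $\mathcal{U}^i_N$, this says exactly that $C^*_{S_N,u_i}\subseteq\mathscr{C}_{S_N,u_j}$ and that $\rho\models\Diamond C^*_{S_N,u_j}\land\Diamond E\implies\Diamond C^*_{S_N,u_i}$ for every $\rho\in\mathrm{FinPath}$, i.e.\ $\varphi_\mathrm{cause}(C^*_{S_N,u_i},\mathscr{C}_{S_N,u_j})$ and $\varphi_\mathrm{recall}(C^*_{S_N,u_i},C^*_{S_N,u_j})$ in the notation of Appendix \ref{Appendix_thm3}. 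Together with the fact that the canonical cause $C^*_{S_N,u_i}$ satisfies \textbf{(M)} on $\mathcal{M}_{u_j}$, Lemma \ref{lemma_recall_optimal_necessary_sufficient} applied with $S'=S_N$ gives that $C^*_{S_N,u_i}$ is recall-optimal on $\mathcal{M}_{u_j}$ over $2^{S_N}$. Finally I would note that every member of $\mathcal{S}_{N,\delta,\beta}$ is a subset of $S_N=\bigcup_{C'\in\mathcal{S}_{N,\delta,\beta}}C'$, so $\mathcal{S}_{N,\delta,\beta}\subseteq 2^{S_N}$, and that recall-optimality over a collection only gets easier when the collection shrinks, because the quantifier $\forall C'\in\mathcal{S}$ in Definition \ref{def:recall_optimality} then ranges over fewer sets; hence $C^*_{S_N,u_i}$ is recall-optimal on $\mathcal{M}_{u_j}$ over $\mathcal{S}_{N,\delta,\beta}$, and since $C^*_{S_N,u_i}\in\mathcal{C}^*$ by (\ref{C_star}) this proves (\ref{complete_recall_optimality_Cstar}).

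For (\ref{bound_tightness_Cstar}), by (\ref{zeta}) and (\ref{t_star_maxmin}) it suffices to show $m(\mathcal{C}^*,\mathcal{S}_{N,\delta,\beta},\mathcal{U}_N)=N$, since then $\zeta_N(\mathcal{C}^*,\mathcal{S}_{N,\delta,\beta},\beta)=t^*(0,\beta)=(1-\beta)^{1/N}$. Reusing, for each $u_j\in\mathcal{U}_N$, the index $i\in\mathcal{I}$ found above, the three conditions appearing in the defining clause (\ref{m_count}) for the witness $C=C^*_{S_N,u_i}\in\mathcal{C}^*$ (with $S'=S_N$) are precisely $C^*_{S_N,u_i}\subseteq\mathscr{C}_{S_N,u_j}$, $(\mathcal{M}_{u_j},C^*_{S_N,u_i})\models\textbf{(M)}$, and $\rho\models\Diamond C^*_{S_N,u_j}\land\Diamond E\implies\Diamond C^*_{S_N,u_i}$ for all $\rho$ — all verified in the previous paragraph. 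Hence every sample is counted, $m=N$, and (\ref{bound_tightness_Cstar}) follows.

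I expect the only delicate step to be discharging the \textbf{(M)} hypothesis of Lemma \ref{lemma_recall_optimal_necessary_sufficient} (and of clause (\ref{m_count})) for $C^*_{S_N,u_i}$ on the \emph{other} sampled models $\mathcal{M}_{u_j}$, $j\neq i$: condition (\ref{recall_optimal_samples}) explicitly names only the inclusion and the path-implication, not \textbf{(M)} itself. Since \textbf{(M)} is the graph-theoretic requirement that each $c\in C^*_{S_N,u_i}$ be reachable from $s_0$ while avoiding $C^*_{S_N,u_i}\setminus\{c\}$, and $C^*_{S_N,u_i}\setminus\{c\}\subseteq\mathscr{C}_{S_N,u_i}\setminus\{c\}$ is contained in the (larger) avoidance set used in the definition (\ref{canonical_cause}) of $C^*_{S_N,u_i}$, it holds on $\mathcal{M}_{u_i}$ by construction and carries over to $\mathcal{M}_{u_j}$ under the standing assumption that all parameter instantiations share the same transition support; alternatively, one adds an explicit \textbf{(M)} check to (\ref{recall_optimal_samples}), which costs nothing since Algorithm \ref{Alg_PG_SPREst} already performs it. Everything else is bookkeeping with the definitions of $\zeta_N$, $t^*$, $\mathcal{S}_{N,\delta,\beta}$, and $\mathcal{C}^*$.
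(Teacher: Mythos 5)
Your argument is correct and follows essentially the same route as the paper's proof: the covering property of $\mathcal{I}$ supplies, for every sample $u_j \in \mathcal{U}_N$, a witness $C^*_{S_N,u_i}\in\mathcal{C}^*$ satisfying $\varphi_\mathrm{cause}$ and $\varphi_\mathrm{recall}$, Lemma \ref{lemma_recall_optimal_necessary_sufficient} then yields recall-optimality (the paper jumps directly to $\mathcal{S}_{N,\delta,\beta}$, whereas you go via $2^{S_N}$ and restrict, which is the cleaner reading), and $m(\mathcal{C}^*,\mathcal{S}_{N,\delta,\beta},\mathcal{U}_N)=N$ gives $\zeta_N(\mathcal{C}^*,\mathcal{S}_{N,\delta,\beta},\beta)=t^*(0,\beta)=(1-\beta)^{1/N}$. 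The \textbf{(M)} subtlety you flag is genuine—the paper's proof simply asserts \textbf{(M)} ``by (\ref{C_star})'' although (\ref{recall_optimal_samples}) does not list it—so your explicit discharge of it (it holds on $\mathcal{M}_{u_i}$ by the construction (\ref{canonical_cause}), and transfers to $\mathcal{M}_{u_j}$ under a shared-support assumption, or one adds the check that Algorithm \ref{Alg_PG_SPREst} performs anyway) is, if anything, more careful than the original.
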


\begin{proof}
By (\ref{C_star}), for any $u_j \in \mathcal{U}_N$, there exists $C_i \in \mathcal{C}^*$ such that $C_i \subseteq \cup_{C' \in \mathcal{S}_{N, \delta, \beta}} C'$, $C$ satisfies \textbf{(M)}, and $\varphi_\mathrm{cause}(C_i, \mathscr{C}_{j}) \land \varphi_\mathrm{recall}(C_i, C^*_j)$ holds. Hence, by Lemma \ref{lemma_recall_optimal_necessary_sufficient}, $C_i$ is recall-optimal on $\mathcal{M}_{u_j}$ over $\mathcal{S}_{N, \delta, \beta}$.
Thus, we have (\ref{complete_recall_optimality_Cstar}) and
\begin{align}
    m(\mathcal{C}^*, \mathcal{S}_{N,\delta,\beta}, \mathcal{U}_N) = |\mathcal{U}_N| = N,
\end{align}
where $m$ is he number of samples in $\mathcal{U}_N$ such that there exists $C \in \mathcal{C}^*$ that satisfies condition defined by(\ref{m_count}).
Therefore, since the maximum value of $t^*$ defined by (\ref{t_star}) is $t^*(0,\beta) = (1 - \beta)^{1/N}$ for any $\beta$ and any $N$, (\ref{bound_tightness_Cstar}) holds by (\ref{zeta}) and (\ref{complete_recall_optimality_Cstar}).
\end{proof}

By Lemma \ref{lemma_recall_optimality_max}, $\mathcal{C}^*$ defined by (\ref{C_star}) achieves the condition \textbf{(PC1)} in Problem \ref{problem:PACoptimal_diff} since the maximum value of $\zeta_N$ defined in (\ref{zeta}) is $(1-\beta)^{1/N}$.
\begin{lemma}
\label{lemma_recall_optimality_diff}
    For any upMDP $\mathcal{M}_\mathbb{P}$, any $\mathcal{C}, \mathcal{S} \subseteq 2^S$, and any $\mathcal{C}' \subseteq \mathcal{C}$, we have the following equation:
    \begin{align}
        R(\mathcal{M}_\mathbb{P}, \mathcal{C}, \mathcal{S}) - R(\mathcal{M}_\mathbb{P}, \mathcal{C}', \mathcal{S}) = \int_{\mathcal{V}} \xi_{\mathrm{diff}}(u, \mathcal{C}, \mathcal{C}', \mathcal{S}) \mathrm{d}\mathbb{P}(u),
    \end{align}
    where $\xi^\mathrm{diff} : \mathcal{V} \times 2^{2^S} \times 2^{2^S} \times 2^{2^S}$ is defined as
    \begin{align}
    \label{xi_diff}
        \xi_\mathrm{diff}(u, \mathcal{C}, \mathcal{C}', \mathcal{S}) = 
         \begin{cases}
            1 & \mbox{ if } \xi^u_{\mathcal{C}, \mathcal{S}} =1 \land \xi^u_{\mathcal{C}', \mathcal{S}} = 0, \\
            0 & \mbox{ otherwise},
        \end{cases}
    \end{align}
    where $\xi^u_{\mathcal{C}, \mathcal{S}}$ is defined by (\ref{xi_u_CS}).
\end{lemma}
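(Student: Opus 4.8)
The plan is to reduce the claimed identity to a pointwise statement about $\{0,1\}$-valued indicators and then invoke linearity of the integral. First I would note that the indicator $I'(u,\mathcal{C},\mathcal{S})$ appearing in the definition (\ref{recall_opt_prob}) of $R$ coincides, by construction, with $\xi^u_{\mathcal{C},\mathcal{S}}$ defined in (\ref{xi_u_CS}): both equal $1$ precisely when there exists $C\in\mathcal{C}$ that is recall-optimal on $\mathcal{M}_u$ over $\mathcal{S}$. Hence $R(\mathcal{M}_\mathbb{P},\mathcal{C},\mathcal{S})=\int_{\mathcal{V}}\xi^u_{\mathcal{C},\mathcal{S}}\,\mathrm{d}\mathbb{P}(u)$, and likewise $R(\mathcal{M}_\mathbb{P},\mathcal{C}',\mathcal{S})=\int_{\mathcal{V}}\xi^u_{\mathcal{C}',\mathcal{S}}\,\mathrm{d}\mathbb{P}(u)$.

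Next I would establish the monotonicity $\xi^u_{\mathcal{C}',\mathcal{S}}\le\xi^u_{\mathcal{C},\mathcal{S}}$ for every $u\in\mathcal{V}$. This is immediate from $\mathcal{C}'\subseteq\mathcal{C}$: any $C\in\mathcal{C}'$ witnessing recall-optimality is also an element of $\mathcal{C}$, so $\xi^u_{\mathcal{C}',\mathcal{S}}=1$ forces $\xi^u_{\mathcal{C},\mathcal{S}}=1$. Consequently, for each fixed $u$ the pair $(\xi^u_{\mathcal{C},\mathcal{S}},\xi^u_{\mathcal{C}',\mathcal{S}})$ can only be $(1,1)$, $(1,0)$, or $(0,0)$, the case $(0,1)$ being excluded. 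Checking these three possibilities shows that $\xi^u_{\mathcal{C},\mathcal{S}}-\xi^u_{\mathcal{C}',\mathcal{S}}$ equals $1$ exactly when $\xi^u_{\mathcal{C},\mathcal{S}}=1$ and $\xi^u_{\mathcal{C}',\mathcal{S}}=0$, and $0$ otherwise; that is, it agrees pointwise with $\xi_\mathrm{diff}(u,\mathcal{C},\mathcal{C}',\mathcal{S})$ as defined in (\ref{xi_diff}).

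Finally I would apply linearity of the integral with respect to $\mathbb{P}$. Both $\xi^u_{\mathcal{C},\mathcal{S}}$ and $\xi^u_{\mathcal{C}',\mathcal{S}}$ are bounded and measurable --- measurability is inherited from the fact that $R$ is well-defined by (\ref{recall_opt_prob}) --- so the two integrals may be subtracted and the difference pulled inside, yielding $R(\mathcal{M}_\mathbb{P},\mathcal{C},\mathcal{S})-R(\mathcal{M}_\mathbb{P},\mathcal{C}',\mathcal{S})=\int_{\mathcal{V}}(\xi^u_{\mathcal{C},\mathcal{S}}-\xi^u_{\mathcal{C}',\mathcal{S}})\,\mathrm{d}\mathbb{P}(u)=\int_{\mathcal{V}}\xi_\mathrm{diff}(u,\mathcal{C},\mathcal{C}',\mathcal{S})\,\mathrm{d}\mathbb{P}(u)$, which is the asserted equation.

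The argument is essentially bookkeeping; the only step requiring any thought is the monotonicity observation, which rules out the $(0,1)$ case and is exactly what makes the pointwise identity hold. Without the hypothesis $\mathcal{C}'\subseteq\mathcal{C}$ one would only obtain $\xi^u_{\mathcal{C},\mathcal{S}}-\xi^u_{\mathcal{C}',\mathcal{S}}\in\{-1,0,1\}$ and the stated formula would fail.
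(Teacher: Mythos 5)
Your proposal is correct and follows essentially the same route as the paper's proof: write the difference of the two recall-optimal probabilities as the integral of the difference of the indicator functions, use $\mathcal{C}'\subseteq\mathcal{C}$ to rule out the $(0,1)$ case pointwise so that the difference of indicators equals $\xi_\mathrm{diff}$, and conclude by linearity of the integral. The only distinction is that you spell out the identification $I'(u,\mathcal{C},\mathcal{S})=\xi^u_{\mathcal{C},\mathcal{S}}$ and the three-case check explicitly, which the paper leaves implicit.
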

\begin{proof}
    \begin{align}
        R(\mathcal{M}_\mathbb{P}, \mathcal{C}, \mathcal{S}) -  R(\mathcal{M}_\mathbb{P}, \mathcal{C}', \mathcal{S})
        & = \int_{\mathcal{V}} I'(u, \mathcal{C}, \mathcal{S}) - I'(u, \mathcal{C}', \mathcal{S})\mathrm{d}\mathbb{P}(u), \\
        \intertext{and $I'(u, \mathcal{C}, \mathcal{S}) - I'(u, \mathcal{C}', \mathcal{S})$ is nonnegative for all $u \in \mathcal{V}$ since $\mathcal{C}' \subseteq \mathcal{C}$. Thus,}
        & = \int_{\mathcal{V}} \xi_{\mathrm{diff}}(u, \mathcal{C}, \mathcal{C}', \mathcal{S}) \mathrm{d}\mathbb{P}(u).
    \end{align}
\end{proof}

Based on Lemmas \ref{lemma_recall_optimality_max}, \ref{lemma_recall_optimal_necessary_sufficient}, and \ref{lemma_recall_optimality_diff}, we show that $\mathcal{C}^*$ defined by (\ref{C_star}) achieves the condition \textbf{(PC2)} in Problem \ref{problem:PACoptimal_diff}.
In the following, we abbreviate $\mathcal{S}_{N, \delta, \beta}$ as $\mathcal{S}$, for simplicity.
For any $\mathcal{C}' \subset \mathcal{C}^*$, we define the subset of sampled parameters $\mathcal{U}^{\mathcal{C}^*, \mathcal{C}', \mathcal{S}}_{N, 1}$ as
\begin{align}
\label{U_C*C'Stilde_N1}
    \mathcal{U}^{\mathcal{C}^*, \mathcal{C}', \mathcal{S}}_{N, 1} = \mathcal{U}^{\mathcal{C}^*, \mathcal{S}}_{N, 1} \setminus \mathcal{U}^{\mathcal{C}', \mathcal{S}}_{N, 1},
\end{align}
where $\mathcal{U}^{\mathcal{C}, \mathcal{S}}_{N, 1}$ is defined by (\ref{U_CS'_N1}). Then, we define $\mathcal{U}_{N, 0}^{\mathcal{C}^*, \mathcal{C}', \mathcal{S}}$ as $\mathcal{U}_N \setminus \mathcal{U}_{N,1}^{\mathcal{C}^*, \mathcal{C}', \mathcal{S}}$.
We consider the following SOP:
\begin{align}
\label{SCP_for_thm3}
    \max_{\xi}\quad & \xi \nonumber \\
    \mathrm{s. t.} \quad & \xi \in  \bigcap_{u \in \mathcal{U}'_{N}} J^u_{\mathcal{C}^*, \mathcal{C}', \mathcal{S}},
\end{align}
where ${J}^u_{\mathcal{C}^*, \mathcal{C}', \mathcal{S}} = [0, \xi_\mathrm{diff}(u, \mathcal{C}^*, \mathcal{C}', \mathcal{S})]$, $\mathcal{U}'_N \subset \mathcal{U}_N$, and $\xi_\mathrm{diff}$ and $\xi^u_{\mathcal{C}, \mathcal{S}}$ are defined by (\ref{xi_diff}) and (\ref{xi_u_CS}), respectively. We denote the solution of (\ref{SCP_for_thm3}) by $\xi^*$. Note that $\xi^* = \min_{u \in \mathcal{U}'_{N}} \xi_\mathrm{diff}(u, \mathcal{C}^*, \mathcal{C}', \mathcal{S})$. For simplicity, we denote $\xi_\mathrm{diff}(u, \mathcal{C}^*, \mathcal{C}', \mathcal{S})$ as $\xi^u_\mathrm{diff}$.
\begin{proposition}
\label{prop_certify_PC2}
    For any upMDP $\mathcal{M}_\mathbb{P}$, any $N > 0$, any $\beta \in (0,1)$, if $\mathcal{C}^* \neq \emptyset$, then, for any proper subset $\mathcal{C}' \subset \mathcal{C}^*$, we have 
    \begin{align}
    \label{recall_optimal_inequality_diff}
        \mathbb{P}^N \left\{  R(\mathcal{M}_\mathbb{P}, \mathcal{C}^*, \mathcal{S}) > R(\mathcal{M}_\mathbb{P}, \mathcal{C}', \mathcal{S}) \right\} > \beta,
    \end{align}
    where $\mathcal{C}^*$ is defined by (\ref{C_star}), and $\mathcal{S}$ is defined by (\ref{S_delta_beta}).
\end{proposition}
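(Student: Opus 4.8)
The plan is to recast (\ref{recall_optimal_inequality_diff}) as a statement about the scenario optimization program (\ref{SCP_for_thm3}), in close analogy with the proofs of Theorems~\ref{thm_certif} and \ref{thm_optimality}; the only genuinely new ingredient is that the minimality built into the index set $\mathcal{I}$ of (\ref{C_star}) forces at least one sampled parameter to witness a strict gap between $R(\mathcal{M}_\mathbb{P},\mathcal{C}^*,\mathcal{S})$ and $R(\mathcal{M}_\mathbb{P},\mathcal{C}',\mathcal{S})$. First I would apply Lemma~\ref{lemma_recall_optimality_diff} to write $R(\mathcal{M}_\mathbb{P},\mathcal{C}^*,\mathcal{S}) - R(\mathcal{M}_\mathbb{P},\mathcal{C}',\mathcal{S}) = \int_{\mathcal{V}} \xi^u_\mathrm{diff}\,\mathrm{d}\mathbb{P}(u) = \mathbb{P}(\xi^u_\mathrm{diff}=1)$, where $\xi^u_\mathrm{diff}=\xi_\mathrm{diff}(u,\mathcal{C}^*,\mathcal{C}',\mathcal{S})$ indicates that some member of $\mathcal{C}^*$ is recall-optimal over $\mathcal{S}=\mathcal{S}_{N,\delta,\beta}$ on $\mathcal{M}_u$ while no member of $\mathcal{C}'$ is. Hence it suffices to bound $\mathbb{P}(\xi^u_\mathrm{diff}=1)$ from below by a strictly positive quantity with $\mathbb{P}^N$-probability exceeding $\beta$. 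For this I consider (\ref{SCP_for_thm3}) with support set $\mathcal{U}'_N=\{u\in\mathcal{U}_N : \xi^u_\mathrm{diff}=1\}$: its feasible region equals $[0,1]$, so the unique solution is $\xi^*=1$, and Assumption~\ref{assum:violates} is met because every discarded sample has feasible set $\{0\}$, which $\xi^*=1$ violates.

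The combinatorial core is to establish $\mathcal{U}'_N\neq\emptyset$. Write $\mathcal{C}'=\{C^*_{S_N,u_i}\mid i\in\mathcal{I}'\}$ with $\mathcal{I}'\subsetneq\mathcal{I}$. The minimality clause in (\ref{C_star}) gives $\bigcup_{i\in\mathcal{I}'}\mathcal{U}^i_N\neq\mathcal{U}_N$, so I pick $u_k\in\mathcal{U}_N\setminus\bigcup_{i\in\mathcal{I}'}\mathcal{U}^i_N$. By Lemma~\ref{lemma_recall_optimality_max}, $\mathcal{C}^*$ contains a member that is recall-optimal over $\mathcal{S}_{N,\delta,\beta}$ on $\mathcal{M}_{u_k}$, whence $\xi^{u_k}_{\mathcal{C}^*,\mathcal{S}}=1$. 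For the complementary half I would show that no $C\in\mathcal{C}'$ is recall-optimal over $\mathcal{S}_{N,\delta,\beta}$ on $\mathcal{M}_{u_k}$: for each $i\in\mathcal{I}'$, $u_k\notin\mathcal{U}^i_N$ means (through the characterization of (\ref{recall_optimal_samples}) supplied by Lemma~\ref{lemma_recall_optimal_necessary_sufficient}) that either $C^*_{S_N,u_i}\not\subseteq\mathscr{C}_{S_N,u_k}$, in which case $C^*_{S_N,u_i}$ is not even an SPR cause on $\mathcal{M}_{u_k}$ by Lemma~\ref{lemma_SPR_cause_necessary_sufficient}, or there is a finite path reaching $E$ through $C^*_{S_N,u_k}$ but avoiding $C^*_{S_N,u_i}$. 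In the latter situation I combine Lemma~\ref{lemma_path_implication} (every $\Diamond C^*_{S_N,u_i}\land\Diamond E$ path satisfies $\Diamond C^*_{S_N,u_k}$, and the witnessing path makes this inclusion strict) with the observation that every singleton $\{c\}$ with $c\in S_N$ lies in $\mathcal{S}_{N,\delta,\beta}$ — since $n(\{c\},\mathcal{U}_N)\ge n(C',\mathcal{U}_N)$ for any $C'\in\mathcal{S}_{N,\delta,\beta}$ containing $c$, so $\eta_N(\{c\},\beta)\ge\eta_N(C',\beta)>\delta$ — to exhibit a competitor in $\mathcal{S}_{N,\delta,\beta}$ whose recall on $\mathcal{M}_{u_k}$ strictly exceeds that of $C^*_{S_N,u_i}$ under some policy. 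Hence $\xi^{u_k}_{\mathcal{C}',\mathcal{S}}=0$, $u_k\in\mathcal{U}'_N$, and $|\mathcal{U}'_N|\ge1$.

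It then follows that the number $q$ of discarded samples satisfies $q\le N-1$, and moreover $q\ge1$ whenever $\mathcal{C}'\neq\emptyset$ (pick $i\in\mathcal{I}'$; by Lemma~\ref{lemma_recall_inequality}, $C^*_{S_N,u_i}$ is recall-optimal over $\mathcal{S}_{N,\delta,\beta}$ on $\mathcal{M}_{u_i}$, so $\xi^{u_i}_\mathrm{diff}=0$), while the case $\mathcal{C}'=\emptyset$ yields $\mathcal{U}'_N=\mathcal{U}_N$, i.e.\ $q=0$. Applying Theorem~\ref{thm_scenario_discarding} with $d=1$ for each a-priori $q\in\{1,\dots,N-1\}$ (or Theorem~\ref{thm_scenario_exact} directly when $q=0$), equating each scenario bound to $1-(1-\beta)/N$ exactly as in case~(2) of the proofs of Theorems~\ref{thm_certif} and \ref{thm_optimality}, and taking a union bound, yields $\mathbb{P}^N\{\mathbb{P}(\xi^u_\mathrm{diff}=1)\ge t^*(q,\beta)\}>\beta$ with $t^*(q,\beta)\ge t^*(N-1,\beta)>0$; the confidence is strictly above $\beta$ because at most $N-1$ scenario estimates are combined in the nontrivial case and each of Theorems~\ref{thm_scenario_exact} and \ref{thm_scenario_discarding} is a strict inequality. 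Consequently, with $\mathbb{P}^N$-probability exceeding $\beta$, $R(\mathcal{M}_\mathbb{P},\mathcal{C}^*,\mathcal{S}) - R(\mathcal{M}_\mathbb{P},\mathcal{C}',\mathcal{S}) = \mathbb{P}(\xi^u_\mathrm{diff}=1)\ge t^*(N-1,\beta)>0$, which is (\ref{recall_optimal_inequality_diff}).

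I expect the main obstacle to be the second half of the combinatorial step. The sets $\mathcal{U}^i_N$ are phrased through canonical causes and hence really encode recall-optimality over the full power set $2^{S_N}$, whereas $R$ and $\xi^u_\mathrm{diff}$ are defined relative to the strictly smaller collection $\mathcal{S}_{N,\delta,\beta}$; one must verify that a failure of recall-optimality over $2^{S_N}$ on $\mathcal{M}_{u_k}$ already entails a failure over $\mathcal{S}_{N,\delta,\beta}$ on $\mathcal{M}_{u_k}$, and this is precisely where the observation that $\mathcal{S}_{N,\delta,\beta}$ contains every singleton drawn from $S_N$ becomes indispensable. The rest — the reduction via Lemma~\ref{lemma_recall_optimality_diff}, the SOP set-up, and the union bound delivering the strict $\beta$-confidence — is routine once the proofs of Theorems~\ref{thm_certif} and \ref{thm_optimality} are available.
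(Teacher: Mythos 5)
Your overall route is the paper's: reduce via Lemma~\ref{lemma_recall_optimality_diff} to lower-bounding $\mathbb{P}(\xi^u_{\mathrm{diff}}=1)$, set up the SOP~(\ref{SCP_for_thm3}) with the non-discarded samples as support, obtain nonemptiness of that support from the minimality of the index set $\mathcal{I}$ in~(\ref{C_star}), and finish with Theorem~\ref{thm_scenario_discarding} plus a union bound over the a-priori discard count $q$, using $t^*(q,\beta)>0$ for $q\le N-1$. Those parts, including your (correct) observation that $n(\{c\},\mathcal{U}_N)\ge n(C',\mathcal{U}_N)$ for $c\in C'$ and hence every singleton drawn from $S_N$ lies in $\mathcal{S}_{N,\delta,\beta}$, are fine and in places more explicit than the paper.

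The gap is in the second half of your combinatorial step, i.e.\ in showing $\xi^{u_k}_{\mathcal{C}',\mathcal{S}}=0$ at the witnessing sample. You rightly flag that Lemma~\ref{lemma_recall_optimal_necessary_sufficient} characterizes recall-optimality over $2^{S_N}$ while $\xi^{u_k}_{\mathcal{C}',\mathcal{S}}$ concerns the smaller collection $\mathcal{S}_{N,\delta,\beta}$, but your proposed repair does not close this. From a witnessing path $\rho\models\Diamond C^*_{S_N,u_k}\land\Diamond E\land\neg\Diamond C$ you cannot conclude that some singleton $\{c^*\}$ with $c^*\in C^*_{S_N,u_k}$ satisfies $\mathrm{Pr}^{\pi}_{\mathcal{M}_{u_k}}(\Diamond \{c^*\}\mid\Diamond E)>\mathrm{Pr}^{\pi}_{\mathcal{M}_{u_k}}(\Diamond C\mid\Diamond E)$ for some policy: the strictness argument behind Lemma~\ref{lemma_recall_optimal_necessary_sufficient} (case (ii)) rests on the path-set inclusion~(\ref{path_set_inclusion}), i.e.\ every $\Diamond C\land\Diamond E$ path also visits the canonical front (Lemma~\ref{lemma_path_implication}), and no such inclusion holds for an individual singleton. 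Under any policy that gives $\rho$ positive probability, nature may still push most of the mass to $E$ through $C$ while avoiding $c^*$, so the singleton's recall need not exceed that of $C$; one extra path only yields dominance for a competitor that path-wise covers $C$, namely $C^*_{S_N,u_k}$ itself, and that set is not guaranteed to lie in $\mathcal{S}_{N,\delta,\beta}$ (its $\eta_N$ is only guaranteed to be at least $t^*(N-1,\beta)$, which may be below $\delta$). So the sentence ``exhibit a competitor in $\mathcal{S}_{N,\delta,\beta}$ whose recall strictly exceeds that of $C^*_{S_N,u_i}$'' is precisely the unproven step. The paper does not attempt this refinement: its proof takes the support set $\mathcal{U}^{\mathcal{C}^*,\mathcal{C}',\mathcal{S}}_{N,1}$ defined by the checkable conditions and applies Lemma~\ref{lemma_recall_optimal_necessary_sufficient} directly to each member of $\mathcal{C}'$, with the canonical front as the beating cause; to produce a complete proof matching the paper you should argue along those lines rather than through singleton competitors.
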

\begin{proof}[Proof of Proposition \ref{prop_certify_PC2}]
For any $\mathcal{U}'_N \neq \emptyset$ in (\ref{SCP_for_thm3}), the intersection of all $J^u_{\mathcal{C}^*, \mathcal{C}', \mathcal{S}}$ with $u \in \mathcal{U}'_N$ is nonempty and there exists a unique solution $\xi^*$ to (\ref{SCP_for_thm3}).
Thus, for any $\varepsilon > 0$ and any a-priori number of discarded sampled parameters $q = |\mathcal{U}_N \setminus \mathcal{U}'_N| < N$, by Theorem \ref{thm_scenario_discarding}, we have
    \begin{align}
    \label{eq_recall_opt_PACbound_joint}
        \mathbb{P}^N \left\{ \mathbb{P}(\xi^u_\mathrm{diff} \geq \xi^* ) \geq 1- \varepsilon
         \right\}
         \geq 1 - \sum_{i=0}^{q} \binom{N}{i} \varepsilon^i (1 - \varepsilon)^{N - i}.
    \end{align}
    We set $t = 1 - \varepsilon$ and equate the right-hand side of (\ref{eq_recall_opt_PACbound_joint}) to $1 - (1- \beta)/N$, where $\beta \in (0,1)$. Then, we have
    \begin{align}
    \label{q_bound_for_relativeR}
        \mathbb{P}^N \left\{ \mathbb{P}(\xi^u_\mathrm{diff} \geq \xi^* ) \geq t^*(q) \right\} \geq 1 - \frac{1 - \beta}{N},
    \end{align} 
    where $t^*(q)$ is the solution of
    \begin{align}
    \label{t_root_solution}
        \frac{1 - \beta}{N} = \sum_{i=0}^{q} \binom{N}{i} (1-t)^i t^{N - i},
    \end{align}
    By Bool's inequality, we have
    \begin{align}
    \label{Bool_inequality_relativeR}
        \mathbb{P}^N \left\{ \cap_{q=1}^{N} \mathcal{E}_q
    \right\} & \geq 1 - \sum_{q=1}^{N} \mathbb{P}^N \left\{ \mathcal{E}_q^c \right\} \nonumber \\
      & \geq \beta,
    \end{align}
    where $\mathcal{E}_q$ denotes the event $\mathbb{P}(\xi^u_\mathrm{diff} \geq \xi^*) \geq t(q)$ and $\mathcal{E}^c_q$ is its complement, and clearly
    \begin{align}
    \label{inequality_N_all_relativeR}
        \mathbb{P}^N \left\{ \mathcal{E}_{|\mathcal{U}^{\mathcal{C}^*, \mathcal{C}', \mathcal{S}}_{N,0}|} \right\} 
        \geq \mathbb{P}^N \left\{ \cap_{q=1}^{N} \mathcal{E}_q \right\}.
    \end{align}
    Note that we can take $\mathcal{U}'_N = \mathcal{U}^{\mathcal{C}^*, \mathcal{C}', \mathcal{S}}_{N,1}$ when $q = |\mathcal{U}^{\mathcal{C}^*, \mathcal{C}', \mathcal{S}}_{N,0}|$, and $\mathcal{U}'_N \neq \emptyset$ by (\ref{C_star}). Then, we have $\xi^u_{\mathcal{C}^*, \mathcal{S}} = 1$ and $\xi^u_{\mathcal{C}', \mathcal{S}} = 0$ for all $u \in \mathcal{U}^{\mathcal{C}^*, \mathcal{C}', \mathcal{S}}_{N,1}$ by Lemma \ref{lemma_recall_optimal_necessary_sufficient} and hence $\xi^* = 1$. Thus, by (\ref{Bool_inequality_relativeR}) and (\ref{inequality_N_all_relativeR}), we have
    \begin{align}
        \mathbb{P}^N \left\{ \mathbb{P}(\xi^u_\mathrm{diff} =  1) \geq t(|\mathcal{U}^{\mathcal{C}^*, \mathcal{C}', \mathcal{S}}_{N,0}|) \right\} & \geq \beta,
        \intertext{and $t(|\mathcal{U}^{\mathcal{C}^*, \mathcal{C}', \mathcal{S}}_{N,0}|) > 0$ holds since $\beta < 1$ and $|\mathcal{U}^{\mathcal{C}^*, \mathcal{C}', \mathcal{S}}_{N, 0}| < N$, thus,}
        \mathbb{P}^N \left\{ \mathbb{P}(\xi^u_\mathrm{diff} =  1) > 0 \right\} & \geq \beta,
    \end{align}
    Therefore , by Lemma \ref{lemma_recall_optimality_diff}, we have (\ref{recall_optimal_inequality_diff}).
\end{proof}

\begin{proof}[Proof of Theorem \ref{thm_solution_certif}]
    By Theorems \ref{thm_certif} and \ref{thm_optimality}, $\eta_N$ defined by (\ref{eta}) and $\zeta_N$ defined by (\ref{zeta}) are the solutions to Problem \ref{problem:PACbound}. Moreover, $\mathcal{C}^*$ fulfills \textbf{(PC1)} in Problem \ref{problem:PACoptimal_diff} by (\ref{bound_tightness_Cstar}) in Lemma \ref{lemma_recall_optimality_max} and fulfills \textbf{(PC2)} in Problem \ref{problem:PACoptimal_diff} by Proposition \ref{prop_certify_PC2}. Thus, we have the claim.
\end{proof}

\section{Computational Complexity of Algorithm \ref{Alg_PG_SPREst}}
\label{appendix:complexity}
The computational complexity of Algorithm 1 is in 
\begin{align}
    \mathcal{O} \left(N \cdot \mathrm{poly}(|S|) + N^2 |S| + 2^{|S|} \log{ \frac{1}{\varepsilon}} \right),
\end{align}
where $N$ is the number of sampled parameters, $|S|$ denotes the number of states, $\mathrm{poly}(n)$ denotes a polynomial function of $n$, and $\varepsilon$ is the approximation error for computing $\eta_N$, that is, $\varepsilon$ denotes the length of each sub-interval when dividing the unit interval $[0,1]$ with equal length to conduct bisection search for $\eta_N$.

The breakdown of the computational complexity is as follows. The complexity of computing the solution $\mathcal{C}^*$ to Problem 3 is in $\mathcal{O}(N \cdot \mathrm{poly}(|S|) + N^2|S|)$. The complexity of computing the lower bounds $\eta_N$ for all members $C \in \mathcal{C}^*$ is in $\mathcal{O}(2^{|S|} \cdot \log{\frac{1}{\varepsilon}})$ because we compute $\eta_N$ for all subsets $C \in 2^S$ in the worst case, and the complexity of bisection search to approximately compute $\eta_N$ is in $\mathcal{O}(\log{\frac{1}{\varepsilon}})$. However, in general, the computation of $\eta_N$ for all $C \in \mathcal{C}^*$ is not intensive. This is because the number of obtained potential causes is much smaller than $2^{|S|}$ in practice. For example, the number of potential causes obtained in Example is at most three. The complexity of computing $\zeta_N$ is in $\mathcal{O}(1)$ because $\zeta_N$ is immediately given as $(1 - \beta)^{1/N}$ by Theorem 3.

In what follows, we intuitively describe the details for the complexity to obtain the solution $\mathcal{C}^*$.

\begin{itemize}
    \item In Lines from 1 to 7:\\
    The complexity in Lines from 1 to 7 is in $\mathcal{O}(N \cdot \mathrm{poly}(|S|))$.
The complexity of Algorithm 2 is in $\mathcal{O}(\mathrm{poly}(|S|))$ by Lemma 4 in \citep{baier2022probability}. Also, please see \citep{baier2008principles} if you are not familiar with temporal model checking and its complexity. The complexity of probabilistic verifications for reachability properties in MDPs is in $\mathcal{O}(\mathrm{poly}(|S|))$ and is described by Corollary 10.107 in \citep{baier2008principles}, on which the result of Lemma 4 in \citep{baier2022probability} relies. $S_N$ can be computed by $S_N = \{ c \in S \;|\; \eta_N(\{c\}, \beta) \geq \delta \}$. This is because, for any $C \subset S$ and any $l \in \mathbb{R}$, satisfaction of $\eta_N(\{c\}, \beta) \geq l$ is a necessary condition for $\eta_N(C, \beta) \geq l$ by Definition 1, and hence $S_N$ can be constructed by gathering all states which satisfy the condition $\eta_N(\{c\}, \beta) \geq l$. So, the complexity for computing $S_N$ is in $\mathcal{O}(N \cdot |S|)$. Obviously, the computational complexity of obtaining $\mathscr{C}_{S_N, u}$ using $\tau$ is in $\mathcal{O}(|S|)$. The complexity of computing $C^*_{S_N, u}$ is in $\mathcal{O}(2^{|\phi_1|} \cdot |S|)$, where $ \phi_1 = \mathscr{C}_{S_N, u} \mathrm{U} c $ and $|\phi_1|$ denote the number of operators in $\phi_1$ and hence $|\phi_1| = 1$. This is because the complexity of model checking for a transition system (directed graph) and an LTL formula $\phi$ is in $\mathcal{O}(2^{|\phi|} \cdot |S|)$ by Theorem 3.3 in [Shonoebelen2002]. Thus, the computational complexity for $C^*_{S_N, u}$ is in $\mathcal{O}(|S|)$.

Hence, the complexity for Lines from 1 to 7 is in $\mathcal{O}(N \cdot \mathrm{poly}(|S|) )$.

\item In Lines from 9 to 11:\\
The complexity in Lines from 9 to 11 is in $\mathcal{O}(N^2 \cdot |S|)$.
For each $u_i \in \mathcal{U}_N$, the complexity of checking the condition in (22) for any $u_j \in \mathcal{U}_N$ is in $\mathcal{O}(|S|)$ by Theorem 3.3 in \citep{schnoebelen2002complexity}. Hence, the complexity to obtain $\mathcal{U}^i_N$ is in $\mathcal{O}(N \cdot |S|)$. So, the complexity of computing $\mathcal{U}^i_N$ for all $u_i \in \mathcal{U}_N$ is in $\mathcal{O}(N^2 \cdot |S|)$.

So, the overall computational complexity in Lines from 9 to 11 is in $\mathcal{O}(N^2 \cdot |S|)$.
\end{itemize}
Thus, the complexity of computing solution $\mathcal{C}^*$ is in $\mathcal{O}\left(N \cdot \mathrm{poly}(|S|) + N^2 |S|\right)$.

\section{Interpretability}
\label{appendix:interpretability}
Obtained SPR causes are interpreted through a probability-raising perspective and recall optimality as follows: They are subsets of states whose traversal increases the probability of undesired behavior in a certain portion of the parameter space of the upMDP, Moreover, the obtained collection of them cover the undesired paths maximally in most of the parameter space of the upMDP.

We can also give an interpretation similar to causal interventional interpretability \citep{moraffah2020causal}. For example, considering the following differences in conditional probabilities: $$\max_{\pi} \mathrm{Pr}^{\pi}_{\mathcal{M}_u} (\Diamond E \;|\; \Diamond C) - \mathrm{Pr}^{\pi}_{\mathcal{M}_u} (\Diamond E \;|\; \neg \Diamond C),$$ and $$\min_{\pi} \mathrm{Pr}^{\pi}_{\mathcal{M}_u}(\Diamond E \;|\; \Diamond C) - \mathrm{Pr}^{\pi}_{\mathcal{M}_u}(\Diamond E \;|\; \neg \Diamond C).$$

Intuitively, the differences in probability, respectively, represent how much passage of the cause $C$ affects the probability values of undesired behavior pessimistically and optimistically. These can be seen as variants of "what if" scenarios usually considered in a causal analysis context \citep{moraffah2020causal}. However, the above comparisons are not for an actual path but for the overall behavior of the system. 

As for larger systems modeled by upMDPs, such as the number of states that they have can reach millions, a possible difficulty for interpretability is the case where we have massive potential SPR causes, and hence we may fall into the situation where we struggle to grasp how much each potential cause affect the undesired behavior. In this case, we take various approaches to enhance ease of understanding for humans as follows:

\begin{enumerate}
    \item We calculate the aforementioned differences in probability for the union of potential causes rather than each potential cause. Then, we can interpret the differences in probability as how much the passage of any potential causes affects the probability of undesired behavior.
    \item We pick up the obtained potential causes to be interpreted. For example, we can choose $k$ potential causes from all of them in order of the values of lower bounds $\eta_N$ for SPR cause probability. Then, we can interpret the differences in probability as how much the passage of each top-$k$ potential cause affects the probability of undesired behavior.
\end{enumerate}

\section{Another example to comprehend how (\ref{recall_optimal_samples}) works}
\label{appendix:other_example}

\begin{example}
\label{example3}
We consider the upMDP depicted in Fig. \ref{example1:upMDP} with the set of parameter variables $u = [p, q]^\top$, the set of actions $A=\{a, b\}$, and the terminal bad state $E = \{s_3\}$. $p$ and $q$ follow the uniform distributions over $[0.11, 0.51]$ and $[0.3, 0.7]$, respectively.
\begin{figure}[htbp]
\centering
    \includegraphics[width=0.5\linewidth]{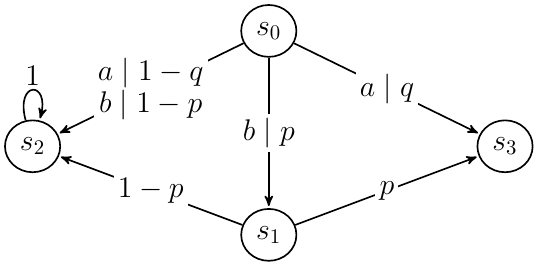}
\caption{The upMDP with the parameter variables $V=\{p, q\}$, the set of actions $A=\{ a, b\}$, and the terminal state $E = \{ s_3 \}$. The enabled action is only $a$ at states other than $s_0$ and thus we omit it.}
\label{example1:upMDP}
\end{figure}
\end{example}

In Fig. \ref{example1:upMDP}, we have no SPR cause when $q > p$. This is because there exists a policy $\pi$ such that $\pi(s_0, a) > (p - p^2 \pi(s_0, b) ) / q$, and hence we have 
\begin{align}
    \mathrm{Pr}^\pi_{\mathcal{M}_u} (\Diamond s_3) & = \pi(s_0, a) q + \pi(s_0, b) p^2 \nonumber \\
         & > p - p^2 \pi(s_0, b) + p^2 \pi(s_0, b) \nonumber \\
         & = \mathrm{Pr}^\pi_{\mathcal{M}_u}(\Diamond s_3 \;|\; \Diamond s_1)
\end{align}

However, when $p > q$, $\{s_1\}$ is the unique SPR cause. Let us explain the reason for this in two cases.

(1) When $q \geq p^2$, for any policy $\pi$, the probability of reaching $s_3$ from $s_0$ is $\mathrm{Pr}^\pi_{\mathcal{M}_u} (\Diamond s_3) = \pi(s_0, a) q + \pi(s_0, b) p^2$ and, since $q > p^2$, we have $\pi(s_0, a) q + \pi(s_0, b) p^2 < q$. So, we have $\mathrm{Pr}^\pi_{\mathcal{M}_u} (\Diamond s_3) < \mathrm{Pr}^\pi_{\mathcal{M}_u} (\Diamond s_3 \;|\; \Diamond s_1)$ since $q < p$. Thus, $\{s_1\}$ is the SPR cause.

(2) When $q < p^2$, the maximal probability of reaching $s_3$ from $s_1$ is $p^2$ and obviously $p^2 < p$. So, we have $\{s_1\}$ as the SPR cause.

We consider solving Problem \ref{problem:PACoptimal_diff} for the upMDP in Example \ref{example3}. Let $\delta=0.1$ and $\beta=0.99$. We sampled $N=1000$ parameters. We obtained a solution $\mathcal{C}^* = \{ \{s_1\} \}$ with $\mathcal{S}_{N, \delta, \beta}$ $ =\{ \{s_1\} \}$, where $\{s_1\}$ is the canonical SPR causes on the sampled $992$ MDPs with $p > q$. Obviously, $\mathcal{U}^i_N$ that corresponds to $\{s_1\}$ is equal to $\mathcal{U}_N$. According to Theorem \ref{thm_certif}, $\{s_1\}$ provides the lower bounds of SPR-cause probabilities $0.98$ with the confidence probability $\beta=0.99$, respectively, and they exceed $\delta$. For \textbf{(PC1)}, the lower bound $\zeta_N(\mathcal{C}^*, \mathcal{S}_{N, \delta, \beta}, \beta)$ of recall-optimal probability is the maximum value $(1 - \beta)^{1/N} = 0.995$. 
    For \textbf{(PC2)}, the subset $\mathcal{C} = \emptyset \subset \mathcal{C}^*$ clearly does not contain any SPR cause for all parameters. So, $R(\mathcal{M}_\mathbb{P}, \mathcal{C}^*, S_{N, \delta, \beta}) > R(\mathcal{M}_\mathbb{P}, \mathcal{C}, S_{N, \delta, \beta})$.

\section{Intuitive explanation for SPR causes on Example section}
\label{appendix:explain_example_sec}
The undesired behavior of the two environments depicted in \ref{ex:env1} and \ref{ex:env2} in Example is to enter the red cells. That is, the red cells are "undesired" terminal states. Our demonstration has aimed to nonredundantly and exhaustively find the causes of the undesired behavior. In this example, causes are seen as the states such that passing through the states raises the probability of the undesired behavior. Yellow, orange, and green cells correspond to the causes for some parameters of the upMDP. Our proposed method finds these potential causes by sampling upMDP parameters, model checking-based cause synthesis for each sampled MDP, and consolidating them by the proposed set covering technique.

We can observe that the three causes, depending on the parameter values in these environments, have been found well. Intuitively, in both environments, when the probability of entering the red cells from above is greater than that of entering the red cells from below, that is $p_1 < p_2$, the robot passing through the yellow cells is more likely to enter the red cells from above than other behavior that does not pass through the yellow cells. Likewise, when $p_1 > p_2$, the probability of entering the red cells from below rises after passing through the orange/green cells. The difference between the orange cells and green cells as the SPR causes is subtle. When $p_0$ is sufficiently large, that is $p_0$ is close to the upper limit $0.9$, the chance to unintentionally go up from the cell $(4, 5)$ is sufficiently small. Hence, the chance to visit the red cells from above is sufficiently small. So, in this case, the set of orange cells is an SPR cause. On the contrary, when $p_0$ is relatively small, that is $p_0$ is close to the lower limit $0.8$, the set of green cells is an SPR cause. Moreover, we can intuitively find that these three potential causes cover all undesired paths, and removing any potential cause makes an uncovered undesired path appear.

Our algorithm has effectively found these intuitive potential causes by combining three ingredients: (i) sampling parameters of the environment and PAC-bound we introduced, (ii) model checking-based SPR cause synthesis for each sampled MDP based on \citep{baier2022probability}, and (iii) consolidation of these synthesized causes by the proposed set covering based on (22). In detail, when we have tried to obtain subsets of states that become SPR causes with positive probability, we have obtained all three SPR causes. On the other hand, when we have tried to obtain subsets of states that have become SPR causes with the probability threshold $\delta=0.1$, we obtain only yellow cells. This is because the probability of $p_1 > p_2$ is sufficiently small.

\section{Detailed Results for Example}
\label{appendix:detailed_results}
Fig. \ref{ex:result_env1} and Fig. \ref{ex:results_env2}  show the results obtained from our proposed method for two environments depicted in Figs. \ref{ex:env1} and \ref{ex:env2}, respectively, with all experimental parameters $\beta = \beta_0, \beta_1$ and $\delta = \delta_0, \delta_1$. We depict the results with blue, green, magenta, and red lines for the parameters $(\delta_0, \beta_0)$, $(\delta_0, \beta_1)$, $(\delta_1, \beta_0)$, and $(\delta_1, \beta_1)$, respectively.
The top row in Fig. \ref{ex:results} shows the recall-optimal probabilities $R$, their lower bounds $\zeta$, and the minimum recall-optimal probabilities over all proper subsets of the obtained solution $\mathcal{C}^*$ under each sample size $N$ as dotted, solid, and dashed lines, respectively. We find that $\zeta$ and $R_\mathrm{sub}$ bound $R$ from below for all sample sizes and parameters. Moreover, $\zeta$ takes a higher value when $\beta = \beta_0$ than when $\beta = \beta_1$.
The middle row (resp., the bottom row) shows the maximum (resp., minimum) SPR-cause probabilities $F_\mathrm{max}$ (resp., $F_\mathrm{min}$) and the maximum (resp., minimum) lower bounds $\eta_\mathrm{max}$ (resp., $\eta_\mathrm{min}$) for all state sets $C$ in the obtained solution $\mathcal{C}^*$ as dotted and solid lines, respectively, for each sample size $N$. We observe that $\eta_\mathrm{max}$ (resp., $\eta_\mathrm{min}$) bounds $F_\mathrm{max}$ (resp., $\eta_\mathrm{min}$) from below and exceeds $\delta$ for all sample sizes and all experimental parameters.

\begin{figure*}[htbp]
  	\centering
  	\subfigure[]{
  		\includegraphics[width=0.48\linewidth]{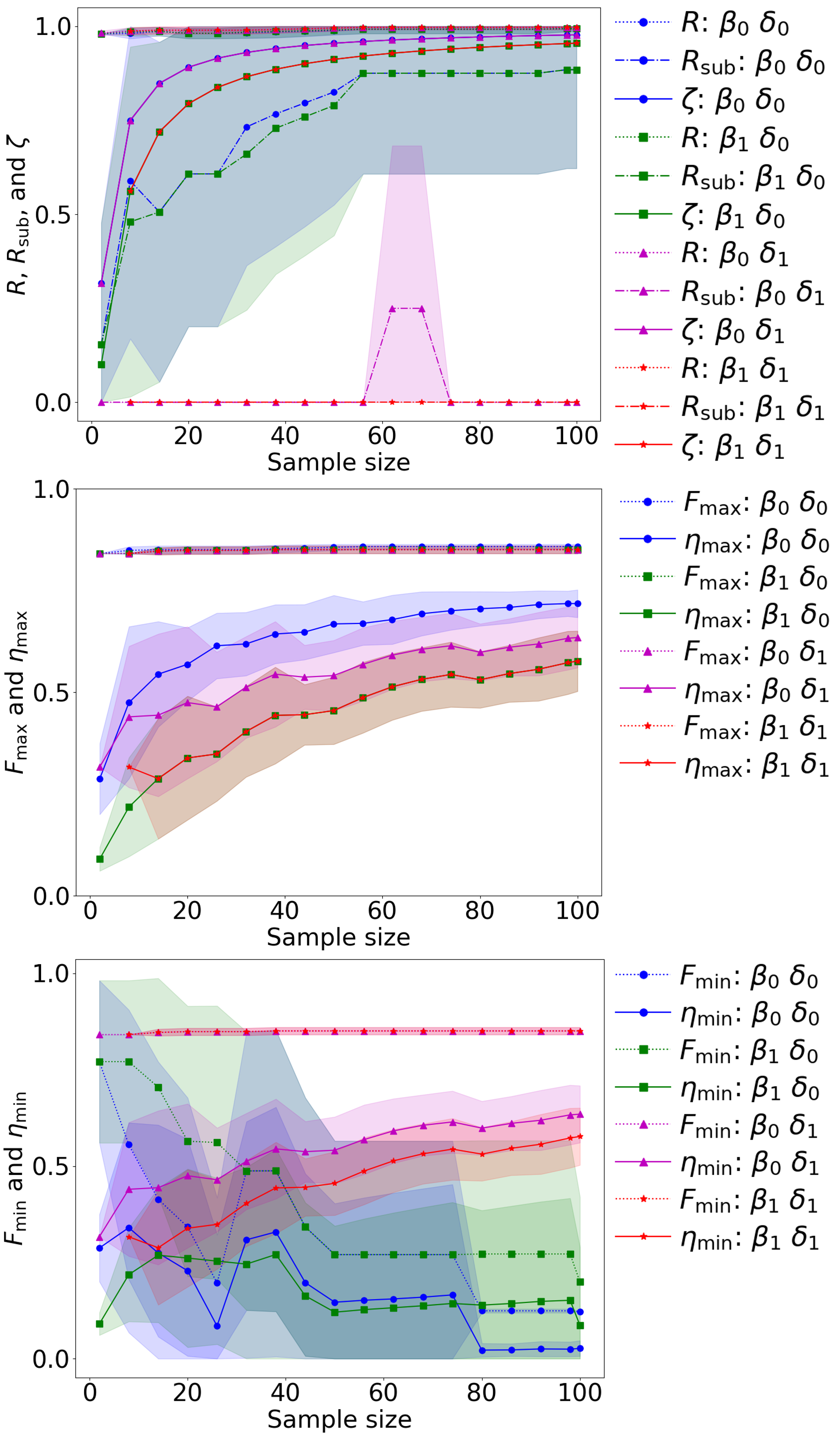}
            \label{ex:result_env1}
  	}
  	\subfigure[]{
  		\includegraphics[width=0.48\linewidth]{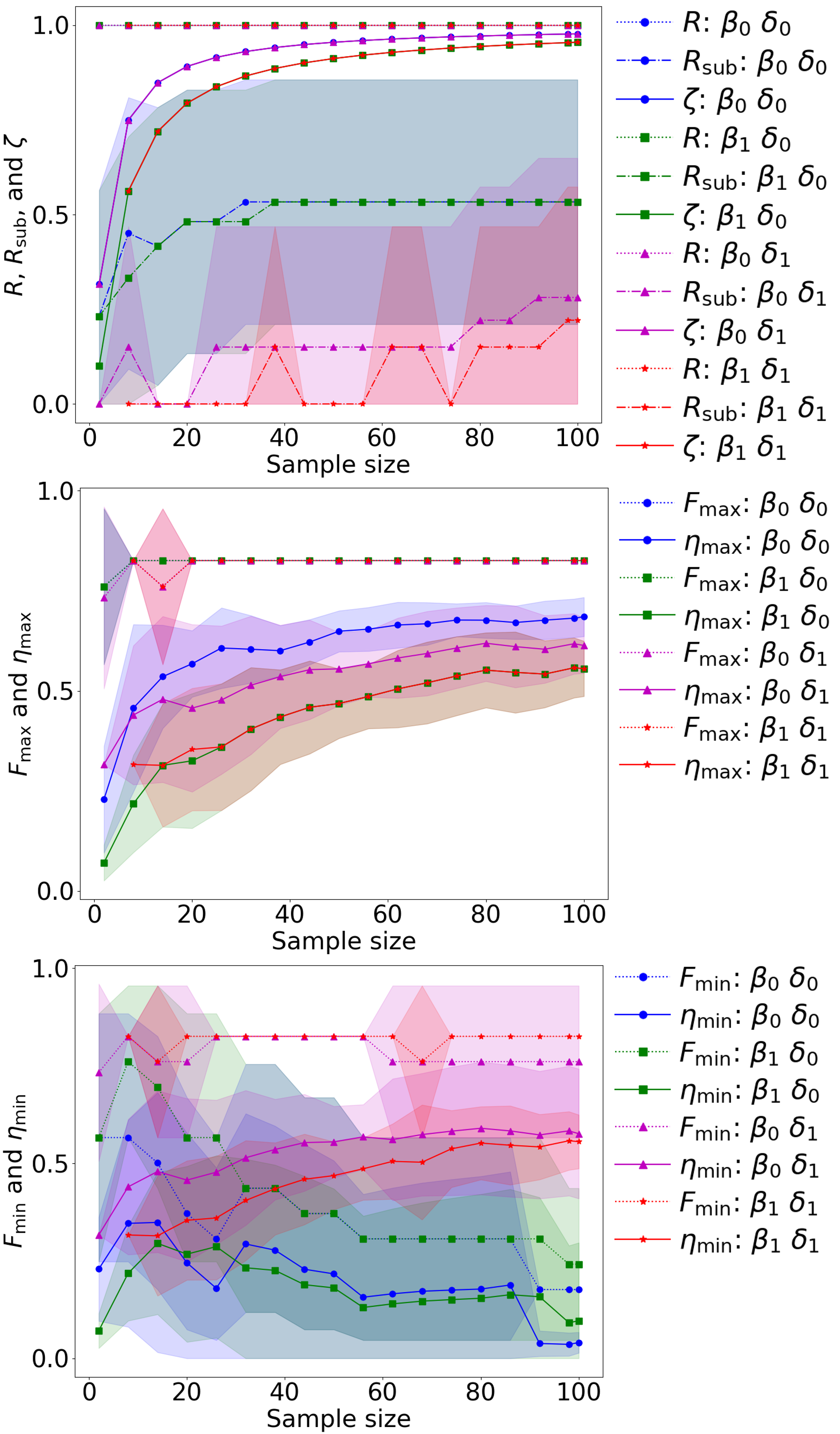}
  		\label{ex:results_env2}
  	}
  	\caption{(a) and (b) show the results with all experimental parameters for two environments depicted in Fig. \ref{ex:env1} and Fig. \ref{ex:env2}, respectively. (Top row): The means and standard deviations of the recall-optimal probabilities $R$ and their lower bounds $\zeta$ for the solutions obtained from the proposed method against each sample size (blue and magenta). We further show the maximum recall-optimal probabilities $R_\mathrm{sub}$ for all proper subsets of the solutions (green). (Middle row) (resp., (Bottom row)): The means and standard deviations of the maximum (resp., minimum) SPR-cause probabilities and their lower bounds over all potential SPR causes in the solutions obtained from the proposed method against each sample size. The results of the upper and the lower figures in (a)-(c) correspond to two environments depicted in Fig\ \ref{ex:env1} and \ref{ex:env2}, respectively.}
    \label{ex:results}
   \vspace{0mm}
\end{figure*}

\end{document}